\newcommand{\intro}[1]{#1}
\newcommand{\kl}[1]{#1}
\theoremstyle{plain}
\newtheorem{theorem}{Theorem}[section]
\newtheorem{lemma}[theorem]{Lemma}
\newtheorem{proposition}[theorem]{Proposition}
\newtheorem{corollary}[theorem]{Corollary}
\newtheorem{example}[theorem]{Example}
\newtheorem{definition}[theorem]{Definition}
\newtheorem{remark}[theorem]{Remark}
\newtheorem{claim}[theorem]{Claim}
\newtheorem{inductionstep}[theorem]{Induction Step}
\newenvironment{claimproof}{\begin{proof}}{\end{proof}}
\newcommand{\citep}[1]{\cite{#1}}
\newcommand{\defined}{\coloneqq}
\newcommand{\set}[1]{\{ {#1} \}}
\newcommand{\setof}[2]{\{ {#1}~\colon {#2} \}}
\renewcommand{\implies}{\Rightarrow}
\newcommand{\Nat}{\mathbb{N}}
\newcommand{\Rel}{\mathbb{Z}}
\newcommand{\Rea}{\mathbb{R}}
\newcommand{\Com}{\mathbb{C}}
\newcommand{\Uni}{\mathbb{U}}
\newcommand{\Rat}{\mathbb{Q}}
\newcommand\id{\operatorname{id}}
\renewcommand{\phi}{\varphi}
\newcommand\Mat[1]{\mathop{{\mathcal{M}^{#1}}}}
\newcommand{\polysum}{\operatorname{\mathsf{sum}}}
\newcommand{\Dom}{\operatorname{\mathsf{Dom}}}
\newcommand\card[1]{{\#}{#1}}
\newcommand\MSO{{\operatorname{\mathsf{MSO}}}}
\newcommand\INV{{\operatorname{\mathsf{INV}}}}
\newcommand\FO{{\operatorname{\mathsf{FO}}}}
\newcommand\Ve{\mathsf{\Rel Poly}}
\newcommand\We{\mathsf{\Rel SF}}
\newcommand\VeN{\mathsf{\Nat Poly}}
\newcommand\WeN{\mathsf{\Nat SF}}
\newcommand\Vect[1]{\operatorname{{\mathsf{Span}}}_{#1}}
\newcommand\Spec{\operatorname{{\mathsf{Spec}}}}
\newcommand\existsexactly[1]{\mathop{{\exists^{={#1}}}}}
\newcommand\forests[2]{\operatorname{{\mathcal{F}}}_{#2}^{#1}}
\newcommand\repword{\operatorname{{\mathsf{word}}}}
\newcommand\repforest{\operatorname{{\mathsf{forest}}}}
\newcommand\Ske{\operatorname{{\mathsf{Skel}}}}
\newcommand\findep{{f_\mathsf{indep}}}
\newcommand\fdep{{f_\mathsf{dep}}}
\newcommand\push[2]{{#2}\mathop{{\triangleright}}{#1}}
\newcommand\Res{\operatorname{{\mathsf{Res}}}}
\newcommand\polysim[1]{\mathrel{{\sim_{#1}}}}
\newcommand\partitionskel{\operatorname{{\mathsf{skel-root}}}}
\newcommand{\msodep}{\operatorname{{\mathsf{{sym-dep}}}}}
\newcommand{\msoisleaf}{\operatorname{{\mathsf{isleaf}}}}
\newcommand{\msoprod}{\operatorname{{\mathsf{between}}}}
\newcommand{\msoleft}{\operatorname{{\mathsf{left}}}}
\newcommand{\msoright}{\operatorname{{\mathsf{right}}}}
\newcommand{\skeltype}{\operatorname{{\mathsf{s}-\mathsf{type}}}}
\newcommand{\skelstart}{\operatorname{{\mathsf{start}}}}
\newcommand{\skelend}{\operatorname{{\mathsf{end}}}}
\newcommand{\Types}{\operatorname{{\mathsf{Types}}}}
\newcommand{\skelhastype}[1]{\operatorname{{\mathsf{has}-\mathsf{s}-\mathsf{type}}}_{#1}}
\newcommand\dependson{\mathop{\operatorname{{\mathsf{depends}-\mathsf{on}}}}}
\newcommand{\fonc}{\rightarrow}
\newcommand{\bigO}{\mathcal{O}}
\newcommand{\Oras}{\mathcal{H}}
\newcommand{\Aper}{\mathcal{A}}
\newcommand{\trans}{\mathcal{T}}
\newcommand{\movi}{\varepsilon}
\newcommand{\vide}{\varnothing}
\newcommand{\tree}[1]{\langle #1 \rangle}
\newcommand{\car}[1]{\mathbf{1}_{#1}}
\newcommand\cauchy{\mathop{{\otimes}}}
\newcommand{\nod}{\mathfrak{t}}
\newcommand{\nodb}{\mathfrak{s}}
\newcommand{\Nodes}{\operatorname{\mathsf{Nodes}}}
\newcommand{\Leaves}{\operatorname{\mathsf{Leaves}}}
\newcommand{\lexl}{<_{\text{lex}}}
\title{$\Rel$-polyregular functions}
\author{\IEEEauthorblockN{Thomas Colcombet\IEEEauthorrefmark{1},
Gaëtan Douéneau-Tabot\IEEEauthorrefmark{1}\IEEEauthorrefmark{2} and
Aliaume Lopez\IEEEauthorrefmark{1}\IEEEauthorrefmark{3}}
\IEEEauthorblockA{\IEEEauthorrefmark{1}Université Paris Cité, CNRS, IRIF, F-75013, Paris, France}
\IEEEauthorblockA{\IEEEauthorrefmark{2}
Direction générale de l’armement - Ingénierie des projets, Paris, France}
\IEEEauthorblockA{\IEEEauthorrefmark{3}ENS Paris-Saclay, CNRS, LMF, France}}
    \def\subparagraph{\subsubsection}
\begin{document}

\maketitle

\begin{abstract}
	This paper studies a robust class of functions
	from finite words to integers that we call
	$\Rel$-polyregular functions. We show that
	it admits natural characterizations in terms of logics,
	$\Rel$-rational expressions,
    $\Rel$-rational series and transducers.

	We then study two subclass membership problems.
	First, we show that the asymptotic growth rate
	of a function is computable, and corresponds
	to the minimal number of variables required to represent it
	using logical formulas. Second, we show that first-order
	definability of $\Rel$-polyregular functions is decidable. To show the latter, we introduce an original notion
	of residual transducer, and provide a semantic characterization
	based on aperiodicity.
\end{abstract}

%\tableofcontents

\section{Introduction}
\label{sec:intro}

\begin{table*}[!h]
    \centering
    \begin{tabular}{p{3.5cm}p{5.7cm}p{5.7cm}}
        \toprule
        \textbf{Formalism} & \textbf{Characterization of $\Ve$}
                             & \textbf{Characterization of $\We$} \\
        \midrule
        Counting formulas %\newline (\cref{ssec:counting})
        &
        Counting valuations in $\MSO$
        (\cref{def:Z-poly})
                          &
        Counting valuations in $\FO$
        (\cref{def:Zsf})
        \\
        \midrule
        \addlinespace
        Polyregular functions %\newline (\cref{ssec:pebble})
                          &
                          $\polysum \circ$ polyregular
                          (\cref{prop:mikolaj1})
                          &
                           $\polysum \circ$ star-free polyregular
                          (\cref{prop:mikolaj2})
      \\
      \addlinespace
      \midrule
        $\Rel$-rational expressions %\newline (\cref{ssec:expressions})
        &
        Closure of rational languages under
         Cauchy products, sums, and $\Rel$-products
        (\cref{theo:rational-polygrowth})
                                    &
        Closure of star-free languages under
         Cauchy products, sums, and $\Rel$-products
        (\cref{theo:Zsf-expressions})
        \\
      \addlinespace
      \midrule
        &
        Ultimately $N$-polynomial (\cref{theo:Zpoly-Zrat})
                                                         &
        Ultimately 
        $1$-polynomial (\cref{theo:big-induction})
        \\
        \addlinespace
        {$\Rel$-rational series that are/have}  &
        Polynomial growth (\cref{theo:Zpoly-Zrat})
                                                         &
        n/a
        \\
        \addlinespace
       % \newline
       % (\cref{ssec:matrix})
                                                         &
        Eigenvalues in $ \set{0} \cup \Uni$ (\cref{theo:Zpoly-Zrat}) 
                                                         & 
        Eigenvalues in $\set{0, 1}$ (\cref{theo:Zsf-eigen})
        \\
        \addlinespace
        \midrule
        Residual transducer %\newline (\cref{sec:residual})
        &
            Residual transducer (\cref{cor:Vk-trans})
                          &
            Counter-free residual transducer
            (\cref{theo:big-induction})
        \\
        \bottomrule
    \end{tabular}
    \caption{\label{tab:summary-characterisations}
        Summary of the characterizations of $\Ve$ and $\We$
        expressed in different formalisms.}
\end{table*}

Deterministic finite state automata define
the well-known and robust class of regular languages.
This class is captured by different formalisms
such as expressions (regular expressions \cite{kleene1956representation}),
logic (Monadic Second Order ($\MSO$) logic
 \cite{buchi1960weak}), and algebra (finite monoids  \cite{schutzenberger1961definition}).
It contains a robust
subclass of independent interest: star-free regular languages,
which admits equivalent descriptions in terms of machines
(counter-free automata \cite{mcnaughton1971counter}),
expressions (star-free expressions \cite{schutzenberger1965finite}),
logic (first-order ($\FO$) logic \cite{perrin1986first})
and algebra (aperiodic monoids \cite{schutzenberger1965finite}).
Furthermore,
one can decide if a regular language is
star-free, and the proof relies on the existence
(and computability) of a canonical object
associated to each language (its minimal automaton \cite{mcnaughton1971counter}
or, equivalently, its syntactic monoid \cite{schutzenberger1965finite}).

Numerous works have attempted to
carry the notion of regularity from languages
to word-to-word functions.
This work led to a plethora of non-equivalent classes
(such as sequential, rational, regular
and polyregular functions \cite{bojanczyk2018polyregular}).
Decision problems, including first-order definability, become
more difficult and more interesting for functions \cite{scott1967some},
mainly due to the lack of canonical objects similar to the minimal automata
of regular languages. It was shown recently
that first-order definability is decidable for the class
of rational functions~\cite{filiot2016aperiodicity}
and that a canonical object can be built~\cite{filiot2018canonical}.

%Recent work on so called polyregular functions \cite{bojanczyk2006expressive}
%dug up a robust class of functions from words to words
%of polynomial growth. These can equivalently be described
%via pebble-transducers, $\MSO$-transductions, nested for loops,
%or a subset of simply typed $\lambda$-calculus
%\cite{bojanczyk2018polyregular,bojanczyk2019string}.
%However, the theory bears some scars of its functional nature
%by strongly discriminating the behaviours of otherwise
%equivalent computational models (see for instance
%\cite{bojanczyk2022transducers}).
%
%A way to smooth up the rough behaviours
%of functions from words to words is to
%consider the output to be over a unary alphabet, effectively
%considering functions from words to $\Nat$. This lead to the development
%of rational series \cite{berstel2011noncommutative}
%and weighted automata \cite{weber1991degree}.
%As noticed by \cite{doueneau2022hiding},
%polyregular functions with unary output are in general better behaved.

This paper is a brochure for a natural
class of functions from finite words
to integers, which we name \emph{$\Rel$-polyregular functions}.
Its definition stems from the logical description of regular languages.
Given an $\MSO$ formula $\varphi(\vec{x})$
with free first-order variables $\vec{x}$, 
and a word $w \in A^*$,
we define $\card \varphi(w)$ to be the number
of valuations $\nu$ such that $w, \nu \models \varphi(\vec{x})$.
The indicator functions of regular languages are exactly
the functions $\card \phi$ where $\phi$ is a sentence
(i.e. it does not have free variables, hence has at most one valuation:
the empty one).
We define the class of $\Rel$-polyregular functions, denoted $\Ve$,
as the class of $\Rel$-linear combinations of  functions $\card{\varphi}$
where $\phi$ is in $\MSO$ with first-order free variables.
%Let us digress on the ability to use negative linear combinations.
%Negative coefficients fits an in-between:
%while the $\MSO$-formulas can be evaluated
%deterministically from left to right,
%subtraction allows to emulate a very limited
%form of non-determinism. This will set apart $\Ve$ from
%polyregular functions with unary output as this limited non-determinism
%enables the construction of a canonical computational model
%(see \cref{sec:residual}) that was otherwise lacking.

The goal of this paper is to advocate for the robustness 
of $\Ve$. To that end, we shall provide numerous characterizations of
these functions and relate them to pre-existing
models. We also solve several membership problems
and provide effective conversion algorithms.
This equips $\Ve$ with a smooth and elegant
theory, which subsumes that of regular languages.
%capturing them via
%logic, different kinds of automata, and regular expressions.
%As a cornerstone of the theory of $\Rel$-polyregular function,
%we introduce several purely semantic characterisations
%based on ``quantitative pumping lemmas'', that allows
%to compute a residual automata, akin to the minimal deterministic
%automata of regular languages.
%
%As a non trivial application of this
%residual automata, we define the subclass of ``star free''
%$\Rel$-polyregular functions $\We$
%defined as $\Rel$-linear combination of $\card{\FO}$.
%Similarly to $\Ve_k$, we let $\We_k$ be
%$\Rel$-linear combinations of $\card{\FO_k}$,
%and prove that $\We_k$ is decidable inside $\Ve$.
%We believe that
%the various (decidable) characterisations of
%star free $\Rel$-polyregular functions 
%pave the way towards
%a new notion of ``star free'' rational series
%that diverges from the current state of the art
%\cite{droste2005weighted,gastin2014adding,droste2019aperiodic}.

\subparagraph*{Contributions}
We introduce the class $\Ve$ 
as a natural generalization of regular
languages via simple counting of $\MSO$ valuations.
This definition can be seen as 
a restricted version of the Quantitative $\MSO$ introduced in
\cite{kreutzer2013}.
It also coincides with the linear finite counting automata of
\cite{schutzenberger1962}.
We first connect $\Rel$-polyregular functions
to word-to-word polyregular functions
\cite{bojanczyk2018polyregular},
providing a justification for their name.
As a class of functions from finite words to integers,
it is then natural to compare $\Ve$ with the well-studied
class of $\Rel$-rational series (see e.g. \cite{berstel2011noncommutative}).
We observe that $\Ve$ is exactly the subclass of
$\Rel$-rational series that have polynomial growth,
i.e. the functions such that $|f(w)| = \bigO(|w|^k)$
for some $k \ge 0$, 
by making effective the results
of Schützenberger \cite{schutzenberger1962}. As a consequence, we provide
a simple syntax of $\Rel$-rational expressions
to describe $\Ve$ as those built without the Kleene star.
We also show how $\Ve$
can be described using natural restrictions
on the eigenvalues of representations of $\Rel$-rational series.
This property is built upon a quantitative pumping lemma
characterizing the ultimate behavior of $\Rel$-polyregular
functions as ``ultimately $N$-polynomial'' for some $N \geq 0$.
We summarize these results
in the second column of \cref{tab:summary-characterisations}.

We then refine the description of $\Ve$ by
considering for all $k \ge 0$,
the class $\Ve_k$ of functions
described using at most $k$ free
variables in the counting $\MSO$ formulas.
It is easy to check that if $f \in \Ve_k$
then $|f(w)| = \bigO(|w|^k)$. Our first
main theorem shows that this property
is a sufficient and necessary condition
for a function of $\Ve$ to be in $\Ve_k$
 (see \cref{fig:Z-classes}). This result is an analogue
of the various ``pebble minimization theorems''
that were shown for word-to-word polyregular functions
\cite{doueneau2021pebble,bojanczyk2022transducers,
doueneau2023pebble,bojanczyk2022growth}.
We also prove that the membership 
problem of $\Ve_k$ inside $\Ve$ is decidable.

Our second main contribution is the definition
of an almost canonical object associated to each
function of $\Ve$. We name this object the 
\emph{residual transducer} of the function,
and show that it can effectively be built.
Its construction is inspired by the
residual automaton of a regular language,
and heavily relies on the decision procedure from
$\Ve$ to $\Ve_k$.

Finally, we define the class $\We$ of
\emph{star-free $\Rel$-polyregular functions},
as the class of linear combinations of $\card \varphi$
where $\varphi$ is a first-order formula with free
first-order variables. As in the case
of $\Ve$, observe that the indicator functions of 
star-free languages are exactly the $\card \varphi$
where $\varphi$ is a first-order sentence.
Our third main contribution then applies the construction
of the residual transducer to show that the membership
problem from $\Ve$ to $\We$ is decidable.
Incidentally, we introduce for $k \ge 0$ the class $\We_k$
(defined in a similar way as $\Ve_k$) and show that $\We_k = \We \cap \Ve_k$,
as depicted in \cref{fig:Z-classes}.
Furthermore, we show that the numerous characterizations
of $\Ve$ in terms of existing models can naturally be
specialized to build characterizations of $\We$, as depicted in the third
column of \cref{tab:summary-characterisations}.

Overall, our contribution is the 
description of a natural theory
of functions from finite words to $\Rel$, that is the consequence of
a reasonable computational power (polynomial growth,
i.e. less than $\Rel$-rational series) and the ability
to correct errors during a computation
(using negative numbers). Furthermore, the theory
of $\Rel$-polyregular functions is built using new
and non-trivial proof techniques.

\subparagraph*{Outline}
\Cref{sec:prelim} is devoted to the introduction of the classes $\Ve$
and $\Ve_k$. We also compare $\Ve$ with polyregular functions
and with $\Rel$-rational series.
We then devote \cref{sec:pebblemin} to a free variable minimization
theorem (\cref{thm:skel:pebblemin}), which is a key
result towards the effective computation of a canonical 
residual transducer in \cref{sec:residual}.
We then introduce $\We$ and $\We_k$
in \cref{sec:aperiodic}, and use the residual transducer
to prove the decidability of $\We$ inside $\Ve$
(\cref{theo:FO-decidable}).
We conclude by connecting $\We$ to polyregular functions
and $\Rel$-rational series.
All of the aforementioned results include algorithms to decide membership
and provide effective conversions between the various representations.

\begin{figure*}[!t]

	\begin{center}
		\hspace*{-0.4cm}
        \resizebox{0.9\linewidth}{!}{
		\begin{tikzpicture}
			\def\gellipse{(0,2.4) ellipse (4cm and 3cm)}
			\def\aellipse{(0,1.8) ellipse (3.5cm and 2.5cm)}
			\def\bellipse{(0,1.2) ellipse (3cm and 1.9cm)}
			\def\cellipse{(0,0.6) ellipse (2.5cm and 1.3cm)}
			\def\dellipse{(0,0) ellipse (1.7cm and 0.7cm)}
			\fill[fill = blue!7] \gellipse;
			\fill[fill = blue!14]  \aellipse;
			\fill[fill = blue!26] \bellipse ;
			\fill[fill = blue!38]  \cellipse;
			\fill[fill = blue!50] \dellipse;

			\def\rellipse{(-2.2,1.8) ellipse (2.1cm and 3cm)}
			\begin{scope}
				\clip {\aellipse};
				\fill[red!10] \rellipse;
			\end{scope}

			\begin{scope}
				\clip {\bellipse};
				\fill[red!20] \rellipse;
			\end{scope}

			\begin{scope}
				\clip {\cellipse};
				\fill[red!30] \rellipse;
			\end{scope}

			\begin{scope}
				\clip {\dellipse};
				\fill[red!40] \rellipse;
			\end{scope}

			\draw (0,4.8) node {\scalebox{1}{$\textsf{$\Rel$-rational}$}};

			\draw(0,3.6) node  {\scalebox{1}{$\textsf{$\Rel$-polyregular}$}};
			\draw(-1.85,3.25) node {\textcolor{red}{\scalebox{1.3}{$\substack{\text{\textsf{Star-free}}\\
								\text{\textsf{$\Rel$-polyregular}}}$}}};

			\draw(-1,0.04) node {\textcolor{red}{\scalebox{1}{$\We_0$}}};
			\draw(-1.4,1.1) node {\textcolor{red}{\scalebox{1}{$\We_1$}}};
			\draw(-1.7,2.2) node {\textcolor{red}{\scalebox{1}{$\We_2$}}};

			\draw(0,0) node {\scalebox{1}{$\Ve_0$}};
			\draw(0,1.2) node {\scalebox{1}{$\Ve_1$}};
			\draw(0,2.4) node {\scalebox{1}{$\Ve_2$}};

			\draw(0,4.1) node  {\scalebox{0.6}{$\textsf{Polynomial~growth}$}};
			\draw(0,2.9) node  {\scalebox{0.6}{$\bigO(n^2) \textsf{~growth}$}};
			\draw(0,1.7) node  {\scalebox{0.6}{$\bigO(n) \textsf{~growth}$}};
			\draw(0,0.5) node  {\scalebox{0.6}{$\bigO(1) \textsf{~growth}$}};

			\fill[darkgray] {(0.9,-0.1) circle (0.05)};
			\draw[darkgray,dotted, thick] (0.9,-0.1) -- (3.85,-0.1);
			\node[right] at (3.9,-0.1) {\scalebox{1}{\textcolor{darkgray}{%
                            \textsf{
						$w \mapsto \mathbf{1}_L(w)$ if $L$ is regular but not
        star-free}}}};

			\fill[darkgray] {(1.2,1) circle (0.05)};
			\draw[darkgray,dotted, thick] (1.2,1) -- (3.85,1);
			\node[right] at (3.9,1) {\scalebox{1}{\textcolor{darkgray}{$w \mapsto |w|\times (-1)^{|w|} $}}};

			\fill[darkgray] {(-1.1,-0.3) circle (0.05)};
			\draw[darkgray,dotted, thick] (-1.1,-0.3) -- (-3.85,-0.3);
			\node[left] at (-3.9,-0.3) {\scalebox{1}{\textcolor{darkgray}{$w \mapsto \mathbf{1}_L(w)$ \textsf{if} $L$ \textsf{star-free}}}};

			\fill[darkgray] {(-2.5,1.5) circle (0.05)};
			\draw[darkgray,dotted, thick] (-2.5,1.5) -- (-3.85,1.5);
			\node[left] at (-3.9,1.5) {\scalebox{1}{\textcolor{darkgray}{$w \mapsto |w|_a \times |w|_b$ \textsf{if} $a,b \in A$ }}};

			\fill[darkgray] {(2.5,4.5) circle (0.05)};
			\draw[darkgray,dotted, thick] (2.5,4.5) -- (3.85,4.5);
			\node[right] at (3.9,4.55) {\scalebox{1}{\textcolor{darkgray}{$w
							\mapsto (-2)^{|w|}$}}};

		\end{tikzpicture}
    }
	\end{center}

	\caption{\label{fig:Z-classes} The classes of functions
    studied in this paper.}
\end{figure*}

\section{$\Rel$-polyregular functions}

\label{sec:prelim}

The goal of this section is to define
$\Rel$-polyregular functions.
We first define this class of functions using a logical formalism
(monadic second-order formulas with free variables,
\cref{ssec:counting}), then we relate it to (word-to-word)
regular and polyregular functions (\cref{ssec:pebble})
and finally we show that it corresponds to a natural
and robust subclass of the well-known
$\Rel$-rational series (\cref{ssec:matrix,ssec:expressions}).

In the rest of this paper, $\Rel$ (resp. $\Nat$) denotes
the set of integers (resp. nonnegative integers).
If $ i \le j$, the set $[i{:}j]$ is $\{i, i{+}1, \dots, j\} \subseteq \Nat$
(empty if $j <i$). The capital letter $A$ denotes a fixed alphabet,
i.e. a finite set of letters. $A^*$ (resp. $A^+$) is the set of words
(resp. non-empty words) over $A$.
The empty word is $\movi \in A^*$.
If $w \in A^*$, let $|w| \in \Nat$ be its length,
and for $1 \le i \le |w|$ let $w[i]$ be its $i$-th letter.
If $I = \{i_1 < \cdots < i_{\ell}\} \subseteq [1{:}|w|]$,
let $w[I] \defined w[i_1] \cdots w[i_{\ell}]$.
If $a \in A$, let $|w|_a$ be the number of letters $a$ occurring in $w$.
We assume that the reader is familiar with the basics of automata theory,
in particular the notions of monoid morphisms, idempotents in monoids,
monadic second-order
($\MSO$) logic and first-order ($\FO$) logic over finite words
(see e.g. \cite{thomas1997languages}).

\subsection{Counting valuations on finite words}

\label{ssec:counting}

Let $\MSO_k$ %(resp. $\FO_k$)
be the set of $\MSO$-formulas  %(resp. $\FO$-)
over the signature $(A, <)$ which have
exactly $k$ free first-order variables.
We then let $\MSO{}\defined \bigcup_{k \in \Nat} \MSO_k$.
If $\phi(x_1, \dots, x_k) \in \MSO_k$,
$w \in A^*$ and $1 \le i_1, \dots, i_k \le |w|$, we
write $w \models \phi(i_1, \dots, i_k)$ whenever
the valuation $x_1 \mapsto i_1, \dots,
x_k \mapsto i_k$ makes the formula $\phi$
true in the model $w$.

\begin{definition}[Counting]
	Given $\phi(x_1, \ldots, x_k) \in \MSO_k$, we let $\card \phi \colon A^* \fonc \Nat$
	be the function defined by $\card \phi(w)
	\defined |\setof{(i_1, \dots,i_k)}{w \models \phi(i_1, \dots, i_k)}|$.
\end{definition}

The value $\card \phi(w)$ is the number of tuples
that make the formula $\phi$ true in the model $w$.

\begin{example} If $\phi \in \MSO_0$, then
$\card \phi$ is the indicator function of the (regular)
language $\setof{w}{w \models \phi} \subseteq A^*$.
\end{example}

\begin{example}
	\label{ex:a-b}
	Let $A \defined \{a,b\}$. Let $\phi(x,y) \defined  a(x) \land b(y)$,
	then $\card \phi (w) = |w|_{a} \times |w|_{b}$ for all $w \in A^*$.
	Let $\psi(x,y) \defined \phi(x,y) \land x > y$,
	then $\card \psi (a^{n_0}b a^{n_1} \cdots a^{n_{p}}) = \sum_{i=0}^p i \times n_i$.
\end{example}

\begin{example}
    \label{ex:times-w}
    Let $\phi \in \MSO_k$, 
    and $x$ be a fresh variable.
    Then, $x = x \land \phi \in \MSO_{k+1}$, and
    $\card{(x = x \land \phi)}(w) = |w| \times \card{\phi}(w)$
    for every
    $w \in A^*$.
    Similarly, for all $w \in A^*$ and $a \in A$,
    $\card (a(x) \land \phi)(w) = |w|_a \times \card{\phi}(w)$.
\end{example}

If $F$ is a subset of the set of functions
$A^* \fonc \Rel$ and if $S \subseteq \Rel$, we let
$\intro\Vect{S}{{{}}}(F) \defined \setof{\sum_i {a_i} f_i}{a_i \in S, f_i \in F}$
be the set of \intro{$S$-linear combinations} of the functions from $F$.
The set $\Vect{\Nat}{{}}(\setof{\card \phi}{\phi \in \MSO_k, k \ge 0})$
has been recently studied by Douéneau-Tabot
in \cite{doueneau2022hiding}
under the name of ``polyregular functions with unary output''.
In the following, we
shall call this class the \emph{$\Nat$-polyregular functions}.

The goal of this paper is to study the
$\Rel$-linear combinations of the basic
$\card \phi$ functions, which we call
\emph{$\Rel$-polyregular functions}. We 
shall see that this class
is a quantitative counterpart of regular languages
that admits several equivalent descriptions,
and for which various decision problems can be solved.
We provide in \cref{def:Z-poly} a fine-grained
definition of this class of functions, depending
on the number of free variables which are used
within the $\card \phi$ basic functions.

\begin{definition}[$\Rel$-polyregular functions]
	\label{def:Z-poly}
	For $k \ge 0$, let
	$\Ve_k \defined \Vect{\Rel}{{}}(\setof{\card \phi}{\phi \in \MSO_\ell, \ell
    \leq k})$.
	We define the class of \emph{$\Rel$-polyregular functions}
	as $\Ve \defined \bigcup_k \Ve_k$.
\end{definition}

We also let $\Ve_{-1} \defined \{0\}$.

\begin{remark}
    \label{rem:qmso-nul}
    For all $k \ge 0$,
    the class $\Ve_k$
    is precisely the class
    of functions computable in
    $\mathsf{QMSO}(\Sigma^k_x, \oplus, \odot_b)$ of
    \cite[Section IV.A]{kreutzer2013}
    over the semiring $(\Rel,+,\times)$.
\end{remark}

\begin{remark}
    \label{rem:sch-nul}
    For all $k \geq 0$,
    the class $\Ve_k$
    is precisely the class of
    functions computable
    by \emph{linear
    finite 
    counting automata of order $k$}
    introduced by
    \cite[p. 91]{schutzenberger1962}.
\end{remark}

\begin{example}
    \label{ex:Ve0}
    $\Ve_0$ is exactly the class of 
    $\Rel$-linear combinations of
    indicators
    $\mathbf{1}_{L}$
    of regular languages $L$.
\end{example}

\begin{example}
    \label{ex:times-w-k}
    Following the construction of \cref{ex:times-w},
    for every $k, \ell \geq 0$, and $f \in \Ve_\ell$,
    the function
    $g \colon  w \mapsto f(w) \times |w|^k$
    belongs to $\Ve_{\ell + k}$.
\end{example}

\def\Leven{\car{\textnormal{even}}}
\def\Lodd{\car{\textnormal{odd}}}

\begin{example} 
    Let $\car{\textnormal{odd}}$
    and $\car{\textnormal{even}}$ be respectively the indicator functions
    of words of odd length and even length.
    For all $k \ge 0$, the function
	$w \mapsto (-1)^{|w|} \times |w|^k$ is in $\Ve_k$.
    Indeed, it is
    $w \mapsto \car{\textnormal{even}}(w) \times  |w|^k
    - \car{\textnormal{odd}}(w) \times |w|^k$.
	Observe that it cannot be written as a single $\delta \card \phi$
	for some $\delta \in \Rel$, $\phi \in \MSO_{\ell}$, $\ell \ge 0$,
	since otherwise its sign would be constant.
\end{example}

The use of negative coefficients in the linear
combinations has deep consequences on the
expressive power of $\Ve$.
Let us consider
the function $f \colon w \mapsto (|w|_a - |w|_b)^2$.
Because $f(w) = |w|_a^2 - 2|w|_a |w|_b + |w|_b^2$,
we conclude from \cref{ex:times-w} that
$f$ is in $\Ve_2$. Although $f$ is non-negative,
$f^{-1}(\{0\}) = \setof{w}{|w|_a = |w|_b}$ 
is not a regular language,
hence
$f$ is not a \kl{$\Nat$-polyregular function}.

\begin{remark}[More variables]
    \label{remark:zerology}
    Let $ \ell > k \ge 0$, $\phi \in \MSO_k$, then
    for all words $w \in A^+$ we have:
    \begin{equation*}
    \card{\phi}(w) = \card{(\phi \land x_{k+1} = \cdots = x_{\ell}
    \land \forall y. x_{k+1} \le y)}(w)
    \end{equation*}
    the latter being an $\MSO_\ell$ formula.
    This formula also holds for $w = \movi$ if $k > 0$,
    but it may fail for $k = 0$ because in that case
    the right-hand side equals $0$
    regardless of the formula $\phi$
    (because there is no valuation),
    whereas $\card \varphi(\movi)$ may not be $0$.
\end{remark}

One can refine \cref{remark:zerology} to conclude
that for all $k \geq 0$,
$\Ve_k = 
\Vect{\Rel}{{}}(\setof{\card \phi}{\phi \in \MSO_k} \cup \set{ \car{\set{\movi}}})$.
In the rest of the paper, $\car{\set{\movi}}$ will not play any role,
and we will safely ignore it in the proofs
so that
$\Ve_k$ will often be considered equal to
$\Vect{\Rel}{{}}(\setof{\card \phi}{\phi \in \MSO_k})$.

\subsection{Regular and polyregular functions}
\label{ssec:pebble}

We recall that the class of (word-to-word) functions computed by 
\emph{two-way transducers} (or equivalently 
by $\MSO$-transductions, see e.g. \cite{engelfriet2001mso})
is called \label{def:regular}\emph{regular functions}.
As an easy consequence of its definition,
$\Ve_k$ is preserved under pre-composition with a regular function.

\begin{proposition}
	\label{prop:precompose}
    For all $k \geq 0$, the class $\Ve_k$ is (effectively) closed under
    pre-composition by regular functions.
\end{proposition}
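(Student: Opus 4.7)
The plan is to reduce the statement to a single basic building block and then invoke the backward translation property of \MSO-transductions. Since $f \in \Ve_k$ is by definition a $\Rel$-linear combination of functions $\card{\phi}$ with $\phi \in \MSO_\ell$ for some $\ell \leq k$, and since pre-composition by a fixed map $g$ commutes with $\Rel$-linear combinations, it suffices to show that for every regular function $g \colon A^* \to B^*$ and every $\phi(x_1, \ldots, x_\ell) \in \MSO_\ell$ over the alphabet $B$, the composite $w \mapsto \card{\phi}(g(w))$ belongs to $\Ve_\ell$ (which is contained in $\Ve_k$ since $\ell \leq k$). This alone implies the stated closure property.

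For the reduction to a single formula, I would fix an \MSO-transduction presenting $g$ in the standard way: a finite copy set $C$, a domain formula, and a family of \MSO-formulas defining the letters and the order in the output structure $g(w)$ from the input structure $w$. The universe of $g(w)$ embeds into $C \times \{1, \ldots, |w|\}$, so a valuation $\vec{j} = (j_1, \ldots, j_\ell)$ of $\phi$ in $g(w)$ is uniquely determined by a tuple $\vec{c} = (c_1, \ldots, c_\ell) \in C^\ell$ of copies together with a tuple $\vec{i} = (i_1, \ldots, i_\ell)$ of input positions. The backward translation lemma for \MSO-transductions (see \cite{engelfriet2001mso}) provides, effectively and for each $\vec{c} \in C^\ell$, a formula $\phi_{\vec{c}}(x_1, \ldots, x_\ell) \in \MSO_\ell$ over $A$ such that $w \models \phi_{\vec{c}}(\vec{i})$ if and only if $\vec{j}$ belongs to the universe of $g(w)$ and $g(w) \models \phi(\vec{j})$.

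Summing over the finitely many choices of $\vec{c}$ yields the identity
\[
\card{\phi}(g(w)) \;=\; \sum_{\vec{c} \in C^\ell} \card{\phi_{\vec{c}}}(w),
\]
which exhibits $w \mapsto \card{\phi}(g(w))$ as an $\Nat$-linear (hence $\Rel$-linear) combination of functions $\card{\phi_{\vec{c}}}$ with each $\phi_{\vec{c}} \in \MSO_\ell$. This places the composite inside $\Ve_\ell \subseteq \Ve_k$, as desired. Crucially, the backward translation preserves the number of free first-order variables, so the index $\ell$ does not increase.

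The only delicate point, and the main obstacle, is bookkeeping around the edge cases in the backward translation: one must carefully enforce that $\vec{i}$ actually codes a legal tuple of positions of $g(w)$ (i.e.\ each $(c_m, i_m)$ lies in the domain of the transduction) and that the order predicate and letter predicates used by $\phi$ are properly rewritten into \MSO-formulas over $w$. All of these rewrites are standard and effective, so collecting them into each $\phi_{\vec{c}}$ and then summing yields an algorithm producing a concrete $\MSO$-presentation of $f \circ g$ from $\MSO$-presentations of $f$ and $g$, establishing effectiveness.
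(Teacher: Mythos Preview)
Your proposal is correct and follows essentially the same approach as the paper: both reduce by linearity to a single $\card{\phi}$, present the regular function as an $\MSO$-transduction with a finite copy set, and use the backward translation to produce, for each assignment of copies to the free variables, an $\MSO_\ell$ formula over the input so that $\card{\phi}\circ g$ decomposes as the finite sum of these counts. The only cosmetic difference is that the paper spells out the inductive definition of the translated formula $\psi_\rho$ (including the handling of bound second-order variables), whereas you invoke the backward translation lemma as a black box.
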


Now, we intend to justify the name
``$\Rel$-polyregular functions'' by showing
that this class is deeply connected to
the well-studied class of \emph{polyregular functions}
from finite words to finite words.
Informally, this class of functions can be defined using
the formalism of multidimensional $\MSO$-interpretations.
The reader is invited
to consult \cite{bojanczyk2019string}  for its formal
definition, which we skip here.
Let $\polysum : \{\pm 1\}^* \fonc \Rel$ be the sum operation
mapping $w \in  \{\pm 1\}^*$ to $\sum_{i=1}^{|w|} w[i]$.

\begin{proposition}\label{prop:polypoly}
    \label{prop:mikolaj1}
	\label{prop:mikolaj1bis}
	The class $\Ve$ is (effectively) the class
	of functions $\polysum \circ f$ where $f : A^* \fonc \{\pm1\}^*$
	is polyregular.
\end{proposition}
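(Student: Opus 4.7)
The plan is to prove both inclusions directly, using the presentation of polyregular functions as multidimensional $\MSO$-interpretations from~\cite{bojanczyk2019string}: such a function is specified by a finite list of ``copies,'' each of some dimension $k$, together with an $\MSO_k$ domain formula, $\MSO$-definable labels, and an $\MSO$-definable linear order on the resulting output positions (which are tuples of input positions).

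For the inclusion $\polysum \circ f \in \Ve$ when $f \colon A^* \fonc \{\pm 1\}^*$ is polyregular, I would exploit the fact that $\polysum$ depends only on the multiset of output labels, not on the order in which they appear. Given an $\MSO$-interpretation computing $f$, I extract, for each copy of dimension $k$, two formulas $\varphi_+, \varphi_- \in \MSO_k$ stating respectively that a given $k$-tuple lies in the domain and is labeled $+1$ or $-1$. Summing over copies yields
\[
\polysum(f(w)) \;=\; \sum_{c \text{ copy}} \bigl(\card{\varphi^c_+}(w) - \card{\varphi^c_-}(w)\bigr),
\]
which belongs to $\Ve$ by \cref{def:Z-poly}.

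For the converse, given $g = \sum_{i=1}^m a_i \card{\varphi_i}$ with $\varphi_i \in \MSO_{k_i}$ and $a_i \in \Rel$, I would build a polyregular function $f$ whose $\MSO$-interpretation consists of $|a_i|$ copies of dimension $k_i$ for each $i$ (none if $a_i = 0$), each with domain formula $\varphi_i$ and constant label $\mathrm{sign}(a_i) \in \{\pm 1\}$. The linear order on output positions can be taken as the lexicographic order on pairs (copy-index, tuple), which is trivially $\MSO$-definable. Then $f(w)$ contains exactly $|a_i| \cdot \card{\varphi_i}(w)$ letters of sign $\mathrm{sign}(a_i)$ for each $i$, so that $\polysum(f(w)) = g(w)$.

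No serious obstacle arises: the content of the proof is bookkeeping between the two formalisms, and the insensitivity of $\polysum$ to the output order eliminates the step that might otherwise be delicate, namely specifying an appropriate linear order on output positions. All the conversions are syntactic and effective, yielding the claimed algorithmic equivalence.
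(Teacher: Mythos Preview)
Your proof is correct and follows essentially the same idea as the paper --- both directions go through the presentation of polyregular functions as $\MSO$-interpretations --- but the execution differs in a way worth noting.

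For the direction $\polysum \circ f \in \Ve$, you work syntactically inside the interpretation: extract the domain-plus-label formulas $\varphi_+^c, \varphi_-^c$ and write $\polysum \circ f$ directly as a $\Rel$-linear combination of $\card{\varphi}$'s. The paper instead argues semantically: it post-composes $f$ with the regular functions that erase one of the two letters, obtaining $g_+ \colon w \mapsto |f(w)|_1$ and $g_- \colon w \mapsto |f(w)|_{-1}$ as $\Nat$-polyregular functions (using closure of polyregular functions under post-composition by regular functions), and then observes $\polysum \circ f = g_+ - g_-$. Your route is more self-contained; the paper's route is shorter because it invokes known closure results as black boxes.

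For the direction $\Ve \subseteq \polysum \circ \text{polyregular}$, you build a single multi-copy interpretation with $|a_i|$ copies of dimension $k_i$ for each term. The paper instead first handles each $\card{\varphi}$ separately (one copy, domain $\varphi$, constant label $1$), and then shows that the class $\{\polysum \circ f : f \text{ polyregular}\}$ is closed under sums (via concatenation of polyregular functions) and under multiplication by $-1$ (via the letter-swapping morphism). Again, the paper leans on closure properties of polyregular functions where you do the construction by hand.

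One small caution: the formulation of $\MSO$-interpretations in \cite{bojanczyk2019string} that the paper cites uses a single dimension $k$, not a list of copies with varying dimensions. Your multi-copy description is equivalent (pad all copies to the maximal $k_i$ via \cref{remark:zerology}), but if you write this up you should either cite a multi-sorted definition or say a word about the padding.
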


\subsection{Rational series and rational expressions}
\label{ssec:expressions}

The class of rational series over the semiring
$(\Rel, +, \times)$, also known as \emph{$\Rel$-rational series},
is a robust class of functions from finite words to $\Rel$
that has been largely studied since the 1960
(see e.g. \cite{berstel2011noncommutative} for a survey).
It can be defined using the indicator functions
$\car{L}$ of regular languages
$L \subseteq A^*$, and the following combinators
given $f,g: A^* \rightarrow \Rel$ and $\delta \in \Rel$:
\begin{itemize}
	\item the external $\Rel$-product $\delta f \colon w \mapsto \delta \times f(w)$;
	\item the \intro{sum} $f+g: w \mapsto f(w) + g(w)$;
	\item the \intro{Cauchy product} $ f \cauchy g \colon w \mapsto \sum_{w = uv} f(u) \times g(v)$;
	\item if and only if $f(\movi) = 0$, the \intro{Kleene star}
	      $ f^* \defined \sum_{n \ge 0} f^n$ where
	      $f^{0}: \movi \mapsto 1, w \neq \movi \mapsto 0$
	      is neutral for Cauchy product
	      and $f^{n+1} \defined f \cauchy f^{n}$.
\end{itemize}

\begin{definition}[$\Rel$-rational series]
    The class of \emph{$\Rel$-rational series} is the smallest
	class of functions from finite words to $\Rel$
	that contains the indicator functions of all regular languages,
	and is closed under taking external $\Rel$-products, sums,
	Cauchy products and Kleene stars.
\end{definition}

We intend to connect  $\Rel$-rational
series and $\Rel$-polyregular functions. Let us first
observe that not all  $\Rel$-rational series
are $\Rel$-polyregular. We say that a function $f : A^* \fonc \Rel$ has
\emph{polynomial growth} whenever there exists
$k \ge 0$ such that $|f(w)| = \bigO(|w|^k)$.
It is an easy check that
every $\Rel$-polyregular function has polynomial growth.

\begin{claim}
    \label{remark:z-poly-polygrowth}
    If $k \ge 0$ and $f \in \Ve_k$ then $|f(w)| = \bigO(|w|^k)$.
\end{claim}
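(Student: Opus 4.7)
The plan is to reduce the statement to the obvious combinatorial upper bound on the number of valuations of a fixed formula, then lift it through linearity.

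First, I would fix an $\MSO_\ell$-formula $\varphi(x_1,\ldots,x_\ell)$ with $\ell \leq k$ and observe that, by its very definition, $\card\varphi(w)$ counts a subset of the $\ell$-tuples $(i_1,\ldots,i_\ell) \in [1{:}|w|]^\ell$. Hence $\card\varphi(w) \leq |w|^\ell$ whenever $|w| \geq 1$, and $\card\varphi(\varepsilon)$ is either $0$ or $1$ (depending on whether $\ell > 0$ or $\ell = 0$, and on whether the empty valuation satisfies $\varphi$ when $\ell = 0$). In all cases $\card\varphi(w) \leq \max(1, |w|^k)$, which gives $\card\varphi(w) = \bigO(|w|^k)$.

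Second, since $f \in \Ve_k$, by \cref{def:Z-poly} we may write $f = \sum_{i=1}^n a_i \card{\varphi_i}$ with $a_i \in \Rel$ and $\varphi_i \in \MSO_{\ell_i}$ for some $\ell_i \leq k$. Using the triangle inequality together with the bound of the previous paragraph,
\begin{equation*}
|f(w)| \;\leq\; \sum_{i=1}^n |a_i|\cdot \card{\varphi_i}(w) \;\leq\; \Bigl(\sum_{i=1}^n |a_i|\Bigr)\cdot \max(1, |w|^k),
\end{equation*}
which is indeed $\bigO(|w|^k)$.

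No real obstacle is expected: the proof is essentially the observation that an $\MSO_\ell$-formula has at most $|w|^\ell$ valuations on $w$, combined with closure of the bound under finite $\Rel$-linear combinations. The only subtlety is the empty word (and, more generally, words shorter than the exponent), which is absorbed into the multiplicative constant of the asymptotic bound.
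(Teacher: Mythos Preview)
Your proposal is correct and is essentially the argument the paper has in mind: the paper treats this claim as an ``easy check'' and does not spell out a proof, but the intended reasoning is precisely the combinatorial bound $\card\varphi(w)\le |w|^\ell$ on valuations followed by the triangle inequality over the finite $\Rel$-linear combination.
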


\begin{example} 
    \label{ex:moins-deux-puissance-n}
    The map $f \colon w \mapsto (-2)^{|w|}$
    is a $\Rel$-rational series
    because $f = ((-3) \car{A^+})^*$.
    However $f \not \in \Ve$ since
    it does not have polynomial growth.
\end{example}

It is easy to see from the logical definition
that the class $\Ve$ is closed under
taking Cauchy products.

\begin{claim}
	\label{claim:vks:cauchyincrease}
	Let $k, \ell \ge 0$. Let $f \in \Ve_{k}$ 
	and $g \in \Ve_{\ell}$, then
	$f \cauchy g \in \Ve_{k+\ell+1}$.
	The construction is effective.
\end{claim}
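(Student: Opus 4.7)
The plan is to reduce the statement to the case of a single counting formula on each side, and then to construct a single $\MSO_{k+\ell+1}$-formula whose counting function realises the Cauchy product, up to a trivial correction at the boundary splits. Since $\Ve_k$ and $\Ve_\ell$ are $\Rel$-linear spans and $\cauchy$ is bilinear, it suffices to handle the case $f = \card{\varphi}$ and $g = \card{\psi}$ with $\varphi \in \MSO_k$ and $\psi \in \MSO_\ell$.

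The key ingredient is the standard \emph{relativization} of $\MSO$-formulas to a prefix or a suffix. Given a fresh first-order variable $z$, one defines by induction on $\theta$ formulas $\theta^{<z}$ and $\theta^{\geq z}$ by bounding each first- and second-order quantifier to the positions strictly below $z$ (respectively weakly above $z$). A routine induction shows that for every $w \in A^*$ and every $z \in \range{1}{|w|}$, the formula $\theta^{<z}$ evaluated on $w$ under a valuation using only positions $< z$ agrees with $\theta$ evaluated on the prefix $w\range{1}{z-1}$, and symmetrically for $\theta^{\geq z}$ and the suffix $w\range{z}{|w|}$ (after the obvious re-indexing). Setting
$$\chi(\vec{x},\vec{y},z) \defined \bigwedge_{i=1}^k x_i < z \;\land\; \bigwedge_{j=1}^\ell y_j \geq z \;\land\; \varphi^{<z}(\vec{x}) \;\land\; \psi^{\geq z}(\vec{y}),$$
we obtain a formula $\chi \in \MSO_{k+\ell+1}$, and separating the count over $\vec{x}$ and $\vec{y}$ for each admissible $z$ yields
$$\card{\chi}(w) \;=\; \sum_{z=1}^{|w|} f(w\range{1}{z-1}) \cdot g(w\range{z}{|w|}).$$

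This recovers $(f \cauchy g)(w)$ except for the missing split $u = w$, $v = \movi$, which contributes $g(\movi)\cdot f(w)$, so that $f \cauchy g = \card{\chi} + g(\movi)\cdot f$ on $A^+$, while $(f \cauchy g)(\movi) = f(\movi)\cdot g(\movi)$ is a constant absorbed by a scalar multiple of $\car{\set{\movi}}$. Using \cref{remark:zerology} to pad $f \in \Ve_k$ with dummy variables into $\Ve_{k+\ell+1}$ on non-empty words, everything on the right lies in $\Ve_{k+\ell+1}$, and the construction is manifestly effective from $\varphi$ and $\psi$. I do not expect any serious obstacle here: relativization of $\MSO$ is classical, the $+1$ in $\Ve_{k+\ell+1}$ is exactly the variable needed to encode the cut-point $z$, and the only delicate point is the bookkeeping for the empty-suffix and empty-word splits, which is settled by the additive correction above.
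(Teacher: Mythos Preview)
Your proposal is correct and follows essentially the same approach as the paper: reduce by bilinearity to $f=\card\varphi$, $g=\card\psi$, introduce one fresh variable $z$ for the cut point, and relativize each formula to its side of the cut. The paper makes the symmetric choice (prefix $w[1{:}z]$, suffix $w[z{+}1{:}|w|]$), so its correction term is $f(\movi)\cdot g$ rather than your $g(\movi)\cdot f$; also note that your identity $f\cauchy g=\card\chi+g(\movi)\cdot f$ already holds on $\movi$ as well, so the extra $\car{\set{\movi}}$ correction and the appeal to \cref{remark:zerology} are unnecessary (by the paper's definition $\Ve_k\subseteq\Ve_{k+\ell+1}$ directly).
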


As a consequence, if $L \subseteq A^*$ is regular
and $f \in \Ve_k$, then $\car{L} \cauchy f \in \Ve_{k+1}$.
The following result states that such functions
actually generate the whole space $\Ve_{k+1}$.

\begin{proposition}
	\label{lem:vks:inductcauchy}
	Let $k \geq 0$, the following (effectively)
	holds:
	$$\Ve_{k+1} = \Vect\Rel{{}}(\setof{\mathbf{1}_{L} \cauchy f}{L \text{\normalfont~regular}, f \in \Ve_k}).$$
\end{proposition}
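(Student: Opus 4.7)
The inclusion $\supseteq$ is immediate from Claim~\ref{claim:vks:cauchyincrease}: $\mathbf{1}_L \in \Ve_0$ whenever $L$ is regular, so $\mathbf{1}_L \cauchy f \in \Ve_{0+k+1} = \Ve_{k+1}$ for every $f \in \Ve_k$.

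For $\subseteq$, I first observe that $\Ve_k$ itself already sits in the right-hand side, since $f = \mathbf{1}_{\set{\movi}} \cauchy f$; by the discussion following Remark~\ref{remark:zerology} it therefore suffices to show that $\card\phi$ lies in the right-hand side for every $\phi(x_1,\ldots,x_{k+1}) \in \MSO_{k+1}$. The plan is to fix a non-empty word $w$, group the valuations of $\phi$ by the leftmost position they occupy,
\[
\card\phi(w) = \sum_{i=1}^{|w|} N_i(w), \quad N_i(w) := |\setof{\vec\jmath \in [i{:}|w|]^{k+1}}{w \models \phi(\vec\jmath),\ \min(\vec\jmath) = i}|,
\]
and rewrite this sum as a Cauchy product based on the split $w = pq$ with $p = w[1{:}i{-}1]$ and $q = w[i{:}|w|]$.

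To do this I would invoke the classical MSO composition theorem (Shelah / Feferman--Vaught for word concatenation) in its one-sided form: there is a finite set $T$ of MSO-types of words (each $L_\tau := \setof{p}{\mathrm{type}(p)=\tau}$ being regular and the family $(L_\tau)_{\tau \in T}$ partitioning $A^*$) and, for every $\tau \in T$, a formula $\phi_\tau(y_1,\ldots,y_{k+1}) \in \MSO_{k+1}$ depending only on $\phi$ and $\tau$, such that whenever $\mathrm{type}(p) = \tau$,
\[
pq \models \phi(|p|+j_1,\ldots,|p|+j_{k+1}) \iff q \models \phi_\tau(j_1,\ldots,j_{k+1}).
\]
This gives $N_i(w) = \card{\phi_\tau \land \min(\vec y)=1}(q)$ as soon as $p \in L_\tau$. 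To drop the free-variable count from $k{+}1$ down to $k$, I would expand $\min(\vec y)=1$ as $\bigvee_j y_j = 1$ and apply inclusion--exclusion: each resulting term $\card{\phi_\tau \land \bigwedge_{j \in S} y_j = 1}$ with $\emptyset \neq S \subseteq [1{:}k{+}1]$ equals $\card{\psi_\tau^S}$, where $\psi_\tau^S$ is obtained by substituting the (first-order definable) unique first position of the model for each $y_j$, $j \in S$. Since $\psi_\tau^S$ has at most $k{+}1-|S| \leq k$ free variables, $\card{\psi_\tau^S} \in \Ve_k$, and therefore $\card{\phi_\tau \land \min(\vec y)=1} \in \Ve_k$. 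Assembling, and using that $\card{\phi_\tau \land \min(\vec y)=1}(\movi) = 0$,
\[
\card\phi \;=\; \sum_{\tau \in T} \mathbf{1}_{L_\tau} \cauchy \card{\phi_\tau \land \min(\vec y)=1},
\]
which is a $\Rel$-linear combination of the desired form, effectively computable from $\phi$.

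The main obstacle is the MSO-composition step: it must be set up in the one-sided form above (all free variables on the right factor $q$, regular types on the left factor $p$), which is a standard instance of Shelah's method but needs a careful statement here since we must keep the $\phi_\tau$ in $\MSO_{k+1}$ and the $L_\tau$ truly partitioning $A^*$. Once that is granted, the leftmost-position grouping and the inclusion--exclusion shrink from $k{+}1$ to $k$ free variables both go through by short effective computations.
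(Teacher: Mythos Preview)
Your proof is correct and follows essentially the same route as the paper's: both factor the valuations of $\phi$ at the minimum position of the tuple and invoke an MSO composition / Feferman--Vaught argument (explicit in your write-up, left as an ``easy check'' in the paper) to split the word into a regular prefix condition and a $\Ve_k$ suffix count. The only cosmetic difference is that the paper cuts just \emph{after} the minimum and partitions directly by the set $P$ of variables achieving it (so no inclusion--exclusion is needed), whereas you cut just \emph{before} the minimum, handle the prefix by MSO types, and then apply inclusion--exclusion on which variables sit at position~$1$ of the suffix; both choices lead to the same decomposition.
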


\begin{example}
    \label{ex:minus-one-times-size}
	The map $w \mapsto (-1)^{|w|} |w|$
    is in $\Ve_1$ as it equals
    $\Lodd \cauchy \Lodd + \Leven \cauchy \Leven
		- \Leven \cauchy \Lodd - \Lodd \cauchy \Leven
    - \Lodd + \Leven 
        $.
\end{example}

Now, let us show that $\Rel$-polyregular functions
can be characterized both syntactically and semantically
as a subclass of $\Rel$-rational series. We
prove that the membership problem is decidable and provide
an effective conversion algorithm.

\begin{theorem}[Rational series of polynomial growth]
    \label{theo:rational-polygrowth}
    Let $f:A^* \fonc \Rel$, the following are equivalent:
    \begin{enumerate}
    \item \label{it:popoly} $f$ is a $\Rel$-polyregular function;
    \item \label{it:cauchy}
    $f$ belongs to the smallest class of functions that
    contains the indicator functions of all regular languages
    and is closed under taking external $\Rel$-products, sums and
    Cauchy products;
    \item \label{it:ratpoly}
    $f$ is a $\Rel$-rational series having polynomial growth.
    \end{enumerate}
    Furthermore, one can decide whether a $\Rel$-rational series
    is a $\Rel$-polyregular function and the translations
    are effective.
\end{theorem}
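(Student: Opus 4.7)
The plan is to establish the cycle $(1) \Rightarrow (2) \Rightarrow (3) \Rightarrow (1)$, then extract decidability from the constructive content of the proof.

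The implication $(1) \Rightarrow (2)$ proceeds by induction on the smallest $k$ such that $f \in \Ve_k$. The base case $k = 0$ is \cref{ex:Ve0}, which identifies $\Ve_0$ with the $\Rel$-linear span of indicators of regular languages. The inductive step is exactly \cref{lem:vks:inductcauchy}: every element of $\Ve_{k+1}$ is a linear combination of Cauchy products $\car{L} \cauchy f$ with $L$ regular and $f \in \Ve_k$. The implication $(2) \Rightarrow (3)$ is then transparent: the class in $(2)$ sits by construction inside the $\Rel$-rational series, and an induction on the construction shows that polynomial growth is preserved, since the Cauchy product of growths $\bigO(|w|^k)$ and $\bigO(|w|^\ell)$ is at most $\bigO(|w|^{k+\ell+1})$ (only $|w|+1$ factorizations).

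The core of the theorem is $(3) \Rightarrow (1)$. The plan is to take a linear representation $(\lambda, \mu, \gamma)$ of $f$, namely a monoid morphism $\mu \colon A^* \to \Rel^{n \times n}$ and vectors such that $f(w) = \lambda \mu(w) \gamma$. Polynomial growth of $f$, upgraded to entrywise polynomial growth of $\mu(w)$ via minimality of the representation, forces a spectral constraint: every $\mu(a)$ has all its eigenvalues in $\{0\} \cup \Uni$, since eigenvalues outside the closed unit disk would produce exponential growth in $\mu(a)^n$, while integer-matrix eigenvalues of modulus one are roots of unity by Kronecker's theorem. A Jordan-type base change combined with a Ramsey-style idempotent stabilization inside the image monoid $\mu(A^*)$ then lets one express each entry of $\mu(w)$ as an integer polynomial of bounded degree in quantities that count factorizations $w = u_1 \cdots u_p$ whose factors belong to prescribed regular languages. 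Such factorization counts are precisely the $\card \varphi$ for $\MSO$ formulas $\varphi$ with free first-order variables marking the cut positions, which yields $f$ as a $\Rel$-linear combination of such $\card \varphi$, hence an element of $\Ve$. The main obstacle I expect is this final logical encoding: the Jordan block structure must be carefully translated into free-variable bookkeeping so that the polynomial degree of every contribution is matched by the number of free variables in the corresponding formula.

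For decidability, the constructions above are effective provided one can decide whether a given $\Rel$-rational series has polynomial growth; this reduces to testing the spectral condition, which is a decidable arithmetic condition on the characteristic polynomials of the matrices $\mu(a)$ (their irreducible factors over $\Rat$ must each be either $X$ or cyclotomic). Chained with the explicit Cauchy-product decomposition of \cref{lem:vks:inductcauchy} and the closure direction provided by \cref{claim:vks:cauchyincrease}, this yields the announced effective translations between the three representations.
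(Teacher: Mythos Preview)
Your $(1)\Rightarrow(2)$ and $(2)\Rightarrow(3)$ are fine and match the paper's use of \cref{lem:vks:inductcauchy} and \cref{claim:vks:cauchyincrease}. The issues lie in $(3)\Rightarrow(1)$ and in the decidability claim.

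For $(3)\Rightarrow(1)$, your spectral/Kronecker step correctly extracts $\Spec(\mu(w))\subseteq\{0\}\cup\Uni$ for every $w$ (not just letters) from entrywise polynomial growth of a minimal representation. But the subsequent ``Jordan-type base change'' does not go through as stated: the matrices $\mu(a)$ for different letters $a$ need not be simultaneously triangularizable, so there is no single base change putting all of $\mu(A^*)$ into a form where entries become polynomials in factorization counts. Some genuinely noncommutative argument is needed here; your ``Ramsey-style idempotent stabilization'' gestures in the right direction (Simon's factorization forests would work, and indeed the paper uses them in \cref{sec:pebblemin} for a finer result), but as written this step is a promissory note rather than a proof. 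The paper sidesteps the issue entirely by citing \cite[Corollary~2.6 p.~159]{berstel2011noncommutative} for the non-effective equivalence $(2)\Leftrightarrow(3)$.

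Your decidability argument contains a concrete error: you propose to test whether the irreducible factors of the characteristic polynomials of the $\mu(a)$, $a\in A$, are all $X$ or cyclotomic. But $\Spec(\mu(A))\subseteq\{0\}\cup\Uni$ does \emph{not} imply $\Spec(\mu(A^*))\subseteq\{0\}\cup\Uni$; the paper warns of this explicitly (Remark after \cref{theo:Zpoly-Zrat}). For instance $\mu(a)=\bigl(\begin{smallmatrix}1&1\\0&1\end{smallmatrix}\bigr)$ and $\mu(b)=\bigl(\begin{smallmatrix}1&0\\1&1\end{smallmatrix}\bigr)$ each have only the eigenvalue $1$, yet $\mu(ab)$ has an eigenvalue $>1$, so the associated series has exponential growth. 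The paper instead invokes \cite[Corollary~2.4 p.~159]{berstel2011noncommutative} for decidability of polynomial growth, and obtains the effective translation $(3)\to(1)$ by brute-force enumeration of $\Ve$ combined with the decidable equality test for $\Rel$-rational series.
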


\begin{proof}

	For \cref{it:cauchy} $\implies$ \cref{it:popoly},
	observe that $\Ve$  contains the indicator functions of regular languages, is closed 
	under external $\Rel$-products, sums,
	and Cauchy products (thanks to 
	\cref{claim:vks:cauchyincrease}). For \cref{it:popoly}
	$\implies $ \cref{it:cauchy}, we obtain for all $k \ge 0$ as an
	immediate consequence of \cref{lem:vks:inductcauchy}:
    \begin{equation}
        \label{eq:cauchys}
        \begin{array}{rcll}
             & \Ve_k & = \Vect{\Rel}(\{\car{L_0} \cauchy \cdots \cauchy
            \car{L_k}& \\ &&\colon L_0, \dots, L_k \text{ regular languages}\})
        \end{array}
    \end{equation}
    and the result follows.
    
	The equivalence between \cref{it:cauchy}
	and \cref{it:ratpoly} follows (in a non effective way)
	from \cite[Corollary 2.6 p 159]{berstel2011noncommutative}.
	Furthermore polynomial growth is decidable by
	\cite[Corollary 2.4 p 159]{berstel2011noncommutative}.
	To provide an effective translation, one can start from a
	$\Rel$-rational series $f$ of polynomial growth,
	enumerate all the $\Rel$-polyregular functions $g$,
    rewrite them as rational series (using \cref{it:popoly} $\implies$ \cref{it:cauchy})
    and check whether $f = g$ since this property
    can be decided for $\Rel$-rational series
    \cite[Corollary 3.6 p 38]{berstel2011noncommutative}.
\end{proof}

\begin{remark}
    It follows from \cref{rem:qmso-nul},
    \cite[Proposition 6.1]{kreutzer2013},
    and 
    \cref{theo:rational-polygrowth},
    that $\Rel$-rational series
    of polynomial growth are exactly
    those computable by
    weigthed automata with
    coefficients in $\{0,1,-1\}$
    of polynomial ambiguity.
    We are not aware of a direct
    proof of this correspondence.
\end{remark}

\begin{remark}
\cite[Theorem 3.3]{doueneau2022hiding}
gives a similar result when comparing $\Nat$-polyregular functions
and $\Nat$-rational series.
\end{remark}

\begin{remark}
    The class of $\Rel$-polyregular functions
	is also closed under Hadamard product
	($f \times g(w) \defined f(w) \times g(w)$).
    This can be obtained by generalising
    \cref{ex:times-w}.
    Moreover, $f \times g \in \Ve_{k+\ell}$
    whenever $f \in \Ve_{k}$ and $g \in \Ve_{\ell}$.
\end{remark}

Since the equivalence is decidable
for $\Rel$-rational series
\cite[Corollary 3.6 p 38]{berstel2011noncommutative},
we obtain the following.

\begin{corollary}[Equivalence problem]
\label{cor:equivalence}
One can decide if two $\Rel$-polyregular functions
are equal.
\end{corollary}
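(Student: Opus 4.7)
The plan is to reduce the equivalence problem for $\Rel$-polyregular functions to the equivalence problem for $\Rel$-rational series. This is essentially immediate once one observes that $\Ve$ is a (decidable, effectively identifiable) subclass of the $\Rel$-rational series, via \cref{theo:rational-polygrowth}.

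Concretely, suppose I am given two $\Rel$-polyregular functions $f$ and $g$, presented in any of the formalisms already available in the paper: as $\Rel$-linear combinations of counting formulas $\card\varphi$, or as $\Rel$-rational expressions without Kleene star (combinations of indicators of regular languages under external $\Rel$-products, sums, and Cauchy products). First I would invoke the effective implication \cref{it:popoly}${}\implies{}$\cref{it:cauchy} of \cref{theo:rational-polygrowth} to translate both $f$ and $g$ into rational expressions, and from those extract standard linear representations as $\Rel$-rational series (this is a routine translation, handled classically in \cite{berstel2011noncommutative}).

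Once $f$ and $g$ are available as linear representations of $\Rel$-rational series, I would either test equivalence directly, or form the difference $f - g$ (which is again a $\Rel$-rational series, since $\Rel$-rational series are closed under external $\Rel$-products and sums), and test whether it is the zero series. Both tests are decidable by the classical equivalence procedure for $\Rel$-rational series \cite[Corollary 3.6 p 38]{berstel2011noncommutative}.

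No step is expected to be an obstacle: the translation to $\Rel$-rational series is already established effectively by \cref{theo:rational-polygrowth}, and the equivalence of $\Rel$-rational series is a foundational known result. The only subtlety worth noting is that one must be careful that the input representation of a $\Rel$-polyregular function is fixed beforehand (e.g. as a counting $\MSO$ formula combination), so that the initial translation to a linear representation is well defined and computable; once this is done, the rest is immediate.
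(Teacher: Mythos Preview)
Your proposal is correct and matches the paper's approach exactly: the paper simply observes that $\Rel$-polyregular functions are $\Rel$-rational series (via \cref{theo:rational-polygrowth}) and then invokes decidability of equivalence for $\Rel$-rational series \cite[Corollary 3.6 p 38]{berstel2011noncommutative}.
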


%\begin{proof} 
% Equivalence is decidable
%in the whole class of $\Rel$-rational series \cite[Corollary 3.6 p 38]{berstel2011noncommutative}.
%\end{proof}

%However, this does not provide an effective way to compute
%a function $f \in \Ve_\infty$. To that end, we will construct
%an automaton that is defined through the notion of residual
%of a function $f \in \Amb$, and will eventually provide a decompostion
%of $f$ in Cauchy products.
%
%\begin{theorem}
%	The class of $\Rel$-rational series with polynomial growth coincides
%	with the smallest class of functions containing the indicator functions
%	of regular languages, and closed under scalar products, sums and
%	Cauchy products.
%\end{theorem}
%
%\gd{En plus, on peut décider, donc par brutale énumération,
%on peut construire le truc.}
%
%Furthermore,
%it follows from 
%that one can decide if a $\Rel$-rational series has polynomial growth,
%which concludes the proof of \cref{theo:cauchy}.

\subsection{Rational series and representations}
\label{ssec:matrix}

In this section, we intend to provide another
description of $\Rel$-polyregular functions among $\Rel$-rational series.
To that end, we first recall that rational series
can also be described using matrices (or, equivalently,
weighted automata). Let $\Mat{n,m}(\Rel)$ be the set of all
$n \times m$ matrices with coefficients in $\Rel$. We equip
$\Mat{n,m}(\Rel)$ with the usual matrix multiplication.

\begin{definition}[Linear representation]
	We say that a triple $(I,\mu, F)$ where
	$\mu \colon A^* \to \Mat{n,n}(\Rel)$ is a monoid morphism,
	$I \in \Mat{1,n}(\Rel)$ and $F \in \Mat{n,1}(\Rel)$,
	 is a \emph{$\Rel$-linear representation} of a function $f: A^* \fonc \Rel$
	 if  $f(w) = I \mu(w) F$ for all $w \in A^*$.
\end{definition}

It is well-known since Schützenberger (see e.g.
\cite[Theorem~7.1 p 17]{berstel2011noncommutative})
that the class of $\Rel$-rational series is (effectively) the class
of functions that have a $\Rel$-linear representation.

\begin{example}
    The map $w \mapsto (-1)^{|w|} |w|$
    from \cref{ex:minus-one-times-size}
    is a $\Rel$-polyregular function, hence
    it is a $\Rel$-rational series. It has the following
    $\Rel$-linear representation:
    \begin{equation*}
        \left(\begin{pmatrix} -1 & 0 \end{pmatrix},
        w \mapsto \begin{pmatrix} -1 & 1 \\ 0 & -1 \end{pmatrix}^{|w|},
        \begin{pmatrix} 0 \\ 1 \end{pmatrix}
        \right).
    \end{equation*}
    Note that the eigenvalues of any matrix in $\mu(A^*)$
    are $1$ or $-1$. 
\end{example}

\begin{example}
    The function $w \mapsto (-2)^{|w|}$
    from \cref{ex:moins-deux-puissance-n}
   is a $\Rel$-rational series that is not a $\Rel$-polyregular function.
    It can be represented via $((1),\mu,(1))$ where
	$\mu(w) = ((-2)^{|w|})$ for all $w\in A^*$.
	Observe that for all $n \ge 1$, there exists
	a matrix in $\mu(A^*)$ whose eigenvalue has modulus
	$2^n > 1$.
\end{example}

A $\Rel$-linear representation $(I,\mu,F)$ of a function $f$
is said to be
\emph{minimal}, when it has minimal dimension $n$ among
all the possible representations of $f$.
Given a matrix $M \in \Mat{n,n}(\Rel)$, we let $\Spec(M) \subseteq \Com$
be its \emph{spectrum}, which is the set of
all its (complex) \emph{eigenvalues}.
If $S \subseteq \Mat{n,n}(\Rel)$,
we let $\Spec(S) \defined \bigcup_{M \in S} \Spec(M)$
be the union of the spectrums.
Finally, let $B(0,1) \defined \setof{x \in \Com}{|x| \le 1}$
be the unit disc and
$\Uni \defined \setof{x \in \Com}{\exists n \ge 1, x^n =1}$
be the roots of unity.

Now, we show that $\Rel$-polyregular functions can
be characterized through the eigenvalues of $\Rel$-linear representations.
More precisely,  \cref{theo:Zpoly-Zrat}
will relate the asymptotic growth of a series to the
spectrum of the set of matrices $\mu(A^*)$.  As a first step,
let us observe that the eigenvalues occurring in a minimal representation
can be revealed by iterating words.

\begin{lemma}
    \label{lem:capturing-eigenvalues}
    Let $f \colon A^* \to \Rel$ be a $\Rel$-rational series and
	$(I, \mu, F)$ be a minimal $\Rel$-linear representation of $f$.
	Let $w \in A^*$ and $\lambda \in \Spec(\mu(w))$.
    There exist coefficients 
    $\alpha_{i,j} \in \mathbb{C}$ 
    for $1 \leq i,j \leq n$, %$ \alpha_1 \beta_1, \dots, \alpha_n, \beta_n \in \Com$
    and words $u_1, v_1, \dots, u_n, v_n \in A^*$ such that
    $\lambda^X = \sum_{i ,j= 1}^{n} \alpha_{i,j}  f(v_i w^X u_j)$
    for all $X \ge 0$.
\end{lemma}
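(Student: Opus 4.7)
The plan is to combine two ingredients: a minimality argument that produces well-chosen words $v_i, u_j$, and a Jordan form argument that extracts individual eigenvalues from entries of matrix powers.

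First, I would use minimality of $(I,\mu,F)$ to exhibit words $v_1, \dots, v_n \in A^*$ such that the row vectors $I \mu(v_i) \in \Com^{1 \times n}$ form a basis, and similarly words $u_1, \dots, u_n \in A^*$ such that the column vectors $\mu(u_j) F \in \Com^{n \times 1}$ form a basis. This is a classical consequence of minimality: if the family $\{I \mu(v) : v \in A^*\}$ only spanned a proper subspace of $\Com^{1 \times n}$, one could quotient and obtain a strictly smaller representation of $f$, contradicting minimality. (The $\Rel$-rank of $f$ as computed from its Hankel matrix equals its $\Com$-rank, so $\Rel$-minimality does suffice to invoke the $\Com$-spanning property.) Let $P$ be the invertible matrix with rows $I \mu(v_i)$ and $Q$ the invertible matrix with columns $\mu(u_j) F$.

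Second, writing $M \defined \mu(w)$, I would use the identity
\[
f(v_i w^X u_j) = I \mu(v_i)\, M^X\, \mu(u_j) F = (P M^X Q)_{i,j}.
\]
Consequently, the $\Com$-vector space spanned by the sequences $X \mapsto f(v_i w^X u_j)$ for $1 \le i,j \le n$ equals the span of the entries of $P M^X Q$, which by invertibility of $P$ and $Q$ coincides with the span of the entries of $M^X$ itself.

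Third, I would pass to the Jordan normal form $M = S J S^{-1}$ over $\Com$. Since $M^X = S J^X S^{-1}$, the $\Com$-span of the entries of $M^X$ equals that of the entries of $J^X$. The entries of $J^X$ are $\Com$-linear combinations of the sequences $X \mapsto \binom{X}{k} \mu^{X-k}$, for $\mu$ an eigenvalue of $M$ and $k$ a non-negative integer below the size of the corresponding Jordan block. In particular, the diagonal entries contribute exactly the sequences $X \mapsto \mu^X$ for each $\mu \in \Spec(M)$, so $\lambda^X$ lies in this span. Unpacking the resulting linear combination yields the scalars $\alpha_{i,j} \in \Com$ satisfying the claimed identity for every $X \ge 0$.

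The main obstacle in this plan is the first step: the hypothesis is $\Rel$-minimality, whereas the Jordan argument needs $\Com$-spanning. Bridging this gap requires either invoking the classical fact that the rank of the Hankel matrix of $f$ is the same over $\Rel$, $\Rat$ and $\Com$, or rerunning the Fliess--Sch\"utzenberger minimization argument directly to check that any $n$-dimensional $\Rel$-minimal representation is automatically $\Com$-minimal. The rest is linear algebra.
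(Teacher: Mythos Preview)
Your proposal is correct and shares with the paper the crucial first step: minimality of the $\Rel$-representation (which is also $\Rat$-minimal by \cite[Theorem~1.1 p.~121]{berstel2011noncommutative}) forces both $\{I\mu(v) : v \in A^*\}$ and $\{\mu(u)F : u \in A^*\}$ to span the ambient space, so one can select words $v_i, u_j$ whose associated vectors form bases. The proofs diverge only in the linear-algebraic extraction of $\lambda^X$. The paper picks a complex eigenvector $V$ for $\lambda$, expresses $V$ and ${}^t V$ in these bases, computes ${}^t V\,\mu(w)^X V = \lambda^X\cdot({}^t V V)$, and divides by ${}^t V V$. You instead package the basis vectors into invertible matrices $P, Q$, observe that the entries of $P M^X Q$ and of $M^X$ span the same space of sequences in $X$, and pass to Jordan form to locate $\lambda^X$ as a diagonal entry of $J^X$. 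Your route uses slightly more machinery but is more robust: it sidesteps the issue that ${}^t V V$ may vanish for a complex eigenvector of a real matrix (e.g.\ $\mu(w)=\left(\begin{smallmatrix}0&-1\\1&0\end{smallmatrix}\right)$ with $V=(1,-i)^t$), a point the paper's argument glosses over. Once the spanning property is in hand, either linear-algebra step is routine.
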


Now, we refine the notion of polynomial
growth to explicit the behaviour of
a function when iterating factors.

\begin{definition}
	\label{def:ultimately-polynomial-general}
	Let $N > 0$. A function $f \colon A^* \fonc \Rel$ is 
    \emph{ultimately $N$-polynomial}
    whenever there exists
    $M \ge 0$ such that
    for all $\ell \ge 0$, for all
    $\alpha_0, w_1, \alpha_1, \dots, w_{\ell}, \alpha_{\ell} \in A^*$,
    there exists $P \in \Rat[X_1, \dots, X_\ell]$,
	such that
    $f(\alpha_0 w_1^{N X_1} \alpha_1 \cdots w_{\ell}^{N X_{\ell}}\alpha_{\ell})
    = P(X_1, \dots, X_\ell)$,
    whenever $X_1, \dots, X_\ell \geq M$.
\end{definition}

In this section we only need to have
$\ell = 1$, but \cref{def:ultimately-polynomial-general} has been made generic
so that it can be reused in \cref{sec:aperiodic} when dealing
with aperiodicity. Now, we observe that ultimate polynomiality
is preserved under taking sums, external $\Rel$-products
and Cauchy products. \Cref{lem:Zpoly-pump-first} also provides
a fine-grained control over the value $N$ of ultimate $N$-polynomiality,
that will mostly be useful in \cref{sec:aperiodic}.

\begin{lemma}
    \label{lem:Zpoly-pump-first}
    Let $f, g \colon A^* \to \Rel$ be (respectively) ultimately
    $N_1$-polynomial
    and ultimately $N_2$-polynomial, then:
    \begin{itemize}
    \item $f + g$ and $f \cauchy g$
    are ultimately $(N_1 \times N_2)$-polynomial;
    \item $\delta f$ is ultimately $N_1$-polynomial
    for $\delta \in \Rel$.
    \end{itemize}
    Furthermore, for every
    regular language $L$, there exists $N > 0$
    such that $\car{L}$ is ultimately $N$-polynomial.
\end{lemma}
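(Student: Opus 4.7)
The plan is to handle the four claims in increasing order of difficulty. The external product $\delta f$ is immediate: the polynomial witness for $f$ is scaled by $\delta$, so $\delta f$ remains ultimately $N_1$-polynomial with the same constant $M$.

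For $f + g$ evaluated at $\alpha_0 w_1^{N_1 N_2 X_1} \cdots w_\ell^{N_1 N_2 X_\ell} \alpha_\ell$, I would factor $N_1 N_2 X_i = N_1 (N_2 X_i)$ and apply $f$'s ultimate $N_1$-polynomiality under the linear substitution $Y_i \mapsto N_2 X_i$; since composing a polynomial with a linear map yields a polynomial, $f$ evaluates to a polynomial in $X_1, \ldots, X_\ell$ whenever all $X_i$ are large enough. Treating $g$ symmetrically via $Y_i \mapsto N_1 X_i$ and taking $M \defined \max(M_f, M_g)$ then makes both witnesses applicable simultaneously.

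The Cauchy product is the technical heart of the lemma. I would expand $(f \cauchy g)(w) = \sum_{w = uv} f(u) g(v)$ at $w = \alpha_0 w_1^{N_1 N_2 X_1} \cdots \alpha_\ell$ and classify cut positions. Cuts falling in some fixed $\alpha_j$ contribute finitely many terms, each a product $f(u) g(v)$ fitting both templates via the sum-case substitutions, and hence polynomial. Cuts inside an iterated block $w_i^{N_1 N_2 X_i}$ at offset $c |w_i| + r$ with $0 \leq r < |w_i|$ require a finer decomposition $c = N_1 N_2 a + b$ with $0 \leq b < N_1 N_2$. Then the prefix rewrites as
\begin{equation*}
u = \alpha_0 (w_1^{N_2})^{N_1 X_1} \alpha_1 \cdots \alpha_{i-1} (w_i^{N_2})^{N_1 a} \, w_i^b p_r,
\end{equation*}
where $p_r$ is the length-$r$ prefix of $w_i$; this matches the template for $f$'s ultimate $N_1$-polynomiality with variables $X_1, \ldots, X_{i-1}, a$. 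A symmetric rewriting
\begin{equation*}
v = s_r \, w_i^{N_1 N_2 - b - 1} (w_i^{N_1})^{N_2 (X_i - a - 1)} \alpha_i \cdots \alpha_\ell
\end{equation*}
fits $g$'s template with variables $X_i - a - 1, X_{i+1}, \ldots, X_\ell$. The summand $f(u) g(v)$ is therefore polynomial in $a$ and the $X_j$'s on the range $a \in [M_f, X_i - M_g - 1]$, and summing such a polynomial over this arithmetic range is again polynomial in $X_i$ by Faulhaber's identity. The finitely many boundary values of $a$ on either end contribute polynomial terms by analogous applications of the two polynomialities (either $a$ or $X_i - a$ is bounded, leaving only the remaining free variables to carry the polynomial dependence), and a final sum over the finitely many pairs $(b, r)$ completes this case.

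Finally, for $\car L$ I would fix a morphism $\mu : A^* \to M$ recognizing $L$ into a finite monoid and choose $N \geq 1$ so that $m^N$ is idempotent for every $m \in M$ (e.g.\ $N \defined |M|!$). Then for $X_i \geq 1$ one has $\mu(w_i)^{N X_i} = (\mu(w_i)^N)^{X_i} = \mu(w_i)^N$, so $\mu(\alpha_0 w_1^{N X_1} \cdots w_\ell^{N X_\ell} \alpha_\ell)$ does not depend on $X_1, \ldots, X_\ell$, and $\car L$ is constant on this family, which is a degree-zero polynomial witnessing ultimate $N$-polynomiality with $M = 1$. The main obstacle overall is the Cauchy case: interlocking both polynomiality witnesses on a single iterated block, while tracking the modular decomposition of cut positions, is where all the combinatorial bookkeeping sits; everything else is a direct application of the definition.
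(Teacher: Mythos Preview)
Your proposal is correct and follows essentially the same route as the paper: trivial cases for $\delta f$ and $f+g$, the finite-monoid idempotent-power argument for $\car{L}$, and for the Cauchy product the same expansion into cuts in the fixed $\alpha_j$'s versus cuts inside an iterated block, with Faulhaber handling the inner sum. Your $(a,b,r)$ decomposition is exactly the paper's $(Y,i)$ decomposition (with $i$ ranging over $|u_j^N|$ positions), and you are in fact slightly more careful than the paper in explicitly treating the boundary values of $a$ where one of the two polynomiality witnesses does not yet apply---the paper glosses over this point.
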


Now, we have all the elements to prove the main theorem of this section.

\begin{theorem}[Polynomial growth and eigenvalues]
	\label{theo:Zpoly-Zrat}
	Let $f: A^* \fonc \Rel$, the following are equivalent:
	\item
	\begin{enumerate}
		%\item \label{it:polygrowth} $f$ is a $\Rel$-rational series
		%that has polynomial growth;
		\item \label{it:polyreg} $f$ is a $\Rel$-polyregular function;
		\item \label{it:Zpoly-pump-first} $f$ is a $\Rel$-rational series
		that is ultimately $N$-polynomial
		for some $N > 0$;
		%\item \label{it:polybound} \al{This has to go, we cannot prove
                %it right now .... $f$ is a $\Rel$-rational series such that
		%for all $u,v,w \in A^*$,
    %$X \mapsto f(u w^X v)$ is bounded by a polynomial in $X$;}
		\item \label{it:unity} $f$ is a $\Rel$-rational series
		and for all minimal $\Rel$-linear representations $(I, \mu,F)$ of $f$,
		$\Spec(\mu(A^*)) \subseteq \Uni \cup \{0\}$.
		\item \label{it:disc} $f$ is a $\Rel$-rational series
		and for some minimal $\Rel$-linear representation $(I, \mu,F)$ of $f$,
		$\Spec(\mu(A^*)) \subseteq B(0,1)$;
	\end{enumerate}
\end{theorem}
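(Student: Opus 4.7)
The plan is to prove a cycle of implications $\ref{it:polyreg} \implies \ref{it:Zpoly-pump-first} \implies \ref{it:unity} \implies \ref{it:disc} \implies \ref{it:polyreg}$, using \cref{theo:rational-polygrowth} as the bridge back to $\Ve$.

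For $\ref{it:polyreg} \implies \ref{it:Zpoly-pump-first}$, I would combine \cref{lem:Zpoly-pump-first} with the decomposition \eqref{eq:cauchys} obtained in the proof of \cref{theo:rational-polygrowth}: every $f \in \Ve_k$ is a $\Rel$-linear combination of Cauchy products $\car{L_0} \cauchy \cdots \cauchy \car{L_k}$ of indicators of regular languages. By \cref{lem:Zpoly-pump-first}, each $\car{L_i}$ is ultimately $N_i$-polynomial for some $N_i > 0$, and the closure properties of ultimate polynomiality under sums, $\Rel$-products, and Cauchy products yield a common $N$ (namely, a product of the $N_i$'s) such that $f$ is ultimately $N$-polynomial.

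For $\ref{it:Zpoly-pump-first} \implies \ref{it:unity}$, I would invoke \cref{lem:capturing-eigenvalues}: given a minimal representation $(I, \mu, F)$ and $\lambda \in \Spec(\mu(w))$, there exist coefficients $\alpha_{i,j} \in \Com$ and words $u_j, v_i$ such that $\lambda^X = \sum_{i,j} \alpha_{i,j} f(v_i w^X u_j)$ for all $X \geq 0$. By hypothesis \ref{it:Zpoly-pump-first}, each map $X \mapsto f(v_i w^{NX} u_j)$ agrees with a polynomial in $X$ for $X$ large enough, so $X \mapsto \lambda^{NX}$ also agrees with a polynomial $P(X) \in \Com[X]$ eventually. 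A short elementary argument then forces either $P = 0$ (hence $\lambda = 0$) or $\deg P = 0$ with $\lambda^{N} = 1$: indeed $\lambda^N = P(X+1)/P(X)$ for all large $X$, and this ratio tends to $1$ as soon as $\deg P \geq 1$, which together with $|\lambda^N|$ being constant forces $|\lambda^N|=1$ and then $\lambda^N = 1$. The implication $\ref{it:unity} \implies \ref{it:disc}$ is immediate because $\Uni \cup \{0\} \subseteq B(0,1)$.

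The main obstacle is $\ref{it:disc} \implies \ref{it:polyreg}$, where I must deduce polynomial growth from the spectral bound on a single minimal representation. The plan is to invoke the classical structural theorem on finitely generated matrix semigroups over $\Rel$ (see e.g.\ Berstel--Reutenauer, or Jacob's theorem): for a minimal representation of a $\Rel$-rational series, polynomial growth of $f$ is equivalent to $\Spec(\mu(A^*)) \subseteq B(0,1)$. Once polynomial growth is established, \cref{theo:rational-polygrowth} immediately gives $f \in \Ve$. The subtlety is that controlling the spectrum of \emph{each} matrix $\mu(w)$ individually is weaker than controlling $\|\mu(w)\|$ uniformly in $|w|$; the passage from pointwise spectral control to uniform polynomial growth relies on the minimality of the representation (to rule out unbounded Jordan blocks hidden by cancellations) together with the finiteness of the generating alphabet. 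This step, while classical, is the technical heart of the argument and is where I would spend most of the effort or, failing a direct reference, route through Kronecker's theorem on algebraic integers whose conjugates lie in $B(0,1)$ to first upgrade $\ref{it:disc}$ to $\ref{it:unity}$ before concluding.
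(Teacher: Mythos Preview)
Your cycle $\ref{it:polyreg} \Rightarrow \ref{it:Zpoly-pump-first} \Rightarrow \ref{it:unity} \Rightarrow \ref{it:disc} \Rightarrow \ref{it:polyreg}$ is exactly the paper's route, and your arguments for the first three arrows match the paper's (your treatment of $\ref{it:Zpoly-pump-first} \Rightarrow \ref{it:unity}$ is in fact more explicit than the paper's, which simply asserts that $X \mapsto (\lambda^N)^X$ being eventually polynomial forces it to be constant). For $\ref{it:disc} \Rightarrow \ref{it:polyreg}$ the paper does not argue directly: it cites \cite[Theorem~2.6]{bell_gap_2005} (Bell's gap theorem for matrix semigroups) to obtain polynomial growth from the spectral bound, and then invokes \cref{theo:rational-polygrowth}. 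Your primary plan---quote a structural theorem giving polynomial growth from $\Spec(\mu(A^*)) \subseteq B(0,1)$---is therefore correct in spirit, though the precise reference you want is Bell rather than Jacob or Berstel--Reutenauer.

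Your Kronecker fallback, however, does not close the loop. Kronecker's theorem (applied to the eigenvalues of the integer matrix $\mu(w)$, whose algebraic conjugates are among the other eigenvalues) indeed upgrades $\ref{it:disc}$ to $\ref{it:unity}$, but at that point you have only established $\ref{it:polyreg} \Rightarrow \ref{it:Zpoly-pump-first} \Rightarrow \ref{it:unity} \Leftrightarrow \ref{it:disc}$, with no arrow back to $\ref{it:polyreg}$ or $\ref{it:Zpoly-pump-first}$. You would still need to show that $\Spec(\mu(A^*)) \subseteq \Uni \cup \{0\}$ forces polynomial growth of $f$, and that is precisely the nontrivial content of Bell's result (knowing each $\mu(w)$ individually has spectrum in $\Uni \cup \{0\}$ bounds the growth of its powers, but does not by itself bound $\|\mu(w)\|$ uniformly over all $w$ of a given length). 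So the fallback is circular: drop it and cite Bell.
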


\begin{proof}
%    First of all, \Cref{it:polyreg} is equivalent to \Cref{it:polygrowth}
%    thanks to \cref{theo:rational-polygrowth}.
%    We are now going to prove the equivalence of the remaining items.
	%\Cref{it:polygrowth} $\implies$ \cref{it:polybound} is obvious. 
	\Cref{it:disc} $\implies$ \cref{it:polyreg}
	is a direct consequence of \cite[Theorem 2.6]{bell_gap_2005}
	and \cref{theo:rational-polygrowth}.
	\Cref{it:polyreg} $\implies$ \cref{it:Zpoly-pump-first}
   	 follows from \cref{lem:Zpoly-pump-first} and \cref{theo:rational-polygrowth}.
	
	For \cref{it:Zpoly-pump-first} $\implies$ \cref{it:unity},
	let $(I, \mu, F)$ be a minimal representation of $f$ in $\Rel$,
	of dimension $n \ge 0$.
	Let $w \in A^*$ and $\lambda \in \Spec(\mu(w))$.
	Thanks to \cref{lem:capturing-eigenvalues},
    there exists $\alpha_{i,j}, u_i, v_j$ for $1 \leq i,j \leq n$, such that
    $\lambda^X = \sum_{1 \leq i,j \leq n} \alpha_{i,j}  f(v_i w^X u_j)$
	 for $X$ large enough. By assumption, for all $1 \le i,j \le n$, there exists
	$N_{i,j} > 0$ such that $X \mapsto f(v_i w^{N_{i,j} X} u_j)$
	is a polynomial for $X$ large enough. Hence there exists $N > 0$
	(i.e. the product of the $N_{i,j}$) such that $X \mapsto \lambda^{NX} = (\lambda^N)^X$
	is a polynomial for $X$ large enough,
	which therefore must be a constant polynomial.
	Hence $\lambda^N \in \{0,1\}$, which implies
	that $\lambda \in \{0\} \cup \mathbb{U}$.
	\Cref{it:unity} $\implies$ \cref{it:disc} is obvious.
\end{proof}

\begin{remark} \Cref{it:unity} of \cref{theo:Zpoly-Zrat}
is optimal, in the sense that for all $\lambda \in \Uni \cup \{0\}$,
there exists a $\Rel$-rational series of polynomial growth
having a minimal representation $(I, \mu, F)$ with $\lambda \in
\Spec(\mu(A^*))$ (if $\lambda \in \Uni$, we let $\mu(a)$ be
the companion matrix of the cyclotomic polynomial associated to $\lambda$).
\end{remark}

\begin{remark}
    \label{remark:N-polynomial-monomial}
    Leveraging the proof scheme used 
    for the
    implication \cref{it:Zpoly-pump-first} $\implies$ \cref{it:unity}
    of
    \cref{theo:Zpoly-Zrat},
    one can
    actually show 
    that the following asymptotic polynomial bound
    characterizes
    $\Rel$-polyregular functions among $\Rel$-rational series:
    for all  $u,w,v \in A^*$,
    there exists $P \in \Rat[X]$,
	such that
    $|f(u w^{X} v)| \leq P(X)$,
    for $X$ large enough. 
\end{remark}

\begin{remark} Beware that $\Spec(\mu(A)) \subseteq \set{0} \cup \Uni$
has no reason to imply $\Spec(\mu(A^*)) \subseteq \set{0} \cup \Uni$.
\end{remark}

\section{Free Variable Minimization and Growth Rate}
\label{sec:pebblemin}

In this section, we study the membership problem
from $\Ve$ to $\Ve_k$ for a given $k \ge 0$.
As observed in \cref{remark:z-poly-polygrowth},
if $f \in \Ve_k$ then
$|f(w)| = \bigO(|w|^k)$. We show 
that this asymptotic behavior completely
characterizes $\Ve_k$ inside $\Ve$.
This statement is formalized in \cref{thm:skel:pebblemin},
which also provides both a decision procedure
and an effective conversion algorithm.
It turns out that \cref{thm:skel:pebblemin} is 
also a stepping stone towards computing
the \kl{residual automaton} of a function  $f \in \Ve$,
which is done in \cref{sec:residual}.

This can be understood as a result
that ``minimizes" the number of free variables needed
to describe a $\Rel$-polyregular function. 
As such, it
is tightly connected with the ``pebble minimization''
results that exists
for (word-to-word) polyregular functions
\cite{bojanczyk2022growth}
and $\Nat$-polyregular functions 
\cite{doueneau2021pebble}.
However, these results cannot be used as black box theorems
to minimize the number of free variables of $\Rel$-polyregular
functions because the negative coefficients of the latter
induce non-trivial behaviors.

To capture the growth rate of $\Rel$-polyregular functions,
we shall introduce a 
quantitative variant of the traditional
pumping lemmas. Before that, let us
extend the $\bigO$ notation to multivariate functions
$f,g : \Nat^n \fonc \Rel$ as follows:
we say that $f= \bigO(g)$ whenever
there exist $N,C \ge 0$ such that
$|f(x_1, \dots, x_n)| \leq C |g(x_1,\dots, x_n)|$ for every
$x_1, \dots, x_n \geq N$.
We similarly extend the notation $f(x) = \Omega(g(x))$ to multivariate
functions. 

\begin{definition}
    \label{def:k-pumpable}
    A function $f \colon A^* \to \Rel$ is
    \emph{$k$-pumpable} whenever there exist
    $\alpha_0, \dots, \alpha_k \in A^*$,
    $w_1, \dots, w_k \in A^*$, 
    such that
    $|f(\alpha_0 \prod_{i = 1}^k w_i^{X_i} \alpha_i)|
    =
    \Omega(|X_1 + \cdots + X_k|^k)$.
\end{definition}

\begin{example}
    For all $k \geq 0$, for all $f \in \Ve_{k}$, $f$ is not \kl{$(k+1)$-pumpable}
    because $|f(w)| = \bigO(|w|^k)$.
\end{example}

The equivalence between \cref{item:fvmin-f-ve}
and \cref{item:fvmin-f-polyk} in \cref{thm:skel:pebblemin}
is known since \cite{schutzenberger1962}. However,
the equivalence with \cref{item:fvmin-f-pump}
and the effectivity of the result are novel.

\begin{restatable}[Free Variable Minimization]{theorem}{pebbleminimisation}
	\label{thm:skel:pebblemin}
	Let $f \in \Ve$ and $k \ge 0$. The following conditions are equivalent:
    \begin{enumerate}
        \item \label{item:fvmin-f-ve} $f \in \Ve_k$;
        \item \label{item:fvmin-f-polyk} $|f(w)| = \bigO(|w|^k)$;
        \item \label{item:fvmin-f-pump} $f$ is not $(k+1)$-pumpable.
    \end{enumerate}
	Furthermore, the minimal $k$
    such that $f \in \Ve_k$ is computable,
	and the construction is effective.
\end{restatable}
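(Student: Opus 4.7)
The plan is to establish the implications $(1) \implies (2) \implies (3) \implies (1)$, together with computability of the minimal $k$. The implication $(1) \implies (2)$ is precisely \cref{remark:z-poly-polygrowth}. For $(2) \implies (3)$, I argue by contraposition: if $f$ is $(k+1)$-pumpable with witness $\alpha_0, w_1, \dots, w_{k+1}, \alpha_{k+1}$, then setting $X_1 = \cdots = X_{k+1} = n$ produces a family of words $v_n$ of length $\bigO(n)$ with $|f(v_n)| = \Omega(n^{k+1})$, contradicting $|f(w)| = \bigO(|w|^k)$.

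The heart of the proof is $(3) \implies (1)$, which I establish by contraposition: assuming $f \in \Ve \setminus \Ve_k$, I construct a $(k{+}1)$-pumpable witness. I fix a minimal $\Rel$-linear representation $(I,\mu,F)$ of $f$, which exists by the Schützenberger correspondence between $\Rel$-rational series and linear representations, and which by \cref{theo:Zpoly-Zrat} satisfies $\Spec(\mu(A^*)) \subseteq \set{0} \cup \Uni$. I pick $N > 0$ such that every eigenvalue in that spectrum is a root of unity of order dividing $N$; then each $\mu(w^N)$ has eigenvalues in $\set{0,1}$, so its iterates $\mu(w^N)^X$ have entries that are polynomials in $X$ whose degree is bounded by the size of the largest Jordan block attached to the eigenvalue $1$.

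Using a Ramsey-style factorization argument on the finite congruence induced by $\mu$, I aim to produce a factorization $\alpha_0 w_1^N \alpha_1 \cdots w_m^N \alpha_m$ for which the polynomial
\begin{equation*}
P(X_1, \dots, X_m) \defined f(\alpha_0 w_1^{N X_1} \alpha_1 \cdots w_m^{N X_m} \alpha_m)
\end{equation*}
(which exists for $X_i$ large enough by \cref{lem:Zpoly-pump-first}) has total degree at least $k+1$. The upper bound---\emph{any $f \in \Ve_d$ admits pumping degree at most $d$}---follows from the Cauchy decomposition recorded in \cref{eq:cauchys} together with \cref{claim:vks:cauchyincrease}. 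For the matching lower bound, the minimality of $(I,\mu,F)$ forces the Jordan structure of $\mu(A^*)$ to faithfully reflect the intrinsic growth of $f$; isolating in $P$ a monomial $X_{i_1}^{d_1}\cdots X_{i_\ell}^{d_\ell}$ of total degree $\geq k+1$ with non-zero coefficient then yields, after grouping and padding with trivial factors, the required $(k+1)$-pumpable witness.

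For effectivity: starting from a minimal representation of $f$, one computes $N$ from the characteristic polynomials of the matrices $\mu(a)$ for $a \in A$, enumerates factorizations of length bounded by the dimension of $\mu$, and evaluates $P$ symbolically via the Jordan form over $\Com$. The minimal $k$ with $f \in \Ve_k$ is read off as the maximum total degree encountered, and the conversion back to $\Ve_k$ proceeds inductively using \cref{lem:vks:inductcauchy}, peeling off one Cauchy factor at a time. The main obstacle I expect is the lower bound in the pumping argument: the negative coefficients available in $\Ve$ enable delicate cancellations that can hide the true growth of $f$ along a naive factorization, so the Ramsey refinement must be pushed far enough to certify---through a linear-algebraic argument on the minimal representation---that the leading-degree monomial of $P$ has non-zero coefficient. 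This is precisely the technical step that distinguishes the $\Rel$-polyregular setting from its $\Nat$-polyregular counterpart in \cite{doueneau2022hiding}.
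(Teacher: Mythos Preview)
Your implications $(1) \Rightarrow (2) \Rightarrow (3)$ are correct and match the paper. The substantive direction is $(3) \Rightarrow (1)$, and here your proposal diverges sharply from the paper and leaves the decisive step unproven.

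The paper does \emph{not} work via minimal linear representations and Jordan blocks. It proceeds combinatorially and by induction on $k$ (\cref{lem:skel:semantic-cond}): given $f \in \Ve_k$, it fixes a morphism $\mu$ into a finite monoid recognizing the formulas defining $f$ and lifts $f$ to a function on bounded-depth $\mu$-factorization forests (\cref{lem:skel:to-forest}, \cref{thm:simon}). Using the skeleton structure of such forests, it splits $f$ as $(\fdep + \findep) \circ \repforest$ (\cref{def:fdep-findep}), shows unconditionally that $\fdep \in \Ve_{k-1}$ (\cref{lem:skel:inv-decomp}), and then proves (\cref{lem:skel:indep-iter-zero}) that $\findep \neq 0$ yields an explicit pumping family on which $\findep$ is ultimately a polynomial with non-zero coefficient on the monomial $X_1 \cdots X_k$. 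Combined with \cref{lem:skel:lemmapoly}, this contradicts the growth bound, so $\findep = 0$ and $f \in \Ve_{k-1}$. Effectiveness comes from decidability of $\findep = 0$, not from enumerating factorizations.

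Your linear-algebraic route has a genuine gap precisely where you flag it. You assert that ``the minimality of $(I,\mu,F)$ forces the Jordan structure of $\mu(A^*)$ to faithfully reflect the intrinsic growth of $f$'' and that a Ramsey argument produces $P$ of degree $\geq k+1$, but you supply no mechanism linking $f \notin \Ve_k$ to any such $P$. Minimality lets you recover individual eigenvalues via $f(v w^X u)$ (\cref{lem:capturing-eigenvalues}), but says nothing about the \emph{joint} behaviour under several independent iterates, which is exactly what $(k{+}1)$-pumpability demands (cf.\ \cref{ex:skel:multi-iter}: single-factor iteration cannot detect the growth). The ``Ramsey-style factorization'' is doing all the work and is left unspecified. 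Even granting a high-degree monomial, your ``grouping and padding'' is incomplete: a polynomial of degree $k{+}1$ in $m$ variables need not satisfy $|P| = \Omega((\sum X_i)^{k+1})$ in the paper's sense, and splitting $w^X$ into several copies does not manufacture the required uniform lower bound when other degree-$(k{+}1)$ monomials carry opposite signs. The paper avoids both issues by engineering, via the type decomposition of skeletons, a family on which the $X_1 \cdots X_k$ coefficient is isolated and provably non-zero.
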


The proof of \cref{thm:skel:pebblemin} is done via induction on $k$,
and follows directly from the following induction step, to which 
we devote the rest of \cref{sec:pebblemin}.

%\begin{lemma}
%    \label{lem:skel:pumping-lemma}
%	Let $f \in \Ve$ and $k \geq 0$. The following three statements are
%    equivalent:
%	\begin{enumerate}
%		\item \label{it:cond:class} $f \in \Ve_{k}$;
%		\item \label{it:cond:growth} $|f(w)| = \bigO(|w|^{k})$;
%		\item \label{it:cond:tuple}
%            $f$ is not \kl{$(k+1)$-pumpable}.
%            %$\forall \alpha_0, u_1, \alpha_1, \dots, \alpha_{k-1}, u_k,
%				  %\alpha_{k} \in A^*,
%				  %|f(\alpha_0 \prod_{i = 1}^{k} (u_i)^{X_i} \alpha_i )|
%				  %= \bigO(| X_1 + \cdots + X_k|^{k-1})
%			  %$.
%	\end{enumerate}
%	Moreover this property can be decided and
%	the construction is effective.
%\end{lemma}

\begin{inductionstep}
	\label{lem:skel:semantic-cond}
	Let $k \geq 1$ and $f \in \Ve_k$. The following conditions are equivalent:
	\begin{enumerate}
		\item \label{it:cond:class} $f \in \Ve_{k-1}$;
		\item \label{it:cond:growth} $|f(w)| = \bigO(|w|^{k-1})$;
		\item \label{it:cond:tuple}
            $f$ is not \kl{$k$-pumpable}.
            %$\forall \alpha_0, u_1, \alpha_1, \dots, \alpha_{k-1}, u_k,
				  %\alpha_{k} \in A^*,
				  %|f(\alpha_0 \prod_{i = 1}^{k} (u_i)^{X_i} \alpha_i )|
				  %= \bigO(| X_1 + \cdots + X_k|^{k-1})
			  %$.
	\end{enumerate}
	Moreover this property can be decided and
	the construction is effective.
\end{inductionstep}

Beware that one must be able to pump several factors at once
to detect the growth rate, as illustrated in the following example.
This has to be contrasted with \cref{remark:N-polynomial-monomial}.

\begin{example}
    \label{ex:skel:multi-iter}
	Let $f: a^kb^{\ell} \mapsto k \times \ell$ and $w \mapsto 0$ otherwise.
	The function $f$ is $\Rel$-polyregular and $2$-pumpable,
	however, $f(\alpha_0 w^X \alpha_1) = \bigO(X)$ for every triple $\alpha_0,w,\alpha_1 \in A^*$.
\end{example}

Our proof of \cref{lem:skel:semantic-cond} is built upon
factorization forests.
Given a morphism $\mu \colon A^* \to M$ into a finite monoid and
$w \in A^*$, a \kl{$\mu$-forest of $w$} is a forest that can be represented
as a word over $\hat{A} \defined A \uplus \{ \langle, \rangle \}$, defined
as follows.

\begin{definition}[Factorization forest \cite{simon1990factorization}]
	Given a monoid morphism $\mu \colon A^* \to M$ and $w \in A^*$, we say that
	$F$ is a \intro{$\mu$-\emph{forest}} of $w$ when:
	\begin{itemize}
		\item either $F = a$, and $w = a \in A $;
		\item or $F = \tree{F_1} \cdots \tree{F_n}$,
		      $w = w_1 \cdots w_n$ and for all $1 \le i \le n$,
		      $F_i$ is a $\mu$-forest of $w_i \in A^+$.
		      Furthermore, if $n \ge 3$ then
		      $\mu(w_1) = \dots = \mu(w_n)$ is an idempotent of $M$.
	\end{itemize}
\end{definition}

We let $\forests{\mu}{}$ be the language 
of $\mu$-forests inside $(\hat{A})^*$.
Because forests are (ordered) trees, we will use the standard vocabulary
to talk about the nodes, the sibling/parent relation, the root, the leaves
and the depth of a forest. We let $\forests{\mu}{d} \subseteq (\hat{A})^*$ be the set of $\mu$-forests
with depth at most $d$. Let $\repword \colon \forests{\mu}{d} \to A^*$ be the function
mapping a $\mu$-forest of $w  \in A^*$ to $w$ itself. 

\begin{example} \label{ex:forest} Let $M \defined (\{-1,1,0\}, \times)$.
	A forest $F\in \forests{\mu}{5}$
	(where $\mu : M^* \fonc M$ maps a word to the product of its elements)
	such that $\repword(F) = (-1)(-1)0(-1)000000$ is depicted
	in Figure~\ref{fig:ex:skel}. Double lines
	denote idempotent nodes (i.e.
	nodes with more than $3$ children).
\end{example}

When $M$ is a finite monoid, it is known from
Simon's celebrated theorem \cite{simon1990factorization} that
any word in $A^*$ has a $\mu$-forest of
bounded depth. Furthermore, this small forest can be computed
by a regular function (notion introduced in \cref{ssec:pebble}).

\begin{theorem}[\cite{simon1990factorization,colcombet2011green}]
	\label{thm:simon}
	Given a morphism  into a finite monoid $\mu \colon A^* \to M$,
	one can effectively compute some $d \ge 0$ and a 
         regular function $\repforest \colon A^* \to \forests{\mu}{d}$
	such that $\repword \circ \repforest$ is the identity function.
\end{theorem}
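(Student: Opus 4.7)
The plan is to combine two ingredients: Simon's classical existence result for bounded-depth factorization forests, and a deterministic construction showing that such forests can be computed by MSO-transductions (and hence by a regular function).

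First, I recall that Simon's theorem guarantees that every word $w \in A^+$ admits some $\mu$-forest of depth bounded by a computable quantity $d$ depending only on $|M|$ (for instance $d = 3|M|$ via the sharpened bound obtained from Green's relations in \cite{colcombet2011green}). I fix this $d$ as the depth parameter of the statement; the bound being effective from the multiplication table of $M$.

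For the regularity of $\repforest$, the strategy is to build the forest top-down by iterating a depth-one MSO-definable split. At each recursion step, given a subword $w$ with $\mu(w) = s$, one of two canonical rules applies depending on the $\mathcal{J}$-class of $s$: either $(i)$ $w$ factors uniquely into maximal blocks $w_1 \cdots w_n$ whose images $\mu(w_i)$ are all equal to a common idempotent $e$ $\mathcal{J}$-equivalent to $s$ (producing an idempotent node), or $(ii)$ one locates the leftmost split $w = uv$ with both $\mu(u)$ and $\mu(v)$ lying strictly $\mathcal{J}$-above $s$ (producing a binary node). Both decomposition rules are MSO-definable on the input, because the product of a contiguous factor is determined by the endpoints of the corresponding interval, which is an MSO-definable relation on positions once the monoid $M$ is fixed.

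Iterating this rule inductively on the $\mathcal{J}$-class of the current subword terminates in at most $d$ steps, producing the desired forest, which is encoded as a word in $(\hat{A})^*$ by inserting brackets at the split positions. Since each level is realised by an MSO-transduction and MSO-transductions compose into regular functions, the overall map $\repforest : A^* \to \forests{\mu}{d}$ is a regular function satisfying $\repword \circ \repforest = \id$ by construction. The main obstacle is choosing the splits canonically so that the output is deterministic, and proving termination within the claimed depth bound: both aspects rest on a careful case analysis of Green's relations on $M$, in the spirit of Colcombet's deterministic proof of Simon's theorem \cite{colcombet2011green}.
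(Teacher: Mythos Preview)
The paper does not prove this theorem: it is stated as a known result with citations to \cite{simon1990factorization,colcombet2011green} and used as a black box. Your proposal therefore goes beyond what the paper does, by sketching an argument for a result the authors simply import.

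As a sketch, your plan is in the right spirit—bounded depth from Simon, and a deterministic top-down decomposition realisable by an $\MSO$-transduction for the regularity of $\repforest$—but the dichotomy you describe in step~(i)/(ii) is too coarse to work as stated. It is not the case that an arbitrary factor $w$ with $\mu(w)=s$ either splits into maximal blocks all mapping to a single idempotent $e$ that is $\mathcal J$-equivalent to $s$, or admits a binary split with both halves strictly $\mathcal J$-above $s$. Colcombet's deterministic construction proceeds instead through a Ramseyan factorisation along $\mathcal R$-classes (and symmetrically $\mathcal L$-classes) of the running prefix products, which yields either an unbounded split whose pieces lie in a common $\mathcal H$-class (where the group case is then handled separately to produce the idempotent nodes) or a strict descent in the $\mathcal J$-order. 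Your case~(i) conflates the $\mathcal H$-class reduction with the final idempotent step, and your case~(ii) is not guaranteed to exist when it is needed. The termination-within-depth-$3|M|$ argument also requires tracking this finer stratification. So while the overall architecture (iterated $\MSO$-definable split, composition of $\MSO$-transductions) is correct, the specific decomposition rule you propose would need to be replaced by the actual Ramseyan split from \cite{colcombet2011green} for the argument to go through.
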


In order to prove \cref{lem:skel:semantic-cond},
we shall consider
a function $(f \colon A^* \to \Rel) \in \Ve_k$
that is not $k$-pumpable, and show how to compute it
as a function in $\Ve_{k-1}$.
To that end, we shall construct a
function $g \colon \hat{A}^* \to \Rel \in \Ve_{k-1}$
such that $f = g \circ \repforest$. Since
$\repforest$ is regular thanks to \cref{thm:simon},
it will follow that $f \in \Ve_{k-1}$
by \cref{prop:precompose}.
Remark that it is only needed to define $g$
on $\forests{\mu}{d}$.

Following the classical connections between $\MSO$-formulas
and regular languages \cite{thomas1997languages},
we prove in \cref{lem:skel:to-forest}
that for every function $f \in \Ve_k$ there exist
a finite monoid $M$ and a morphism $\mu \colon A^* \to M$,
such that $f(w)$ can be reconstructed using
``simple'' $\MSO$-formulas which are evaluated
along \kl{bounded-depth $\mu$-factorizations} of $w$.

\begin{claim}
    Given a morphism $\mu \colon A^* \to M$
    into a finite monoid and $d \in \Nat$, 
    the following predicates are $\MSO$ definable
    for words over $\hat{A}$. For all $F \in \forests{\mu}{d}$,
    and $w = \repword(F)$, then:
    \begin{itemize}
        \item
            $F \models \msoisleaf(x)$
            if and only if 
            $x$ is a leaf of $F$;

        \item
            $F \models \msoprod_m(x,y)$
            if and only if 
            $x$ and $y$ are leaves of $F$,
            $x \leq y$, and $\mu(w[x] \dots w[y]) = m$;

        \item
            $F \models \msoleft_m(x)$
            if and only if 
            $x$ is a leaf of $F$,
            and $\mu(w[1] \dots w[x]) = m$;

        \item
            $F \models \msoright_m(x)$
            if and only if 
            $x$ is a leaf of $F$,
            and $\mu(w[x] \dots w[|w|]) = m$.

    \end{itemize}
    Whenever $F \in \hat{A}^* \setminus \forests{\mu}{d}$,
    the semantics are undefined.
\end{claim}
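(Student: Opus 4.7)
The plan is to invoke Büchi's theorem: for each of the four predicates (and for each value $m \in M$ in the parameterized cases), I would exhibit a deterministic finite automaton over $\hat{A}$ extended with position markers that recognizes the corresponding language, from which $\MSO$-definability follows.

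The first step is to verify that the language $\forests{\mu}{d} \subseteq \hat{A}^*$ is itself regular. A deterministic automaton recognizing it maintains, at any point during the scan, the current bracket depth (bounded by $d$) together with, for each currently open level, the value under $\mu$ of the prefix of the current subtree already read, and the auxiliary data needed to verify the idempotency condition at nodes with three or more children (namely the common value of the children completed so far, which must be an idempotent of $M$). Since $d$ is fixed and $M$ is finite, the state space is finite.

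The second step is to extend $\mu$ to a morphism $\hat{\mu} \colon \hat{A}^* \to M$ by sending both brackets to the neutral element of $M$. The key observation is then that for every $F \in \forests{\mu}{d}$ with $\repword(F) = w$ and every two leaf positions $x \leq y$ of $F$, the product $\hat{\mu}(F[x] F[x+1] \cdots F[y])$ equals $\mu(w[x] \cdots w[y])$, because the bracket symbols contribute the neutral element. Consequently, $\msoprod_m(x,y)$ amounts to asking that $F \in \forests{\mu}{d}$, that the letters at $x$ and $y$ lie in $A$, that $x \leq y$, and that this product equals $m$. An automaton recognizing the last condition simply scans the word, waits for the marker at $x$, then accumulates the product of subsequent letters using $M$ as part of its state, and checks the result upon reaching $y$; the product with the automaton for $\forests{\mu}{d}$ yields a finite recognizer for $\msoprod_m$.

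The remaining predicates follow by small variations of the same construction. $\msoisleaf(x)$ requires $F \in \forests{\mu}{d}$ and that the symbol at position $x$ belongs to $A$ rather than $\{\langle, \rangle\}$. For $\msoleft_m(x)$ and $\msoright_m(x)$, the automaton is identical to the one above, with the role of the second (respectively first) marker played implicitly by the end (respectively start) of the input. The only genuinely delicate point in the whole argument is the regularity of $\forests{\mu}{d}$, which is where the depth bound is essential: without it, an automaton would need to track monoid values at unboundedly many open levels, which finite memory cannot support.
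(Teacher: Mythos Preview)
Your proposal is correct. The paper itself states this claim without proof, treating it as a routine verification, so there is no ``paper's own proof'' to compare against in detail.

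Your approach via Büchi's theorem is the natural one, and the key observation you make---extending $\mu$ to $\hat{\mu} \colon \hat{A}^* \to M$ by sending both brackets to the neutral element, so that $\hat{\mu}(F[x{:}y]) = \mu(w[x]\cdots w[y])$ whenever $x,y$ are leaf positions---is exactly the right idea that makes the $\msoprod_m$, $\msoleft_m$, $\msoright_m$ predicates immediate. Your treatment of the regularity of $\forests{\mu}{d}$ is also correct, and you identify the essential point: the depth bound $d$ is what keeps the state space finite, since the automaton must track a bounded stack of partial monoid values and child-count/idempotency information, one entry per open bracket level.

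One small remark: strictly speaking the claim only asks for $\MSO$-definability of the predicates with the specified semantics \emph{on} $\forests{\mu}{d}$, leaving the semantics undefined outside. So you do not actually need to conjoin your automata with the recognizer for $\forests{\mu}{d}$; any $\MSO$ formula that agrees with the intended semantics on $\forests{\mu}{d}$ suffices. This simplifies the construction slightly, though your version is of course also valid.
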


\begin{definition}
    The fragment $\INV$ is a subset of $\MSO$ over $\hat{A}$,
    which contains the quantifier-free formulas using
    only the predicates $\msoprod_m$, $\msoleft_m$, and $\msoright_m$ where
    $m$ ranges over $M$, and where every free variable $x$
    is guarded by the predicate $\msoisleaf(x)$.
    Furthermore, we let $\INV_k \defined \INV \cap \MSO_k$.
\end{definition}

\begin{claim}[{\cite{bojanczyk2022transducers,bojanczyk2022growth}}]
	\label{lem:skel:to-forest}
	For all $f \in \Ve_k$, one can build
	a finite monoid $M$, a depth $d \in \Nat$,
	a surjective morphism $\mu \colon A^* \to M$,
	constants $n \geq 0$,
    $\delta_i \in \Rel$ for $1 \leq i \leq n$,
	formulas $\psi_i \in \INV_k$ for $1 \leq i \leq n$,
	such that
	for every word $w \in A^*$,
	for every \kl{factorization forest} $F \in \forests{\mu}{d}$ of $w$,
    it holds that
	$f(w) = \sum_{i = 1}^{n} \delta_i \times \card{\psi_i}(F)$.
\end{claim}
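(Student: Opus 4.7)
The plan is first to reduce the statement to the case of a single summand $f = \card{\phi}$ with $\phi \in \MSO_k$, and then to translate such a $\phi$ into a quantifier-free $\INV$ formula on $\mu$-forests. By definition of $\Ve_k$, the function $f$ is a $\Rel$-linear combination of terms $\card{\phi_j}$ with $\phi_j \in \MSO_{\ell_j}$ and $\ell_j \le k$; using \cref{remark:zerology} we may pad every $\phi_j$ up to exactly $k$ free variables (the correction on the empty word being absorbed by an additional constant summand, which is itself definable in $\INV_k$ via $\msoisleaf$-guarded equalities). Taking $M$ to be the product of the syntactic monoids of all the (padded) $\phi_j$'s yields a single surjective morphism $\mu \colon A^* \to M$, to which \cref{thm:simon} associates a depth $d$ such that every word admits a $\mu$-forest in $\forests{\mu}{d}$.

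The core step is then, for a fixed $\phi(x_1, \dots, x_k) \in \MSO_k$, to build $\psi \in \INV_k$ such that $\card{\psi}(F) = \card{\phi}(w)$ whenever $F \in \forests{\mu}{d}$ and $\repword(F) = w$. The classical correspondence between $\MSO$ on words with marked positions and recognizability by finite monoids guarantees that the satisfaction of $\phi(i_1, \dots, i_k)$ depends only on a bounded piece of data: the weak order of the $i_j$'s, the letters at the distinct marked positions, and the $\mu$-values of the free factors lying between consecutive such positions (as well as the value of the prefix before the first mark and the value of the suffix after the last). I would therefore let $\psi$ be a finite disjunction ranging over these ``accepting profiles'', each disjunct being a conjunction of $\msoisleaf(x_j)$ guards together with matching atomic predicates $\msoleft_{m_0}$, $\msoprod_{m_j}$, and $\msoright_{m_k}$ pinned to the appropriate variables. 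Since the leaves of a $\mu$-forest $F$ of $w$ are in canonical bijection with the positions of $w$, and since the three families of atomic predicates faithfully encode the corresponding $\mu$-products, the valuations satisfying $\psi$ in $F$ are in bijection with the tuples satisfying $\phi$ in $w$.

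The main technical obstacle lies in the bookkeeping of this translation. The predicate $\msoprod_m(x,y)$ captures $\mu(w[x]\cdots w[y])$ with both endpoints included, so one must carefully peel off the monoid values of the letters at the marked positions in order to recover the monoid values of the strictly intermediate free factors; likewise, the finite disjunction must cover every weak order on the $x_j$'s to handle tuples with repeated coordinates, and boundary cases (empty prefix, empty suffix, adjacent marked positions) must each be given an explicit disjunct. All the ingredients — the monoid $M$, the depth $d$ from \cref{thm:simon}, and the finite list of accepting profiles — are computable from $\phi$, so the construction is effective. Summing the translations obtained for each $\phi_j$ produces the desired decomposition $f(w) = \sum_i \delta_i \card{\psi_i}(F)$.
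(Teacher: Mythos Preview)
The paper does not supply its own proof of this claim: it is stated with a citation to \cite{bojanczyk2022transducers,bojanczyk2022growth} and is not revisited in the appendix. Your sketch follows the standard argument behind those references and is correct in outline, so there is little to compare against.

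One technical point needs more care than you give it. You write that ``one must carefully peel off the monoid values of the letters at the marked positions in order to recover the monoid values of the strictly intermediate free factors.'' But monoids do not admit cancellation: from $\mu(u_0 a_1)$ and $\mu(a_1)$ one cannot in general recover $\mu(u_0)$, and if $\mu$ is taken to be merely (the unmarked-letter restriction of) the syntactic morphism of $L_\phi$, two distinct open-interval profiles---one accepting, one not---may collapse to the same $\INV$ data, after which no boolean combination of $\INV$ atoms can separate them. The fix is not to divide but to refine $\mu$: arrange that $\mu(v)$ also records the first and last letters of $v$ together with the $\mu_0$-images of $v$ with its first (respectively last) letter deleted, where $\mu_0$ is the morphism your reduction naturally produces. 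This is still a morphism into a finite semigroup (adjoin a fresh identity for $\varepsilon$ to get a monoid, and restrict to the image for surjectivity), and now the closed-interval values supplied by $\msoleft_m$, $\msoright_m$, $\msoprod_m$ determine the open-interval profile directly, so your disjunction over accepting profiles goes through verbatim.
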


In the rest of this section, we
focus on the number of free variables in $\Rel$-linear combinations
of $\card \psi$ where $\psi \in \INV$.
The crucial idea is that one can leverage the structure of the forest
$F \in \forests{\mu}{d}$
to compute $\card \psi$ more efficiently, at the cost of
building a non-$\INV$ formula.

For that, we explore the structure of the forest $F$
as follows: given a node $\nod$ in a forest $F$, we define its
skeleton to be the subforest rooted at that node, containing
only the right-most and left-most children recursively.
This notion was already used
in \cite{doueneau2022hiding,doueneau2023pebble,bojanczyk2022growth} for the study
of pebble transducers.

\begin{definition} Let $F \in \forests{\mu}{}$ and $\nod \in \Nodes(F)$, we define
	the \intro{skeleton} of $\nod$ by:
	\begin{itemize}
		\item if $\nod = a\in A$ is a leaf, then $\intro\Ske(\nod) \defined \{ \nod \}$;
		\item otherwise if $\nod = \tree{F_1} \cdots \tree{F_n}$, then
		      $\Ske(\nod) \defined \{\nod\} \cup \Ske(F_1) \cup \Ske(F_n)$.
	\end{itemize}
\end{definition}

Let $w \in A^*$, $F$ be a $\mu$-forest of $w$, and
$\nod \in \Nodes(F)$. The set of nodes $\Ske(\nod)$ defines
a $\mu$-forest of a (scattered) subword $u$ of $w$: the one
obtained by concatenating the leaves of $F$ that are in $\Ske(\nod)$.
See \cref{fig:ex:skel} for an example of a skeleton.
A crucial property of $\Ske(\nod)$ seen as a forest is that
it preserves the evaluation:

\begin{claim}
    \label{claim:skel-semantic-invariant}
    For all $d \geq 0$, finite monoid $M$,
    morphism $\mu \colon A^* \to M$,
    forest $F \in \forests{\mu}{d}$,
    node $\nod \in F$,
    it holds that
    $\mu(\repword(\Ske(\nod))) = \mu(\repword(\nod))$.
\end{claim}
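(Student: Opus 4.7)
The plan is to proceed by structural induction on the node $\nod$, or equivalently by induction on the depth of the subforest rooted at $\nod$. The key insight is that whenever an internal node of a $\mu$-forest has $n \geq 3$ children, the definition of factorization forest forces their monoid images to all coincide with a single idempotent $e \in M$, so pruning the middle children leaves a product $e \cdot e = e$ while the unpruned node contributes $e^n = e$.

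For the base case, if $\nod = a \in A$ is a leaf, then $\Ske(\nod) = \{\nod\}$ and both $\repword(\Ske(\nod))$ and $\repword(\nod)$ equal $a$, so the equality is immediate.

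For the inductive step, I write $\nod = \tree{F_1} \cdots \tree{F_n}$, set $w_i \defined \repword(F_i)$, and assume inductively that $\mu(\repword(\Ske(F_i))) = \mu(w_i)$ for each $i$. If $n \leq 2$, then the leftmost and rightmost children coincide with the whole list of children, so $\Ske(\nod)$ retains all subforests; applying the induction hypothesis together with the morphism property yields $\mu(\repword(\Ske(\nod))) = \mu(w_1 \cdots w_n) = \mu(\repword(\nod))$. If $n \geq 3$, then by the definition of a factorization forest there is an idempotent $e \in M$ with $\mu(w_1) = \cdots = \mu(w_n) = e$; hence $\mu(\repword(\nod)) = e^n = e$ by idempotency, while on the skeleton side only $F_1$ and $F_n$ survive, and the induction hypothesis gives $\mu(\repword(\Ske(\nod))) = \mu(\repword(\Ske(F_1))) \cdot \mu(\repword(\Ske(F_n))) = e \cdot e = e$.

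There is no genuine obstacle beyond bookkeeping the case split on $n$: the whole argument hinges on the idempotency condition built into the definition of $\mu$-forests, which is precisely what makes it safe to drop the middle children. This claim is the semantic backbone of all subsequent uses of skeletons, since it guarantees that the (typically much shorter) scattered subword read along $\Ske(\nod)$ carries exactly the same monoid data as the full subforest rooted at $\nod$.
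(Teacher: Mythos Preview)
Your proof is correct and follows the natural structural induction; the paper states this claim without proof, treating it as an immediate consequence of the idempotency condition in the definition of $\mu$-forests, which is exactly the mechanism you spell out.
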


\begin{figure}[h!]
	\newcommand{\couleur}{blue!70}
	\centering
    \resizebox{\linewidth}{!}{
	\begin{tikzpicture}[
			skel/.style={
					circle, fill=blue,
					inner sep=2pt,
					minimum width=1pt,
				}
		]

		\newcommand{\texte}{\small \bfseries \sffamily \mathversion{bold} }

		% Input 1
		\node[above] at  (0,0)  {$-1$};
		\node[above] at  (1,0)  {$-1$};
		\node[above] at  (2,0)  {$0$};
		\node[above] at  (3,0)  {$-1$};
		\node[above] at  (4,0)  {$0$};
		\node[above] at  (5,0)  {$0$};
		\node[above] at  (6,0)  {$0$};
		\node[above] at  (7,0)  {$0$};
		\node[above] at  (8,0)  {$0$};
		\node[above] at  (9,0)  {$0$};

		\draw[double] (4,0.975) -- (8,0.975);
		\draw[] (3,1.5) -- (6,1.5);
		\draw[double] (2,1.975) -- (9,1.975);
		\draw[] (0,2) -- (1,2);
		\draw[] (0.5,2.5) -- (6.5,2.5);

		\draw[] (0,2) -- (0,0.5);
		\draw[] (1,2) -- (1,0.5);
		\draw[] (0.5,2) -- (0.5,2.5);

		\draw[] (2,2) -- (2,0.5);
		\draw[] (9,2) -- (9,0.5);

		\draw[] (6.5,2) -- (6.5,2.5);
		\draw[] (3.5,2.5) -- (3.5,2.75);

		\draw[] (3,1.5) -- (3,0.5);
		\draw[] (4,1) -- (4,0.5);
		\draw[] (5,1) -- (5,0.5);
		\draw[] (6,1.5) -- (6,0.5);
		\draw[] (7,1) -- (7,0.5);
		\draw[] (8,1) -- (8,0.5);

		\draw[] (4.5,2) -- (4.5,1.5);

		\node[skel] at (4.5,1.5) {};
		\node[skel] at (6,1) {};
		\node[skel] at (3,0.5) {};
		\node[skel] at (4,0.5) {};
		\node[skel] at (8,0.5) {};

	\end{tikzpicture}
}

	\caption{\label{fig:ex:skel}  A forest $F$ with $\repword(F) = (-1)(-1)0(-1)000000$
		together with a skeleton in blue.}
\end{figure}

Let $F$ be a forest and $x$ be a leaf in $F$.
Observe that $\Ske(x)$ is exactly $x$ itself. 
There may exist several nodes $\nod \in F$ such that
$x \in \Ske(\nod)$, however only one of them is maximal
thanks to \cref{claim:skel:skeletons-totally-ordered}.
As a consequence one can partition
$\Leaves(F)$ depending
on the maximal skeleton (for inclusion) which contains a given leaf
(\cref{def:skel-root}).

\begin{lemma}
	\label{claim:skel:skeletons-totally-ordered}
    Let $F \in \forests{\mu}{d}$,
	$x \in \Leaves(F)$.
    There exists $\nod \in \Nodes(F)$ such that
	$x \in \Ske(\nod)$.

    Furthermore, for every $\nod,\nod'$ such that
	$x \in \Ske(\nod) \cap \Ske(\nod')$, $\Ske(\nod) \subseteq \Ske(\nod')$ or
	$\Ske(\nod') \subseteq \Ske(\nod)$.
\end{lemma}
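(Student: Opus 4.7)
The plan is as follows. The first assertion is immediate: every leaf $x$ is itself a node of $F$, and the base case of the recursive definition gives $\Ske(x) = \{x\}$, so one may simply take $\nod = x$.

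For the second assertion, I would first establish the structural fact that every element of $\Ske(\nod)$ is either $\nod$ itself or a descendant of $\nod$ reached by a chain of steps, each going from an internal node to its leftmost or rightmost child. This unfolds directly from the recursive clause $\Ske(\nod) = \{\nod\} \cup \Ske(\nod_1) \cup \Ske(\nod_n)$, where $\nod_1$ and $\nod_n$ denote the leftmost and rightmost children of $\nod$, and can be formalized by a straightforward induction on the height of $\nod$ in $F$.

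Applying this observation to $x \in \Ske(\nod) \cap \Ske(\nod')$, both $\nod$ and $\nod'$ are ancestors of $x$ (or equal to $x$). Since the ancestors of a given leaf in a forest form a totally ordered chain, $\nod$ and $\nod'$ are comparable for the descendant relation; without loss of generality assume that $\nod'$ lies weakly below $\nod$. Then $\nod'$ sits on the unique path from $\nod$ down to $x$. Because $x \in \Ske(\nod)$, that entire path consists of extreme-child transitions, and therefore every intermediate node on it, including $\nod'$, belongs to $\Ske(\nod)$.

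It remains to show that $\nod' \in \Ske(\nod)$ implies $\Ske(\nod') \subseteq \Ske(\nod)$, which I would prove by induction on the height of $\nod$. If $\nod$ is a leaf, then necessarily $\nod' = \nod$ and the inclusion is trivial. Otherwise, either $\nod' = \nod$, or $\nod'$ lies in $\Ske(\nod_1) \cup \Ske(\nod_n)$, and the inductive hypothesis applied to the child skeleton that contains $\nod'$ yields $\Ske(\nod') \subseteq \Ske(\nod_1) \cup \Ske(\nod_n) \subseteq \Ske(\nod)$. The main (though mild) obstacle is to keep the argument clean by clearly separating the ambient ancestor order in $F$ from the more restrictive ``reachable through extreme children'' relation that underlies the definition of $\Ske$; once that distinction is in place, both statements follow by direct structural induction.
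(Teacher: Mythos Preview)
Your proof is correct and follows essentially the same approach as the paper's: both take $\nod = x$ for existence, use that $\nod$ and $\nod'$ are ancestors of $x$ and hence comparable, deduce that the lower one lies in the skeleton of the higher one, and conclude the inclusion of skeletons. Your write-up is in fact more careful than the paper's, which dispatches the last two steps with ``$\Ske(\nod)$ is a subforest of $F$ containing $x$, hence it contains $\nod'$'' and ``by definition of skeletons, it is easy to see that whenever $\nod' \in \Ske(\nod)$, we have $\Ske(\nod') \subseteq \Ske(\nod)$''; you spell out the extreme-child path structure and the induction explicitly.
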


\begin{definition}
    \label{def:skel-root}
	Let $\intro\partitionskel \colon \Leaves(F) \to \Nodes(F)$
	map a leaf $x$ to the $\nod \in \Nodes(F)$ such that $x \in \Ske(\nod)$
	and $\Ske(\nod)$ is maximal for inclusion.
\end{definition}

Following the work of \cite{doueneau2022hiding},
we define a notion of dependency of leaves
(\cref{def:dependencies})
based on the relationship between
their maximal skeletons
(\cref{def:observation}).

\begin{definition}[Observation]
    \label{def:observation}
	We say that $\nod' \in \Nodes(F)$ \intro{observes} $\nod \in \Nodes(F)$
	if either $\nod'$ is an ancestor of $\nod$
    (this includes $\nod$ itself), or the immediate
	left or right sibling of an ancestor of
	$\nod$.
\end{definition}

\begin{definition}[Dependency]
    \label{def:dependencies}
	In a forest $F$, a leaf $y$ \intro{depends} on a leaf $x$,
    written 
    $x \dependson y$,
	when $\partitionskel(y)$ \kl{observes} $\partitionskel(x)$.
\end{definition}

Beware that the relation $x \dependson y$
is not symmetric. This allows us to ensure that the number of
leaves $y$ that depend on a fixed leaf $x$ is uniformly bounded.
\begin{claim}
	\label{rem:skel:bound-depnodes}
	Given $d \ge 0$, there exists a (computable) bound $N_d \in \Nat$
	such that for all $F \in \forests{\mu}{d}$
	and all leaf $x \in \Leaves(F)$, there exist at most
	$N_d$ leaves which \kl{depend} on $x$.
\end{claim}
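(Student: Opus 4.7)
The plan is to decompose the count of leaves $y$ depending on $x$ into two independent bounds: first, the number of nodes that \kl{observe} $\partitionskel(x)$; second, for each such observing node $\nod^*$, the number of leaves whose \kl{skel-root} equals $\nod^*$. Combining these with the depth bound coming from $\forests{\mu}{d}$ will yield an explicit and computable $N_d$.

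For the first bound, I would fix a leaf $x$ and set $\nod \defined \partitionskel(x)$, then unfold \cref{def:observation}: the nodes \kl{observing} $\nod$ are precisely the ancestors of $\nod$ (including $\nod$ itself) together with the immediate left and right siblings of those ancestors. Since every forest in $\forests{\mu}{d}$ has depth at most $d$, the chain of ancestors of $\nod$ contains at most $d+1$ nodes, each contributing at most two additional siblings. Hence there are at most $3(d+1)$ observing nodes in total.

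For the second bound, I would invoke \cref{def:skel-root}: any leaf $y$ satisfying $\partitionskel(y) = \nod^*$ must belong to $\Ske(\nod^*)$. The recursive clause of the skeleton keeps only $\Ske(F_1)$ and $\Ske(F_n)$ when decomposing an internal node $\nod = \tree{F_1} \cdots \tree{F_n}$, so $\Ske(\nod^*)$, viewed as a tree, has branching degree at most $2$. Since its depth is bounded by the depth of $F$, which is at most $d$, it is a (partial) binary tree of depth $d$ and therefore contains at most $2^d$ leaves.

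Putting the two estimates together yields $N_d \defined 3(d+1) \cdot 2^d$, which is clearly computable from $d$. I do not anticipate a substantive obstacle: the argument is a direct combinatorial count based on the definitions of \kl{observation} and \kl{skeleton}. The only point requiring a moment of care is to confirm that the skeleton has branching degree at most $2$ regardless of the arity $n$ of the original node, which follows immediately from the recursive clause keeping only the leftmost and rightmost children.
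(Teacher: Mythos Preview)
Your proposal is correct and follows essentially the same approach as the paper's proof: bound the number of nodes observing $\partitionskel(x)$ (you get $3(d+1)$, the paper gets $3d$, a cosmetic difference in how ancestors are counted), then bound the leaves in each skeleton by $2^d$ via the binary-tree observation, and multiply. There is nothing to add.
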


It is a routine check that for every fixed $d$,
one can define the predicate
$\msodep(x,y)$ in $\MSO$ over $\forests{\mu}{d}$ checking whether
$x \dependson y$ or $y \dependson x$, that is the \emph{symmetrised}
version of $x \dependson y$.
We generalize this predicate to tuples $\vec{x} \defined (x_1, \dots, x_k)$
via:
\begin{equation*}
	\intro\msodep(\vec{x}) \defined \left\{
	\begin{array}{ll}
		\top                                                  & \mbox{for } k=0;  \\
		\top  \mbox{ if and only if } x_1 \mbox{ is the root} & \mbox{for } k=1 ; \\
		\bigvee_{i \neq j} \msodep(x_i,x_j)                   & \mbox{otherwise.}
	\end{array}
	\right.
\end{equation*}

Notice that the independence (or dependence) of a tuple of leaves
$\vec{x}$ only depends on the tuple $\partitionskel(x_1),
	\dots, \partitionskel(x_n)$.
The notion of dependent leaves is motivated by the fact
that counting dependent leaves can be done with one variable less,
as shown in \cref{lem:skel:inv-decomp}.

\begin{lemma}
	\label{lem:skel:inv-decomp}
	Let $d \ge 0$, $M$ be a finite monoid,
    $\mu \colon A^* \to M$, $k \geq 1$, and $\psi \in \INV_k$.
    One can effectively build a function $g : (\hat{A})^* \fonc \Rel \in \Ve_{k-1}$
    such that for every $F \in \forests{\mu}{d}$,
	$g(F) = \card (\psi(\vec{x}) \wedge \msodep(\vec{x}))(F)$.
\end{lemma}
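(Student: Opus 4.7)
The plan is to handle the degenerate case $k=1$ directly, and then use inclusion-exclusion combined with the bounded fan-in of the relation $\dependson$ (\cref{rem:skel:bound-depnodes}) to eliminate exactly one free variable from the count.

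For $k=1$, the predicate $\msodep(x_1)$ holds only when $x_1$ is the root of $F$, so it is satisfied by a uniformly bounded number of leaves. Consequently $\card(\psi \wedge \msodep)$ is a bounded $\Nat$-valued function, which can be written as a $\Rel$-linear combination of indicators of regular languages of $(\hat{A})^*$, hence lies in $\Ve_0 \subseteq \Ve_{k-1}$.

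For $k \geq 2$, I would expand
\begin{equation*}
\msodep(\vec{x}) \;=\; \bigvee_{1 \leq i < j \leq k} \bigl(\, x_i \dependson x_j \,\vee\, x_j \dependson x_i \,\bigr),
\end{equation*}
and apply the inclusion-exclusion principle to these finitely many disjuncts. This rewrites $\card(\psi \wedge \msodep)$ as a $\Rel$-linear combination of terms of the form $\card\bigl(\psi \wedge \bigwedge_{(i,j) \in T} D_{i,j}\bigr)$, where $T$ is a nonempty set of ordered pairs and each $D_{i,j}$ asserts "$x_j \dependson x_i$". By linearity, it suffices to show that every such term lies in $\Ve_{k-1}$.

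Fix such a term and pick any $(i_0, j_0) \in T$. By \cref{rem:skel:bound-depnodes}, once $x_{i_0}$ is fixed, there are at most $N_d$ leaves $x_{j_0}$ with $x_{j_0} \dependson x_{i_0}$. Moreover, the relative position of such an $x_{j_0}$ with respect to $x_{i_0}$ — namely the choice of node $\nod$ observing $\partitionskel(x_{i_0})$ that serves as $\partitionskel(x_{j_0})$, together with the index of $x_{j_0}$ among the leaves of $\Ske(\nod)$ — takes one of finitely many "shapes", enumerable from $d$ and $M$. For each shape $p$, one can write an $\MSO$ formula $\mathrm{pos}_p(x,y)$ on $(\hat{A})^*$ expressing "$y$ is the leaf at shape $p$ relative to $x$", using the standard $\MSO$-definability of ancestor and sibling relations in forests of bounded depth. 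Summing over shapes converts the term into a finite $\Rel$-linear combination of counts of $\MSO$ formulas in which $x_{j_0}$ has been eliminated (replaced by an $\MSO$-definable function of $x_{i_0}$), leaving $k-1$ free first-order variables. By \cref{def:Z-poly}, the whole expression lies in $\Ve_{k-1}$, and the construction is effective since $M$, $d$, $N_d$, and the set of shapes are all computable.

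The main obstacle is bookkeeping when several of the conjuncts $D_{i,j}$ in $T$ share the index $j_0$, which forces several shape constraints to hold simultaneously for the same variable; this is handled by a routine case analysis over the $\bigO(1)$ possible joint shape configurations and does not affect the overall structure of the argument. The other conjuncts of $T$ (those not mentioning $j_0$) simply remain as additional $\MSO$-definable constraints on the remaining $k-1$ variables, which is harmless.
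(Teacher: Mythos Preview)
Your argument is correct and rests on the same key fact as the paper (\cref{rem:skel:bound-depnodes}): once some $x_{j_0}$ is forced to depend on $x_{i_0}$, only boundedly many positions are available for it, so that variable can be absorbed into an $\MSO_{k-1}$ formula. The bookkeeping, however, differs in two respects. First, where you run full inclusion--exclusion over all nonempty subsets $T$ of the $k(k-1)$ directed atoms $D_{i,j}$, the paper instead \emph{partitions} the dependent tuples according to the lexicographically first pair $(i,j)$ witnessing $\msodep(x_i,x_j)$, and then does only a three-term inclusion--exclusion per pair (the two directions $x_j \dependson x_i$, $x_i \dependson x_j$, and their overlap); this keeps the number of summands quadratic in $k$ rather than exponential and makes your ``main obstacle'' paragraph unnecessary, since each summand now carries exactly one designated dependency. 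Second, and more notably, the paper eliminates the redundant variable without any shape enumeration: since at most $N_d$ values of $x_j$ can satisfy $x_j \dependson x_i$, one writes
\[
\card\theta \;=\; \sum_{\ell=0}^{N_d} \ell \cdot \card\bigl(\existsexactly{\ell} x_j.\ \theta\bigr),
\]
with $\existsexactly{\ell} x_j.\ \theta \in \MSO_{k-1}$. This sidesteps the verification that your relative-position ``shapes'' are $\MSO$-definable and genuinely partition the dependent leaves --- which they do, but it is the least routine part of your sketch. Your route has the merit of making the positional structure explicit; the paper's is shorter and avoids that case analysis entirely.
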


\begin{definition}
    \label{def:fdep-findep}
    Let $k \ge 1$ and $f \in \Ve_k$,
    thanks to \cref{lem:skel:to-forest}
    and \cref{thm:simon},
    there exists $\mu : A^* \fonc M$, $d \ge 0$, $\delta_i \in \Rel$,
    $\psi_i \in \INV_k$ such that:
    \begin{equation}
        \label{eq:findep}
        \begin{aligned}
            f & = \left(\sum_{i=1}^n \delta_i \card \psi_i(\vec{x}) \right) \circ \repforest               \\
              & = \underbrace{\left(\sum_{i=1}^n \delta_i \card (\psi_i(\vec{x})
              \wedge \msodep(\vec{x})) \right)}_ %
            {\defined \intro\fdep} \circ \repforest
            \\
              &+
            \underbrace{\left(\sum_{i=1}^n \delta_i \card
            (\psi_i(\vec{x}) \wedge \neg\msodep(\vec{x})) \right)}_%
            {\defined \intro\findep} \circ \repforest.
        %    \quad 
        \end{aligned}
    \end{equation}
    We say that $\fdep$ is the \emph{dependent part} of $f$
    and $\findep$ is its \emph{independent part}.
\end{definition}

Thanks to \cref{lem:skel:inv-decomp,prop:precompose},
for every $k \geq 1$ and $f \in \Ve_k$,
$(\fdep \circ \repforest) \in \Ve_{k-1}$
(over $\forests{\mu}{d}$).
Hence, whether the function $f$ belongs to $\Ve_{k-1}$
only depends on its independent part.
We will actually prove that in this case,
$f \in \Ve_{k-1}$  if and only if $\findep = 0$.
For that, we will rely on ``pumping families''
that respect $\repforest$.

\begin{definition}[Pumping family]
	A $(\mu, d)$-\emph{pumping family} of size $k \geq 1$ is given by
	words $\alpha_0, w_1, \alpha_2, \dots, \alpha_{k-1}, w_k, \alpha_k \in
    A^*$,
    together with a family $F^{\vec{X}}$ of forests
    in $\forests{\mu}{d}$,
    such that for all $1 \leq i \leq k$, $w_i \neq \varepsilon$,
    and
    $F^{\vec{X}}$ is a $\mu$-forest
    of 
	$w^{\vec{X}} \defined \alpha_0 \prod_{i = 1}^{k} (w_i)^{X_i} \alpha_i$
    for every 
	$\vec{X} \defined X_1, \dots, X_k \ge 0$.
\end{definition}

\begin{remark}
	A $(\mu,d)$-\kl{pumping family} of size $k$ satisfies that
	$|w^{\vec{X}}| = \Theta(X_1 + \dots + X_k)$,
	and $|F^{\vec{X}}| = \Theta(X_1 + \dots + X_k)$
    since the depth of $F^{\vec{X}}$ is bounded by $d$.
\end{remark}

\begin{lemma}
	\label{lem:skel:indep-iter-zero}
	Let $\findep$ be defined as in \cref{eq:findep}. Then,
	$\findep \neq 0$ if and only if there exists a $(\mu,d)$-\kl{pumping family} of size $k$
	such that $f(F^{\vec{X}})$ is ultimately 
	a $\Rel$-polynomial in $X_1, \dots, X_k$
	with a non-zero coefficient for $X_1 \cdots X_k$.

	Moreover, one can decide whether $\findep = 0$.
\end{lemma}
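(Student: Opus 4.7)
The proof splits along the decomposition $f = (\fdep + \findep) \circ \repforest$ from \cref{def:fdep-findep} and uses the growth bound on $\fdep$ provided by \cref{lem:skel:inv-decomp}.

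For the direction ``pumping family $\Rightarrow$ $\findep \neq 0$'', suppose one has a pumping family $(\alpha_0, w_1, \ldots, w_k, \alpha_k, F^{\vec{X}})$ such that $f(w^{\vec{X}})$ is ultimately a polynomial $P(X_1, \ldots, X_k) \in \Rel[X_1, \ldots, X_k]$ whose coefficient $c$ on the monomial $X_1 \cdots X_k$ is non-zero. By \cref{lem:skel:to-forest}, $f(w^{\vec{X}}) = \fdep(F^{\vec{X}}) + \findep(F^{\vec{X}})$, and \cref{lem:skel:inv-decomp} ensures that $\fdep$ restricted to $\forests{\mu}{d}$ agrees with a function of $\Ve_{k-1}$. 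Consequently, $|\fdep(F^{\vec{X}})| = \bigO(|F^{\vec{X}}|^{k-1}) = \bigO((X_1 + \cdots + X_k)^{k-1})$. Specializing along the diagonal $X_1 = \cdots = X_k = n$, we get $P(n, \ldots, n) = c \cdot n^k + \bigO(n^{k-1})$ while $\fdep(F^{(n, \ldots, n)}) = \bigO(n^{k-1})$; hence $\findep(F^{(n, \ldots, n)}) \neq 0$ for $n$ large enough, forcing $\findep \neq 0$.

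For the converse, assume $\findep \neq 0$, which gives a forest $F_0 \in \forests{\mu}{d}$, an index $i$ and a non-cancelling independent $k$-tuple of leaves $(y_1, \ldots, y_k)$ in $F_0$ satisfying $\psi_i$. By independence, the maximal skeletons $\Ske(\partitionskel(y_j))$ are pairwise non-observing. The plan is to locate, within each $\Ske(\partitionskel(y_j))$, an idempotent factorization node $\nod_j$ whose pumping generates $\Theta(X_j)$ fresh leaves that behave like $y_j$. Because the skeletons do not observe each other, pumping inside $\nod_j$ does not disturb the predicates $\msoprod_m$, $\msoleft_m$, $\msoright_m$ evaluated at the $y_\ell$ for $\ell \neq j$ (the $\mu$-value of every range around $y_\ell$ is preserved, as additional copies of an idempotent factor in a non-observing region multiply to the same monoid element), and similarly the pumped clones of $y_j$ satisfy $\psi_i$ jointly with their companions. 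This yields $\Theta(X_1 \cdots X_k)$ independent $k$-tuples satisfying $\psi_i$ in the resulting family $F^{\vec{X}}$, so the ultimate polynomial expressing $\findep(F^{\vec{X}})$ has a non-zero coefficient on $X_1 \cdots X_k$, and the same is true of $f(w^{\vec{X}})$ by the degree bound on $\fdep$.

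Decidability of $\findep \equiv 0$ follows from \cref{cor:equivalence}: by construction $\findep$ is a $\Rel$-linear combination of counts of $\MSO$ formulas over $\hat{A}^*$, hence a $\Rel$-polyregular function, and equivalence with the zero function is decidable among $\Rel$-polyregular functions. The main obstacle is the Ramsey-style construction in the converse direction: one has to locate, inside each of the $k$ pairwise non-observing skeletons, a suitable idempotent factorization node, and argue that joint pumping simultaneously preserves the satisfaction of the quantifier-free $\INV$-formula $\psi_i$ and the pairwise independence of the counted tuples.
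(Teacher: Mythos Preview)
Your forward direction and your decidability argument (via \cref{cor:equivalence}) are correct; the latter is a clean shortcut compared to the paper's approach, which instead computes explicit coefficients in a type decomposition and checks whether they all vanish.

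The converse has a real gap. You claim that $\findep \neq 0$ yields ``an index $i$ and a non-cancelling independent $k$-tuple $(y_1,\dots,y_k)$ satisfying $\psi_i$'', but ``non-cancelling'' is undefined and its existence is precisely the difficulty. Recall that $\findep = \sum_i \delta_i\,\card(\psi_i \wedge \neg\msodep)$ is a \emph{signed} combination: when you pump your chosen skeletons, you create $\Theta(X_1\cdots X_k)$ independent tuples not just for $\psi_i$ but for \emph{every} $\psi_j$ whose $\INV$-pattern is compatible with those same skeletons and the surrounding monoid values. Those contributions carry signs $\delta_j$ and may cancel exactly in the $X_1\cdots X_k$ coefficient; nothing in your argument rules this out. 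The paper resolves this by grouping independent tuples according to their \emph{type} (the ordered list of skeletons $\Ske(\partitionskel(y_j))$ together with the monoid elements separating them), proving that $\findep = \sum_T \lambda_T\,\card{\skelhastype{T}}$ with computable $\lambda_T \in \Rel$, and then showing that for each type $T$ there is a pumping family on which $\card{\skelhastype{T'}}$ contributes to the $X_1\cdots X_k$ coefficient if and only if $T' = T$. This makes the family $\{\card{\skelhastype{T}}\}_T$ linearly independent, so $\findep \neq 0$ forces some $\lambda_T \neq 0$, and the pumping family for that $T$ witnesses exactly $\lambda_T$ in the top coefficient. This type decomposition is the missing ingredient that would make your ``non-cancelling'' intuition rigorous. (As a minor aside: one does not search for an idempotent node \emph{inside} $\Ske(\partitionskel(y_j))$; rather, $\partitionskel(y_j)$ is itself a non-extremal child of an idempotent node by maximality, and one replaces it by $X_j$ sibling copies of its skeleton.)
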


Now, we are almost ready to conclude the proof of \cref{lem:skel:semantic-cond}.
The only difficulty left is
handled by the following technical lemma which enables
to lift a bound on the asymptotic growth of polynomials
to a bound on their respective degrees. It is also
reused in \cref{sec:aperiodic}.

\begin{lemma}
	\label{lem:skel:lemmapoly}
	Let $P,Q$ be two polynomials in $\mathbb{R}[X_1, \dots, X_n]$.
    If
	$|P| = \bigO(|Q|)$,
	then $\deg (P) \leq \deg (Q)$.
\end{lemma}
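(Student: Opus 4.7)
The plan is to reduce a multivariate inequality to a univariate one by evaluating both polynomials along a well-chosen ray $(ta_1, \dots, ta_n)$ with $t \in \Nat$ large. Along such a ray, both $P$ and $Q$ become univariate polynomials in $t$ whose degrees coincide with $\deg(P)$ and $\deg(Q)$ respectively, \emph{provided} the highest-degree homogeneous components of $P$ and $Q$ do not vanish at $(a_1, \dots, a_n)$.

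More precisely, I would write $P = \sum_{j=0}^{d} P_j$ and $Q = \sum_{j=0}^{d'} Q_j$ as sums of homogeneous components, where $d \defined \deg(P)$, $d' \defined \deg(Q)$, $P_d \neq 0$ and $Q_{d'} \neq 0$. The product $P_d \cdot Q_{d'}$ is then a non-zero polynomial in $\Rea[X_1, \dots, X_n]$. A routine induction on the number of variables (a non-zero univariate polynomial has finitely many roots, and in the multi-variable case one views it as a polynomial in $X_n$ with non-zero coefficients lying in $\Rea[X_1, \dots, X_{n-1}]$) shows that any non-zero polynomial in $n$ variables admits a point $a = (a_1, \dots, a_n) \in \Nat_{> 0}^n$ with $a_i \geq N$ at which it is non-zero. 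Apply this to $P_d \cdot Q_{d'}$ to obtain some such $a$ with $P_d(a) \neq 0$ and $Q_{d'}(a) \neq 0$.

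Now evaluate along the ray: using homogeneity, $P(ta) = \sum_{j=0}^{d} P_j(a)\, t^j$ and $Q(ta) = \sum_{j=0}^{d'} Q_{j}(a)\, t^{j}$. These are univariate polynomials in $t$ of exact degrees $d$ and $d'$, since $P_d(a)$ and $Q_{d'}(a)$ are non-zero. For $t$ sufficiently large, the coordinates $ta_i$ exceed the threshold in the definition of $|P| = \bigO(|Q|)$, so we obtain a constant $C$ with $|P(ta)| \leq C\, |Q(ta)|$ for all such $t$. Taking $t \to \infty$, the left-hand side grows like $|P_d(a)|\, t^{d}$ and the right-hand side like $C\, |Q_{d'}(a)|\, t^{d'}$, which forces $d \leq d'$, i.e. $\deg(P) \leq \deg(Q)$.

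The only genuine point of care is ensuring the ray direction avoids both top-degree hypersurfaces simultaneously, but this is handled uniformly by applying the density argument to the product $P_d \cdot Q_{d'}$; everything else is an asymptotic expansion along a line.
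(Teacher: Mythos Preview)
Your proof is correct and follows essentially the same ray-restriction argument as the paper's proof: pick a direction $a$ on which the top homogeneous part of $P$ does not vanish, restrict both polynomials to the ray $t \mapsto ta$, and compare univariate degrees. The only minor difference is that you additionally arrange $Q_{d'}(a) \neq 0$ by applying the non-vanishing lemma to the product $P_d \cdot Q_{d'}$, whereas the paper only ensures $P_d(a) \neq 0$ and simply uses that $Q(ta)$ has degree at most $\deg(Q)$.
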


\begin{proof}[Proof of \cref{lem:skel:semantic-cond}]
	The only non-trivial implication is
	\cref{it:cond:tuple} $\implies$ \cref{it:cond:class}.
	Let $f \in \Ve_k$ satisfying the conditions of \cref{it:cond:tuple}.
	We can decompose this function following \cref{eq:findep}.
	As observed above, we only need to show that
	$\findep = 0$.

	Consider a \kl{pumping family}
	$(w^{\vec{X}}, F^{\vec{X}})$
	of size $k$,
	we have:
	\begin{equation*}
		|\findep(F^{\vec{X}})| %= |(f' - \fdep)(F^{\vec{X}})|
		= |f(w^{\vec{X}}) - \fdep(F^{\vec{X}})|
		= \bigO(|X_1 + \dots + X_k|^{k-1}).
	\end{equation*}
	Assume by contradiction that $\findep \neq 0$,
    \cref{lem:skel:indep-iter-zero}
	provides us with
	a \kl{pumping family} such that
	$\findep(F^{\vec{X}})$ is ultimately a polynomial
	with non-zero coefficient for $X_1 \cdots X_k$. 
	As this polynomial is asymptotically bounded
    by $(X_1 + \dots + X_k)^{k-1}$,
	\cref{lem:skel:lemmapoly} yields a contradiction.

	The constructions of $\repforest$, $\fdep$, and $\findep$ are effective,
	therefore so is our procedure. Moreover, one can decide
	whether $\findep = 0$ thanks to \cref{lem:skel:indep-iter-zero}.
\end{proof}

\section{Residual Transducers}
\label{sec:residual}

In this section, we provide a canonical object associated
to any $\Rel$-polyregular function, named its \emph{residual transducer}.
Our construction is effective, and the algorithm heavily relies on \cref{thm:skel:pebblemin}.
This new object has its own interest, and it will also
be used in \cref{sec:aperiodic} to decide \emph{first-order definability}
of $\Rel$-polyregular functions, that will extend
first-order definability for regular languages (see e.g. \cite{perrin1986first}
for an introduction).

\subsection{Residuals of a function}

We first introduce the notion of \kl{residual} of a
function $f \colon A^* \to \Rel$ under a word $u \in A^*$.

\begin{definition}[Residual] Given $f \colon A^* \to \Rel$ and $u \in A^*$,
	we define the function $\push{f}{u} \colon A^* \to \Rel, w \mapsto f(uw)$.
    We let $\Res(f) \defined \setof{\push{f}{u}}{u \in A^*}$ be the
    set of \emph{residuals} of $f$.
\end{definition}

\begin{example}
\label{ex:residuals-square}
The residuals of the function $w \mapsto |w|^2$
are the functions $w \mapsto |w|^2 + 2n |w| + n^2$ for $n \ge 0$.
\end{example}

\begin{example}
    \label{ex:residual-2-n}
    The residuals of the function $w \mapsto (-2)^{|w|}$
    are 
    the functions 
    $w \mapsto (-2)^{n + |w|}$ for $n \geq 0$.
\end{example}

It is easy to see that $u \mapsto \push{f}{u}$
defines a monoid action of $A^*$ over $A^* \fonc \Rel$.
Let us observe that this action
(effectively) preserves the classes of functions $\Ve_k$.

\begin{claim} \label{claim:push-pk} Let $k \ge 0$, $f \in \Ve_k$
	and $u \in A^*$. Then $\push{f}{u} \in \Ve_k$
	and this result is effective.
\end{claim}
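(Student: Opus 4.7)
The plan is to reduce to the generating case $f = \card \varphi$ with $\varphi \in \MSO_\ell$, $\ell \le k$, by linearity of the operation $g \mapsto \push{g}{u}$ and of the class $\Ve_k$. It then suffices to exhibit $w \mapsto \card \varphi(uw)$ as a $\Rel$-linear combination of functions $\card \psi$ where $\psi$ has at most $\ell$ free first-order variables.

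The idea is to classify each valuation $(i_1, \dots, i_\ell) \in [1{:}|uw|]^\ell$ making $uw \models \varphi$ according to which coordinates fall inside $u$ (positions $\le |u|$) and which fall in the $w$-part (positions $> |u|$, shifted by $|u|$). For each subset $S \subseteq [1{:}\ell]$ and each assignment $\sigma \colon S \to [1{:}|u|]$, I will construct effectively an $\MSO$ formula $\psi_{u,S,\sigma}(\vec{y}) \in \MSO_{\ell - |S|}$ over $A$ satisfying
\begin{equation*}
w \models \psi_{u,S,\sigma}(\vec{y}) \iff uw \models \varphi(\vec{x}),
\end{equation*}
where $x_j = \sigma(j)$ for $j \in S$ and $x_j = |u| + y_j$ for $j \notin S$. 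Summing over the finitely many pairs $(S,\sigma)$ yields
\begin{equation*}
\push{\card \varphi}{u}(w) \;=\; \sum_{S \subseteq [1{:}\ell]}\; \sum_{\sigma \colon S \to [1{:}|u|]} \card \psi_{u,S,\sigma}(w),
\end{equation*}
which is a $\Rel$-linear combination of functions in $\Ve_{\ell - |S|} \subseteq \Ve_k$, and therefore lies in $\Ve_k$.

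The only real work is the effective construction of $\psi_{u,S,\sigma}$, which is a standard $\MSO$ composition argument specialized to a fixed finite prefix, done by induction on the structure of $\varphi$. A first-order quantifier $\exists x$ is rewritten as a finite disjunction over the $|u|$ possible positions of $x$ in $u$ together with an existential quantifier over positions of $w$, via the shift $x \mapsto |u|+x$. A set quantifier $\exists X$ is rewritten as an enumeration over the $2^{|u|}$ possible intersections $X \cap [1{:}|u|]$ together with a set quantifier over $X \cap (|u|,|uw|]$, the latter encoded as a set variable in $w$. Atomic predicates $a(z)$ and $z < z'$ are handled by case analysis on whether each variable lies in $u$ (where the answer is determined by the fixed letters of $u$) or in $w$ (where it translates directly into a predicate on $w$). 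Since $u$ is fixed, this recursion terminates and produces a concrete $\MSO$ formula; the bulk of the writing is bookkeeping, with no conceptual obstacle. A shorter, less constructive alternative goes through the rational-series side: from a $\Rel$-linear representation $(I,\mu,F)$ of $f$ one reads off a representation $(I\mu(u),\mu,F)$ of $\push{f}{u}$, so $\push{f}{u}$ is a $\Rel$-rational series with $|\push{f}{u}(w)| = |f(uw)| = \bigO(|w|^k)$, and the Free Variable Minimization theorem (\cref{thm:skel:pebblemin}) then places $\push{f}{u}$ in $\Ve_k$ effectively.
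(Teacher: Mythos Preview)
Your proposal is correct. The paper's proof is a one-liner: it observes that the map $w \mapsto uw$ is a regular function and invokes \cref{prop:precompose} (closure of $\Ve_k$ under precomposition by regular functions). Your first approach is essentially what the proof of \cref{prop:precompose} does, specialized to this particularly simple regular function and unfolded by hand; so the underlying idea is the same, but the paper packages it into the general closure property rather than redoing the $\MSO$ composition bookkeeping. Your alternative route via a $\Rel$-linear representation together with \cref{theo:rational-polygrowth} and \cref{thm:skel:pebblemin} is a genuinely different argument; it is valid and non-circular (those results are established before \cref{claim:push-pk} is needed), but it is heavier machinery and the effectiveness it yields passes through the enumeration-based translation of \cref{theo:rational-polygrowth}, which is less direct than the syntactic construction.
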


\begin{remark}[{\cite[Corollary 5.4 p 14]{berstel2011noncommutative}}]
    \label{remark:finite-dimension}
    Let $f \colon A^* \to \Rel$, this function
    is a $\Rel$-rational series
    if and only if
	$\Vect{\Rel}(\Res(f))$
    has finite dimension.
\end{remark}

Note that if $L \subseteq A^*$ and $u \in A^*$, then $\push{\car{L}}{u}$
is the characteristic function of the well-known
residual language $u^{-1}L \defined \setof{w \in A^*}{uw \in L}$.
In particular, the set $\setof{\push{\car{L}}{u}}{u \in A^*}$ is finite
if and only if $L$ is regular.
However, given  $f \in \Ve_k$ for $k \ge 1$, the set $\setof{\push{f}{u}}{u \in A^*}$
is not finite in general (see e.g. \cref{ex:residuals-square}).
We now intend to show that this set is still finite, up to an
identification of the functions whose difference is in  $\Ve_{k-1}$.

\begin{definition}[Growth equivalence] \label{def:equivalence}
	Given $k \ge -1$ and $f,g  \colon A^* \fonc \Rel$,
	we let $f \polysim{k} g$ if and only if $f-g \in \Ve_k$
\end{definition}

Let us observe that $\polysim{k}$  is an equivalence relation,
that is compatible with external $\Rel$-products, sums, $\cauchy$ and $\push{}{}$.

\begin{claim} \label{claim:properties-resi}
	For all $k \ge {-}1$,  $\polysim{k}$ is an equivalence relation
	and the following holds for all $u \in A^*$, $\delta \in \Rel$,
    and $f,g \colon A^* \fonc \Rel$:
	\begin{itemize}
	\item if $f \polysim{k} g$, then $\push{f}{u} \polysim{k} \push{g}{u}$;
	\item $\push{(\car{L} \cauchy f)}{u}
		\polysim{k} (\push{\car{L}}{u}) \cauchy f$ for $L \subseteq A^*$;
    \item if $f \polysim{k} g$ and $f' \polysim{k} g'$ then
        $f + f' \polysim{k} g + g'$;
    \item if $f \polysim{k} g$ then $\delta \cdot f \polysim{k} \delta \cdot g$.
	\end{itemize}
\end{claim}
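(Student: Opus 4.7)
The plan is to reduce all four bullets to elementary algebraic manipulations, relying on two structural facts about $\Ve_k$: it is a $\Rel$-linear subspace of $\Rel^{A^*}$, and it is closed under the residual action $\push{}{u}$ by \cref{claim:push-pk}. Once these are isolated, the four bullets become largely bookkeeping, with the only computational content appearing in the Cauchy-product case.

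First, I would record the equivalence-relation properties of $\polysim{k}$: reflexivity holds because $0 \in \Ve_k$, symmetry because $\Ve_k$ is closed under negation, and transitivity because $\Ve_k$ is closed under addition. The sum and scalar-multiplication bullets (the third and fourth) are then immediate from the identities $(f + f') - (g + g') = (f - g) + (f' - g')$ and $\delta f - \delta g = \delta (f - g)$, both of which remain in $\Ve_k$ by the same subspace property. For the first bullet, I would note that the residual operation is $\Rel$-linear in its first argument, so $\push{f}{u} - \push{g}{u} = \push{(f - g)}{u}$, and then \cref{claim:push-pk} applied to $f - g \in \Ve_k$ yields the desired conclusion.

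The interesting case is the second bullet, for which I would compute the difference directly from the definition of the Cauchy product. Unfolding gives
\[
    \push{(\car{L} \cauchy f)}{u}(w) = \sum_{xy = uw} \car{L}(x) \, f(y),
\]
and the factorizations $xy = uw$ split according to whether $|x| \leq |u|$ or $|x| > |u|$. The ``large'' factorizations are in bijection with decompositions $w = zw'$ via the substitution $x = uz$, and their combined contribution is precisely $((\push{\car{L}}{u}) \cauchy f)(w)$. The ``small'' factorizations are indexed by the finitely many splittings $u = xu'$ with $x \in L$, and contribute $\sum_{u = xu',\, x \in L} \push{f}{u'}(w)$. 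Because this correction is a finite sum in which each term is a residual of $f$, it lies in $\Ve_k$ whenever $f$ does, again by \cref{claim:push-pk}.

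The only real obstacle I anticipate is cleanly setting up the two-case split of factorizations and carefully tracking which prefixes $u'$ appear in the finite correction term; once this is done, every remaining verification is a routine linearity argument. (Implicitly, bullet two is of interest only when $f$ belongs to some $\Ve_k$ with $k \geq 0$, since otherwise there is no reason for the finite residual sum to be trivial modulo $\Ve_k$; this matches the only use of the claim in \cref{sec:residual}.)
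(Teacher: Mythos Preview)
Your proposal is correct. For the equivalence relation and bullets one, three, and four, you match the paper verbatim. For bullet two the paper proceeds by induction on $|u|$, using the one-letter identity
\[
\push{(\car{L} \cauchy f)}{a} \;=\; (\push{\car{L}}{a}) \cauchy f \;+\; \car{L}(\movi)\,(\push{f}{a})
\]
and then iterating; you instead do the computation in one shot, and your correction term $\sum_{u = xu',\,x\in L}\push{f}{u'}$ is exactly what that induction accumulates. Both routes are equally elementary.

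One minor bookkeeping caution: with your split at $|x|\le|u|$ versus $|x|>|u|$, the ``large'' side actually misses the $z=\movi$ term of the Cauchy product, so it equals $((\push{\car{L}}{u})\cauchy f)(w)-\car{L}(u)f(w)$ rather than the full product. This is harmless---the missing term reappears on the ``small'' side as the $u'=\movi$ summand and cancels in the final difference---but splitting at $|x|<|u|$ versus $|x|\ge|u|$ makes the accounting match your prose exactly. Your closing observation that bullet two tacitly requires $f\in\Ve_k$ is correct and matches the paper's own invocation of \cref{claim:push-pk} in its proof.
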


%	\begin{proof} If $|f(w)-g(w)| = \bigO(|w|^{k-1})$, then $|f(aw) - g(aw)| = \bigO(|aw|^{k-1}) = \bigO(|w|^{k-1})$.
%	\end{proof}
%	%

By combining these results with
the characterization of $\Ve$ 
via these combinators in \cref{theo:rational-polygrowth},
we can show that a function $f \in \Ve_k$ has a finite number of residuals,
up to $ \polysim{k-1}$ identification.

\begin{lemma}[Finite residuals]
	\label{lem:resifini}
	Let $k \geq 0$ and $f \in \Ve_{k}$,
	then the quotient set $\Res(f) / \polysim{k-1}$ is finite.
\end{lemma}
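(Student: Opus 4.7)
My plan is to proceed by induction on $k \ge 0$.

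For the base case $k = 0$, by Example~\ref{ex:Ve0} one can write $f = \sum_{j} \delta_j \car{L_j}$ as a $\Rel$-linear combination of indicator functions of regular languages. For any $u \in A^*$, $\push{f}{u} = \sum_j \delta_j \car{u^{-1}L_j}$, so the residual is determined by the tuple $(u^{-1}L_j)_j$ of language residuals. Each $L_j$ being regular, it has finitely many residual languages, so this tuple takes finitely many values. Since $\Ve_{-1} = \{0\}$, the relation $\polysim{-1}$ is equality, and $\Res(f) / \polysim{-1} = \Res(f)$ is finite.

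For the inductive step, I assume the result at level $k-1$. Using Proposition~\ref{lem:vks:inductcauchy}, I decompose $f = \sum_i \delta_i \car{L_i} \cauchy g_i$ with each $L_i \subseteq A^*$ regular and each $g_i \in \Ve_{k-1}$. Exploiting the compatibility of $\polysim{k-1}$ with sums and external $\Rel$-products, together with the Cauchy-product identity from Claim~\ref{claim:properties-resi} applied to each $g_i \in \Ve_{k-1}$, and using $\push{\car{L_i}}{u} = \car{u^{-1}L_i}$, I obtain
\begin{equation*}
\push{f}{u} \;=\; \sum_i \delta_i \, \push{(\car{L_i} \cauchy g_i)}{u} \;\polysim{k-1}\; \sum_i \delta_i \, \car{u^{-1}L_i} \cauchy g_i.
\end{equation*}
As each $L_i$ is regular, the tuple $(u^{-1}L_i)_i$ ranges over a finite set when $u$ varies, so the right-hand side takes only finitely many values, giving finitely many classes of $\push{f}{u}$ modulo $\polysim{k-1}$.

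The only technical point to check carefully is that the Cauchy-product identity from Claim~\ref{claim:properties-resi} does hold at level $k-1$ for $g_i \in \Ve_{k-1}$: the explicit error term arising from the prefix decomposition of $u$ is a finite sum $\sum_{u = u_1 u_2,\, u_1 \in L_i} \push{g_i}{u_2}$, each of whose summands lies in $\Ve_{k-1}$ by Claim~\ref{claim:push-pk}, so the whole sum lies in $\Ve_{k-1}$. Beyond this bookkeeping, I do not anticipate any serious obstacle: Proposition~\ref{lem:vks:inductcauchy} provides exactly the inductive decomposition that matches the structure of the equivalence $\polysim{k-1}$, and the finiteness of residual languages for regular languages closes the argument.
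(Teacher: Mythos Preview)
Your proof is correct and follows essentially the same approach as the paper: reduce via linearity to the generators $\car{L}\cauchy g$ from Proposition~\ref{lem:vks:inductcauchy}, then apply Claim~\ref{claim:properties-resi} to get $\push{(\car{L}\cauchy g)}{u}\polysim{k-1}\car{u^{-1}L}\cauchy g$ and conclude by finiteness of residual languages. One minor remark: you frame the argument as an induction on $k$ but never actually invoke the induction hypothesis, so a plain case distinction $k=0$ versus $k\ge 1$ (as in the paper) would be cleaner.
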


\begin{remark}
    \Cref{ex:residual-2-n}
    exhibits
    a $\Rel$-rational series
    $f$ such that $\Res(f) / \polysim{k}$
    is infinite for all $k \geq 0$.
\end{remark}

Finally, we note that $\polysim{k}$ is decidable in $\Ve$. 

\begin{claim}[Decidability] Given $k \ge {-}1$
and $f,g \in \Ve$, one can decide whether
$f \polysim{k} g$ holds.
\end{claim}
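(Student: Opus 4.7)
The plan is to reduce the decision of $f \polysim{k} g$ to a membership problem that has already been shown decidable in the paper. By \cref{def:equivalence}, $f \polysim{k} g$ holds precisely when $f - g \in \Ve_k$. Since $\Ve$ is closed under $\Rel$-linear combinations (being itself defined as a space of such combinations in \cref{def:Z-poly}), and both $f$ and $g$ are given as elements of $\Ve$, we can effectively compute a representation of $h \defined f - g$ as an element of $\Ve$ (by concatenating the two linear combinations).

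Once $h \in \Ve$ is in hand, the case analysis is short. For $k \ge 0$, we invoke the effective decidability statement of \cref{thm:skel:pebblemin} (Free Variable Minimization), which allows us to decide whether $h \in \Ve_k$. For $k = -1$, we recall that $\Ve_{-1} = \{0\}$, so the question reduces to testing whether $h$ is the zero function; this is handled by \cref{cor:equivalence}, which states the decidability of equivalence for $\Rel$-polyregular functions (itself a consequence of the decidability of equivalence for $\Rel$-rational series combined with \cref{theo:rational-polygrowth}).

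There is no real obstacle here: the entire content of the claim is a bookkeeping consequence of prior machinery. The only thing worth double-checking is that the representation manipulations (constructing $f - g$ from the representations of $f$ and $g$, and feeding the result into the algorithms of \cref{thm:skel:pebblemin} and \cref{cor:equivalence}) are indeed effective, but this is immediate from the fact that $\Ve$ is presented as $\Rel$-linear combinations of counting formulas, which can be mechanically concatenated.
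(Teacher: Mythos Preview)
Your proof is correct and matches the paper's own argument essentially step for step: reduce to membership of $f-g$ in $\Ve_k$, invoke \cref{thm:skel:pebblemin} for $k \ge 0$, and invoke \cref{cor:equivalence} for $k=-1$. The only cosmetic difference is that the paper phrases the $k \ge 0$ case via the growth condition $|(f-g)(w)| = \bigO(|w|^k)$ before citing \cref{thm:skel:pebblemin}, whereas you cite the membership formulation directly.
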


\begin{proof} Let $f,g \in \Ve$.
For $k \ge 0$, $f \polysim{k} g$ if and only if $|(f-g)(w)| = \bigO(|w|^k)$
and this property is decidable by \cref{thm:skel:pebblemin}.
For $k=-1$, we have $f \polysim{k} g$ if and only if $f=g$,
which is decidable by \cref{cor:equivalence}.
\end{proof}

%\begin{corollary} \label{lem:resi-fini}
%	Let $k \ge 0$ and $f \in \Ve_{k}$, then $\Res(f) / \polysim{k-1}$ is finite.
%\end{corollary}
%\begin{proof}
%	Thanks to \cref{lem:vks:inductcauchy},
%	$f$ can be assumed of the form
%	$f =
%		\sum_{p = 1}^n \alpha_p (\mathbf{1}_{\phi_p} \cauchy f_p)$,
%	where $\alpha_p \in \Rel$,
%	$\phi_p \in \MSO$, and $f_p \in \Ve_{k-1}$.
%	Applying \cref{lem:vks:resifiniform},
%	$\Res(\mathbf{1}_{\phi_p} \cauchy f_p) / \polysim{k-1}$ is finite for all $1 \leq p \leq n$.
%	Since
%	$\Res(f) \subseteq \sum_{p = 1}^{n}
%		\alpha_p \Res(
%		\mathbf{1}_{\phi_p} \cauchy f_p    )$,
%	$\Res(f) / \polysim{k-1}$ is also finite.
%\end{proof}

\subsection{Residual transducers}

Now we intend to show that a function $f \in \Ve_k$ can
effectively be computed by a canonical machine,
whose states are based on the finite set $\Res(f)/\polysim{k-1}$,
in the spirit of the residual automaton of a regular language.
First, let us introduce an abstract notion of transducer
which can call functions on suffixes of its input
(this definition is inspired by the \emph{marble transducers} of
\cite{doueneau2020register}, that call functions on prefixes).

\begin{definition}[$\Oras$-transducer]
	Let $k \ge 0$ and $\Oras $ be a fixed subset of the functions
	$ A^* \fonc \Rel$. A \emph{$\Oras$-transducer}
	$\trans = (A, Q,q_0, \delta,\Oras, \lambda,F)$ consists of:
	\item
	\begin{itemize}
		\item a finite input alphabet $A$;
		\item a finite set of states $Q$ with $q_0 \in Q$ initial;
		\item a transition function $\delta \colon Q \times A \fonc Q$;
		\item a labelling function $\lambda \colon Q \times A \fonc \Oras$;
		\item an output function $F \colon Q \fonc \Rel$.
	\end{itemize}
\end{definition}

Given $q \in Q$, we define by induction on $w \in A^*$ the value
$\trans_q(w) \in \Rel$. For $w = \movi$, we let $\trans_q(w) \defined F(q)$.
Otherwise let $\trans_q(aw) \defined \trans_{\delta(q,a)}(w) + \lambda(q,a)(w)$.
Finally, the function computed by the $\Oras$-transducer $\trans$ is defined
as $\trans_{q_0} \colon A^* \fonc \Rel$.
Observe that all the functions $\trans_q$ are total.

Let us recall the standard definition of $\delta^*$ via
$\delta^*(q, ua) \defined \delta(\delta^*(q,u),a)$
and $\delta^*(q,\movi) = q$. Using this notation,
a simple induction shows that
$\trans_{q}(w) = \sum_{uav = w} \lambda(\delta^*(q,u), a) (v) + F(\delta^*(q,w))$.
As a consequence, $\Oras $-transducers
are closely related to Cauchy products.

%\al{ Not sure that this explanation is helpful.

%Intuitively, $\trans$ is an enhanced finite deterministic automaton
%with transition function $\delta$. 
%When in state $q \in Q$ and reading letter $a \in A$, it
%calls the function $\lambda(q,a)$ in the suffix of the input
%starting in the current position.
%In \cref{fig:shape-run-H} the run
%of the automaton is depicted in red,
%and the functions called on suffixes are in blue.
%The output is the sum of the blue values.
%}

%\input{parts/fig-run-H-transducer.tex}

\begin{example}
	\label{ex:res:1aA-bad}
	We have depicted in \cref{fig:bad-aA}
	a $\Ve_{-1}$-transducer and a $\Ve_{0}$-transducer computing the
	function $\car{a A^*}$ for $A = \{a,b\}$. The first one can easily be
	identified with the minimal automaton of $\car{a A^*}$ (up to considering
	that a state is final if it outputs $1$). The second one has a single state and
	it ``hides'' its computation into the calls to $\Ve_{0}$.
	One can check e.g. that $1 = \car{a A^*}(aab) = (1 - \car{a A^*}(ab)) + (1 - \car{a A^*}(b)) - \car{a A^*}(\movi) + 0$.
\end{example}

\begin{figure}[h!]
	\centering
	\begin{subfigure}[b]{0.45\textwidth}
		\centering
		\begin{tikzpicture}{scale=1}
			\node [state, initial, initial text=, inner sep=3pt, minimum size=0pt,
				accepting above, accepting text = $0$ ]
			(q0) at (0,0) {\small $q_0$};
			\node[state ,  inner sep=3pt, minimum size=0pt, accepting right,
				accepting text = $1$ ] at (2,0.5) (q1) {$q_1$};
			\node[state, initial text=, inner sep=3pt, minimum size=0pt,
				 accepting right, accepting text = $0$ ]%
			at (2,-0.5) (q2) {$q_2$};
			\draw[->] (q0) edge[bend left = 30] node[above]{$a~|~0$} (q1);
			\draw[->] (q0) edge[bend left = -30] node[below]{$b~|~0$} (q2);
			\draw[->]      (q2) edge[loop below] node{${a,b~|~0}$} (q2);
			\draw[->]      (q1) edge[loop above] node{${a,b~|~0}$} (q1);
		\end{tikzpicture}
		\caption{A $\Ve_{-1}$-transducer computing $\car{aA^*}$.}
	\end{subfigure}
	\begin{subfigure}[b]{0.45\textwidth}
		\centering
		\begin{tikzpicture}{scale=1}
			\node [state, initial, initial text=, inner sep=3pt, minimum size=0pt,
				accepting right, accepting text = $0$ ]
			(q0) at (0,0) {\small $q_0$};
			\draw[->]      (q0) edge[loop above] node{${a~|~1-\car{aA^*}}$} (q0);
			\draw[->]      (q0) edge[loop below] node{${b~|~-\car{aA^*}}$} (q0);
		\end{tikzpicture}
		\caption{A $\Ve_0$-transducer computing $\car{aA^*}$}
	\end{subfigure}
	\caption{\label{fig:bad-aA} Two transducers computing $\car{aA^*}$.}
\end{figure}%

The reader may guess that every function  $f \in \Ve_k$ can
effectively be computed by a  $\Ve_{k-1}$-transducer.
We provide a stronger result and show that $f$ can be computed by
some specific  $\Ve_{k-1}$-transducer whose transition function is
uniquely defined by $\Res(f)/\polysim{k-1}$.

\begin{definition} Let $k \ge 0$,
	let $\trans = (A, Q,q_0, \delta, \Oras, \lambda, F)$
	be a $\Ve_{k-1}$-transducer and $f \colon A^* \to \Rel$.
	We say that $\trans$ is a \emph{$k$-residual transducer}
	of $f$ if the following conditions hold:
	\begin{itemize}
		\item $\trans$ computes $f$;
		\item $Q = \Res(f)/\polysim{k-1}$;
        \item for all $w \in A^*$, 
            $\push{f}{w} \in \delta^*(q_0,w)$;
		\item $\lambda(Q,A) \subseteq \Vect{\Rel}(\Res(f)) \cap \Ve_{k-1}$.
	\end{itemize}
\end{definition}

Given a regular language $L$, the $0$-residual transducer of its
indicator function $\mathbf{1}_{L}$ can easily be identified with the
\emph{minimal automaton} of the language $L$, like in \cref{ex:res:1aA-bad}. However,
for $k \ge 1$, the \kl{$k$-residual transducer} of $f \in \Ve_k$ may not be unique.
More precisely, two $k$-residual transducers share the same
underlying automaton $(A,Q, \delta, \lambda)$, but the labels
$\lambda$ of the transitions may not be the same.

\begin{example} The $\Ve_{-1}$-transducer (resp. $\Ve_0$-transducer)
	from \cref{fig:bad-aA} is a $0$-residual transducer
	(resp. $1$-residual transducer) of $\car{aA^*}$.
	Let us check it for the $1$-residual transducer.
	First note that $\push{\car{aA^*}}{b} \polysim{0} \push{\car{aA^*}}{a} \polysim{0} \car{aA^*}$,
	hence $|\Res(\car{aA^*}) / \polysim{0}|=1$. Thus a $1$-residual transducer
	of $\car{aA^*}$ has exactly one state $q_0$.
	Furthermore the labels of the transitions
	of our transducer belong to $\lambda(Q,A) \subseteq \Vect{\Rel}(\Res_f(a))$
	since $1 - \car{aA^*} = (\push{\car{aA^*}}{a}) - \car{aA^*}$.
\end{example}

\begin{example} Let $A \defined \{a,b\}$.
	The function $f \colon w \mapsto |w|_a \times |w|_b \in \Ve_2$
	has a single residual up to $\polysim{1}$-equivalence.
	A $2$-residual transducer of $f$ is depicted
	in \cref{fig:2-res:ab}.
\end{example}

\begin{example}
	Let $A \defined \{a\}$. The function $g \colon w \mapsto (-1)^{|w|} \times |w| \in \Ve_1$
	has two residuals up to $\polysim{0}$-equivalence.
	A $1$-residual transducer of $g$ is depicted
	in \cref{fig:2-res:-1}.
\end{example}

\begin{figure}[h!]
	\centering
	\begin{subfigure}[b]{0.45\textwidth}
		\centering
		\begin{tikzpicture}{scale=1}
			\node [state, initial, initial text=, inner sep=3pt, minimum size=0pt,
				accepting right, accepting text = $0$ ]
			(q0) at (0,0) {\small $q_0$};
			\draw[->]      (q0) edge[loop above] node{${a~|~((\push{f}{a}) -f)
                \colon w \mapsto |w|_b}$} (q0);
			\draw[->]      (q0) edge[loop below] node{${b~|~((\push{f}{b}) -f)
                \colon w \mapsto |w|_a}$} (q0);
		\end{tikzpicture}
		\caption{\label{fig:2-res:ab} A $2$-residual transducer of $f \colon w \mapsto |w|_a |w|_b$.}
	\end{subfigure}
	\begin{subfigure}[b]{0.45\textwidth}
		\centering
		\begin{tikzpicture}{scale=1}
			\node [state, initial, initial text=, inner sep=3pt, minimum size=0pt,
				accepting above, accepting text = $0$ ]
			(q0) at (0,0) {\small $q_0$};
			\node[state ,  inner sep=3pt, minimum size=0pt, accepting right,
				accepting text = $-1$ ] at (2,0) (q1) {$q_1$};
			\draw[->] (q0) edge[bend left = 40] node[above]{$a~|~0$} (q1);
			\draw[->] (q1) edge[bend left = 40]
                node[below]{$a~|~((\push{g}{aa})-g) \colon
						w \mapsto 2 \times (-1)^{|w|}$} (q0);
		\end{tikzpicture}
		\caption{\label{fig:2-res:-1}A $1$-residual transducer of $g \colon w \mapsto (-1)^{|w|} |w|$.}
	\end{subfigure}
	\caption{\label{fig:residuals} Two residual transducers.}
\end{figure}%

\begin{figure}[h!]
    \begin{center}
        \begin{tikzpicture}[
            etat/.style={
                state,
                inner sep=3pt,
                minimum size=0pt,
                font={\small},
                on grid,
            },
            begin/.style={
                initial,
                initial text=,
            },
            created/.style={
            },
            phantom/.style={
                dashed, thick, red
            },
            equivalent/.style={
                dotted, thick, blue
            },
            etiquette/.style={
                state,
                font={\tiny},
                inner sep=1pt,
                minimum size=0pt,
                fill=white
            }
            ]

            \node[etat, created, begin,
                accepting above, accepting text=$f(\movi)$
                ] (f) at (0,0) {$\push{f}{\movi}$};
            \node[etat, created, above right=2.5cm of f,
                accepting above, accepting text=$f(a)$
                ] (fa) {$\push{f}{a}$};
            \node[etat, created, below right=2.5cm of f,
                accepting right, accepting text=$f(b)$
                ] (fb) {$\push{f}{b}$};
            \node[etat, phantom, right=2cm of fa,
                accepting above, accepting text=$f(aa)$
                ] (faa)
                {$\push{f}{aa}$};

            \draw[created, ->] (f) -- node[midway, below right] {\small $a \mid 0$} (fa);
            \draw[created, ->] (f) -- node[midway, below left] {\small $b \mid 0$} (fb);
            \draw[phantom, ->] (fa) -- node[midway, above] {\small $a \mid 0$}
            (faa);
            \draw[equivalent, ->] (fa) -- node[midway, right] {\small $a \mid \push{f}{aa} -
                \push{f}{b}$} (fb);

            \node[etiquette, above right=-3pt of f] {$0$};
            \node[etiquette, above right=-3pt of fa] {$1$};
            \node[etiquette, above right=-3pt of fb] {$2$};
            \node[etiquette, above right=-3pt of faa] {$3$};

            \node[phantom,
                  right = 2cm of f, yshift=-1cm]
                  {$f(aa) = [f(aa) - f(b)] + f(b)$};

          \begin{scope}[on background layer]
                \clip (-1,3.5) rectangle (3,-3);
                \draw[draw=none,fill=blue!10!white] (-3,0) circle (6);
                \draw[draw=none,fill=green!10!white] (-3,0) circle (4);
                \node (Q) at (-0.5,-2.5) {$Q$};
                \node (O) at (2,-2.5) {$O$};
          \end{scope}
        \end{tikzpicture}
    \end{center}
	\caption{
        \label{fig:computing-residual-automata}
        Example of a partial execution of \cref{algo:residual} to build
        a $k$-residual transducer of a
        function $f \colon A^* \to \Rel$
        such that
        $\push{f}{aa} \polysim{k} \push{f}{b}$.
        Nodes are labelled by their creation time.
        At this stage, $Q = \set{ \push{f}{\movi} }$,
        $O = \set{ \push{f}{a}, \push{f}{b} }$.
        The red node is not created, and the blue transition
        is added instead, corresponding to the ``else'' branch
        line $10$ of \cref{algo:residual}.
    }
\end{figure}%

Now, let us describe how to build a $k$-residual transducer
for any $f \in \Ve_k$. As an illustration of how
\cref{algo:residual} works, we refer the reader to
\cref{fig:computing-residual-automata}.

	\begin{algorithm}

		$O \defined \{ \push{f}{\movi} \}$;

		$Q \defined \vide$;

		\While{$O \neq \vide$}{

			choose $\push{f}{w} \in O$;

			\For{$a \in A$}{

				\eIf{$\push{f}{wa} \not \polysim{k-1} \push{f}{v} $ \normalfont{for all} $\push{f}{v}\in O \uplus Q$}{

					$O \defined O \uplus \{\push{f}{wa}\}$;

					$\delta(\push{f}{w}, a) \defined \push{f}{wa}$;

					$\lambda(\push{f}{w}, a) \defined 0$;

				}{

					let $\push{v}{f} \in O \uplus Q$ be such that $\push{f}{wa} \polysim{k-1} \push{f}{v} $;

					$\delta(\push{f}{w},a) \defined \push{f}{v}$;

					$\lambda(\push{f}{w}, a) \defined \push{f}{wa} - \push{f}{v}$;

				}

			}

			$O \defined O \smallsetminus \{\push{f}{w}\}$;

			$Q \defined Q \uplus \{\push{f}{w}\}$;

			$F(\push{f}{w}) \defined f(w)$;
			
		}

		\caption{\label{algo:residual} Computing a $k$-residual transducer of $f \in \Ve_k$}
	\end{algorithm}

\begin{lemma} \label{lem:resitrans} Let $k \ge 0$.
	Given $f\colon A^* \to \Rel$ such that $\Res(f) / \polysim{k-1}$ is finite,
	\cref{algo:residual} builds  a \mbox{$k$-residual} transducer of $f$.
	Its steps are effective given $f \in \Ve_k$.
\end{lemma}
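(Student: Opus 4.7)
The plan is to verify the four defining properties of a $k$-residual transducer for the output of \cref{algo:residual}, and then check effectivity when $f \in \Ve_k$.

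For termination and the state set, I would observe that each outer iteration moves exactly one element from $O$ to $Q$, while the inner loop enlarges $O$ only when $\push{f}{wa}$ has no $\polysim{k-1}$-equivalent counterpart in $O \uplus Q$. Since $\Res(f)/\polysim{k-1}$ is finite by hypothesis, only finitely many fresh classes can ever be added, so the algorithm terminates. Moreover, an induction on $|u|$ shows that for every $u \in A^*$ the state $\delta^*(q_0, u) \in Q$ is a representative of the $\polysim{k-1}$-class of $\push{f}{u}$, which yields both $Q = \Res(f)/\polysim{k-1}$ and the reachability condition $\push{f}{u} \in \delta^*(q_0, u)$.

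For output correctness, I would then prove by induction on $|w|$ that for every state $\push{f}{u} \in Q$, $\trans_{\push{f}{u}}(w) = f(uw)$. The base case $w = \movi$ follows from $F(\push{f}{u}) = f(u)$. For $w = a w'$, the algorithm has set $\delta(\push{f}{u}, a)$ and $\lambda(\push{f}{u}, a)$ in one of two branches. In the first branch, $\delta(\push{f}{u}, a) = \push{f}{ua}$ (which is eventually placed in $Q$) and $\lambda(\push{f}{u}, a) = 0$, so the induction hypothesis at $\push{f}{ua}$ gives $\trans_{\push{f}{u}}(aw') = f(uaw')$. In the second branch, $\delta(\push{f}{u}, a) = \push{f}{v}$ with $\push{f}{ua} \polysim{k-1} \push{f}{v}$ and $\lambda(\push{f}{u}, a) = \push{f}{ua} - \push{f}{v}$, yielding $\trans_{\push{f}{u}}(aw') = f(vw') + f(uaw') - f(vw') = f(uaw')$ by induction at $\push{f}{v}$. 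Specialising to $u = \movi$ gives $\trans_{q_0} = f$.

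The label constraint $\lambda(Q, A) \subseteq \Vect{\Rel}(\Res(f)) \cap \Ve_{k-1}$ is immediate from inspection of the algorithm: each label is either $0$ or of the form $\push{f}{wa} - \push{f}{v}$ with $\push{f}{wa} \polysim{k-1} \push{f}{v}$, which is a $\Rel$-linear combination of residuals whose value lies in $\Ve_{k-1}$ by definition of $\polysim{k-1}$. Finally, for effectivity when $f \in \Ve_k$, I would invoke three ingredients: \cref{claim:push-pk} effectively produces a $\Ve_k$-representation of each residual $\push{f}{w}$; the equivalence $\polysim{k-1}$ is decidable on $\Ve$ (by \cref{thm:skel:pebblemin} for $k \ge 1$ and by the equivalence problem \cref{cor:equivalence} for $k = 0$), which implements the if-test of the algorithm; and the values $f(w)$ together with the differences $\push{f}{wa} - \push{f}{v}$ are directly computable from these representations. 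The main delicate ingredient is the decidability of $\polysim{k-1}$, which rests on the full growth-rate analysis of \cref{thm:skel:pebblemin}.
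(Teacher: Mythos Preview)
Your proposal is correct and follows essentially the same approach as the paper: termination via the finiteness of $\Res(f)/\polysim{k-1}$, the label constraint by direct inspection, effectivity via \cref{claim:push-pk} and the decidability of $\polysim{k-1}$, and correctness by an induction along runs. The only cosmetic difference is that the paper structures the correctness induction on the length of the prefix $a_1\cdots a_n$ already consumed (maintaining the invariants $q_n \polysim{k-1} \push{f}{a_1\cdots a_n}$ and $f(a_1\cdots a_n w) = \sum_i g_i(a_i\cdots a_n w) + q_n(w)$), whereas you induct on the length of the remaining suffix $w$ to show $\trans_{\push{f}{u}}(w) = f(uw)$ directly; the two arguments are dual and equally valid.
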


\begin{remark} In \cref{algo:residual}, we need to ``choose'' a way to
	range over the elements of $O$ and the letters of $A$.
	Different choices may not lead to the same $k$-residual transducers.
\end{remark}

We deduce from \cref{lem:resitrans} that $\Ve_{k-1}$-transducers
describe exactly the class $\Ve_k$ (\cref{cor:Vk-trans}).

\begin{corollary} \label{cor:Vk-trans} For all $k \ge 0$,
	$\Ve_k$ is the class of functions which
	can be computed by a $\Ve_{k-1}$-transducer.
	Furthermore, the conversions are effective.
\end{corollary}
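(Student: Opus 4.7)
The plan is to prove the two inclusions separately, both effectively.

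For the inclusion $\Ve_k \subseteq \{\text{functions computed by a } \Ve_{k-1}\text{-transducer}\}$, I would invoke \cref{lem:resitrans} directly. Given $f \in \Ve_k$, \cref{lem:resifini} ensures that $\Res(f)/\polysim{k-1}$ is finite, hence the algorithm of \cref{lem:resitrans} effectively builds a \kl{$k$-residual transducer} of $f$, which by definition is a $\Ve_{k-1}$-transducer computing $f$.

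For the reverse inclusion, I would start from a $\Ve_{k-1}$-transducer $\trans = (A, Q, q_0, \delta, \Ve_{k-1}, \lambda, F)$ and use the closed-form expression already provided in the paper:
\begin{equation*}
    \trans_{q_0}(w) = \sum_{uav=w} \lambda(\delta^*(q_0,u),a)(v) + F(\delta^*(q_0,w)).
\end{equation*}
For every state $q \in Q$ and letter $a \in A$, define the regular languages $L_q \defined \setof{u \in A^*}{\delta^*(q_0,u) = q}$ and $L_{q,a} \defined L_q \cdot \{a\}$, noting that $\mathbf{1}_{L_{q,a}} = \mathbf{1}_{L_q} \cauchy \mathbf{1}_{\{a\}}$. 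Rewriting the sum above according to which transition $(q,a)$ is fired, one obtains
\begin{equation*}
    \trans_{q_0} = \sum_{(q,a) \in Q \times A} \mathbf{1}_{L_{q,a}} \cauchy \lambda(q,a)
    + \sum_{q \in Q} F(q) \cdot \mathbf{1}_{L_q}.
\end{equation*}
Each term of the first sum lies in $\Ve_k$ by \cref{claim:vks:cauchyincrease} (applied to $\mathbf{1}_{L_{q,a}} \in \Ve_0$ and $\lambda(q,a) \in \Ve_{k-1}$, yielding a function in $\Ve_{0+(k-1)+1} = \Ve_k$), while each term of the second sum lies in $\Ve_0 \subseteq \Ve_k$ by \cref{ex:Ve0}. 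Since $\Ve_k$ is a $\Rel$-vector space, the total lies in $\Ve_k$. The construction is effective thanks to the effectiveness of the Cauchy product in \cref{claim:vks:cauchyincrease}.

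There is no real obstacle here: the forward direction is a direct citation of \cref{lem:resitrans}, and the backward direction is a routine algebraic unfolding whose only subtle point is that grouping $\mathbf{1}_{L_q} \cauchy \mathbf{1}_{\{a\}}$ into a single $\Ve_0$ indicator $\mathbf{1}_{L_{q,a}}$ is what keeps the arithmetic of \cref{claim:vks:cauchyincrease} exactly at $\Ve_k$ rather than overshooting to $\Ve_{k+1}$. The boundary case $k=0$ causes no trouble either: $\Ve_{-1} = \{0\}$ forces $\lambda \equiv 0$, leaving only $\sum_q F(q) \mathbf{1}_{L_q} \in \Ve_0$.
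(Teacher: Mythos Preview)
Your proof is correct and follows the same approach as the paper: the forward direction is a direct appeal to \cref{lem:resitrans}, and the backward direction unfolds the transducer semantics into the sum $\sum_{q,a} \mathbf{1}_{L_q a} \cauchy \lambda(q,a)$ plus the final-value contributions, which is exactly the decomposition the paper points to via \cref{pro:decompo}. Your write-up is in fact slightly more careful than the paper's, since you explicitly retain the term $\sum_q F(q)\,\mathbf{1}_{L_q}$ and check the boundary case $k=0$.
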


\begin{corollary}[{To be compared to \cref{remark:finite-dimension}}]
    \label{cor:faux}
	For all $k \ge 0$, $\Ve_{k} = \setof{f \colon A^* \fonc \Rel}{\Res(f) / \polysim{k-1} \text{ is finite}}$.
	%	and
	%	$\We_{k+1} = \setof{f \in \Amb}{\Res(f) / \polysim{k} \text{ is finite}}$.
	%	\al{This is false, $\We \neq \Ve$ donc bon...}
\end{corollary}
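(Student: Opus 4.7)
The plan is to prove both inclusions separately, leveraging the machinery built up earlier in the section, and in particular \cref{lem:resitrans} together with \cref{cor:Vk-trans}.

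For the forward inclusion, if $f \in \Ve_k$ then $\Res(f)/\polysim{k-1}$ is finite by direct application of \cref{lem:resifini}. There is nothing more to do in this direction.

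For the converse inclusion, I would start from a function $f \colon A^* \to \Rel$ such that $\Res(f)/\polysim{k-1}$ is finite, and apply \cref{algo:residual} in a purely set-theoretic fashion — forgetting about effectivity and only using the algorithm as an existence construction. The finiteness of the quotient guarantees that the ``while'' loop terminates, and the resulting object is a $k$-residual transducer $\trans$ of $f$. The correctness argument is an induction on input length: when the else-branch identifies $\push{f}{wa}$ with an already-seen $\push{f}{v}$ satisfying $\push{f}{wa} \polysim{k-1} \push{f}{v}$, the label $\lambda(\push{f}{w},a) \defined \push{f}{wa} - \push{f}{v}$ precisely compensates for the fact that the state $\push{f}{v}$ has been reused in place of $\push{f}{wa}$, so $\trans_{\push{f}{w}}(av)$ unfolds to $\push{f}{v}(v) + (\push{f}{wa} - \push{f}{v})(v) = f(wav)$. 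Crucially, the label $\lambda(\push{f}{w},a)$ lies in $\Ve_{k-1}$ by the very definition of $\polysim{k-1}$, hence $\trans$ is a bona fide $\Ve_{k-1}$-transducer. Invoking \cref{cor:Vk-trans} then yields $f \in \Ve_k$.

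The only delicate point, which I do not expect to be a true obstacle, is that \cref{algo:residual} and \cref{lem:resitrans} are stated with an eye toward effective construction under the hypothesis $f \in \Ve_k$. One must verify that the correctness of the output transducer — namely the fact that every transition label is in $\Ve_{k-1}$ and that the computed function equals $f$ — depends solely on the finiteness of $\Res(f)/\polysim{k-1}$, and not on any prior knowledge that $f$ itself lies in $\Ve_k$. Once this reading is justified, the two directions combine to give the claimed set-theoretic equality, which can be viewed as the $\Rel$-polyregular counterpart of the Myhill--Nerode characterisation recalled in \cref{remark:finite-dimension}.
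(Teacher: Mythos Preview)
Your proposal is correct and mirrors the paper's own proof: forward inclusion via \cref{lem:resifini}, converse via \cref{lem:resitrans} (whose hypothesis is precisely finiteness of $\Res(f)/\polysim{k-1}$, with effectivity stated separately) followed by \cref{cor:Vk-trans}. Your careful check that correctness of the transducer relies only on the finiteness hypothesis is exactly the point, and the paper's formulation of \cref{lem:resitrans} already separates this from the effectivity claim.
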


% Deciding star freeness

\section{Star-free $\Rel$-polyregular functions}
\label{sec:aperiodic}

In this section, we study the subclass of
$\Rel$-polyregular functions that are built
by using only $\FO$-formulas, that we call
\emph{star-free $\Rel$-polyregular functions}.
The term ``star-free'' will be justified
in \cref{theo:Zsf-expressions}.
As observed in introduction, very little is known
on deciding $\FO$ definability of functions
(contrary to languages). The main result of this
section shows that we can decide if a $\Rel$-polyregular function
is star-free. Our proof crucially relies on the canonicity
of the residual transducer introduced in \cref{sec:residual}.
We also provide several characterizations
of star-free $\Rel$-polyregular functions,
that specialize the results of \cref{sec:prelim}.

\begin{definition}[Star-free $\Rel$-polyregular]
    \label{def:Zsf}
	For $k \ge 0$, we let
	$\We_k \defined \Vect{\Rel}{{}}(\{ \card \phi : \phi \in \FO_\ell, \ell \leq
    k\})$.
	Let $\intro\We \defined \bigcup_k \We_k$, it is the
	class of \emph{star-free $\Rel$-polyregular functions}.
\end{definition}
We also let $\We_{-1} \defined \{0\}$.
Similarly to $\Ve_k$,
$\We_k = \Vect{\Rel}(\setof{\card\phi}{\phi \in \MSO_k} \cup \{
\car{\set{\movi}} \})$.

\begin{example}[]
$\We_0$ is exactly the set of functions
of the form $\sum_i \delta_i \mathbf{1}_{L_i}$ where the
$\delta_i \in \Rel$ and the $\mathbf{1}_{L_i}$ are indicator
functions of star-free languages
(compare with \cref{ex:Ve0}).
\end{example}

\begin{example}  The function $w \mapsto |w|_a \times |w|_b$
	is in $\We_1$. Indeed, the formulas given in \cref{ex:a-b}
	are in $\FO$.
\end{example}

Now, we give an analogue of \cref{theo:rational-polygrowth}
that characterizes $\We$ as $\Rel$-rational expressions based on
indicators of star-free languages, forbidding the use of the Kleene star.

\begin{theorem}
    \label{theo:Zsf-expressions}
	Let $f: A^* \fonc \Rel$, the following are (effectively) equivalent:
    \begin{enumerate}
		\item \label{it:sf-expr:star} $f$ is a star-free
		$\Rel$-polyregular function;
        \item \label{it:sf-expr:smallest} $f$ belongs to the smallest
		class of functions that contains the indicator functions of all star-free languages
		and is closed under taking external $\Rel$-products, sums and
		Cauchy products.
    \end{enumerate}
\end{theorem}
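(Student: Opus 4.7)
The plan is to mirror the proof of \cref{theo:rational-polygrowth}, replacing $\MSO$ with $\FO$ and regular languages with star-free languages throughout. We prove the two implications separately.

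For $\cref{it:sf-expr:smallest}\implies\cref{it:sf-expr:star}$, I would show that $\We$ contains every $\car{L}$ for $L$ star-free (since star-free languages are exactly those defined by $\FO$ sentences, so $\car{L} \in \We_0$), is trivially closed under sums and external $\Rel$-products, and is closed under Cauchy products. This last point is a star-free analogue of \cref{claim:vks:cauchyincrease}: given $\phi \in \FO_k$ and $\psi \in \FO_\ell$ with $f = \card{\phi}$ and $g = \card{\psi}$, one builds $\chi(z,\vec{x},\vec{y}) \in \FO_{k+\ell+1}$ stating that $z$ is the first position of the suffix, $x_i < z$ for all $i$, $y_j \geq z$ for all $j$, and that $\phi(\vec{x})$ (resp.\ $\psi(\vec{y})$) holds with all quantifiers bounded above (resp.\ below) by $z$. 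Such a bounded relativization keeps the formula in $\FO$, hence $\chi \in \FO_{k+\ell+1}$, and $\card{\chi}=f\cauchy g$ up to a boundary correction of the form $f(\varepsilon)\cdot g+f\cdot g(\varepsilon)$ that already lies in $\We_{k+\ell+1}$.

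For $\cref{it:sf-expr:star}\implies\cref{it:sf-expr:smallest}$, I would prove the star-free analogue of \cref{lem:vks:inductcauchy}, namely
\begin{equation*}
\We_{k+1} \;=\; \Vect{\Rel}(\setof{\car{L} \cauchy g}{L \text{ star-free},\ g \in \We_k}).
\end{equation*}
Take $\phi(x_1,\dots,x_{k+1}) \in \FO_{k+1}$ of quantifier rank $r$ and partition tuples $(x_1,\dots,x_{k+1})$ by their order-and-equality type over $\{1,\dots,k+1\}$. In each class, one variable, say the smallest (renamed $x_1$), is strictly less than all others; split $w = u \cdot a \cdot v$ at the position of $x_1$. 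By the Feferman--Vaught composition theorem for $\FO$, the truth of $\phi(x_1,\vec{x}')$ under such a valuation depends only on the $\FO_r$-type of $u$, the letter $a$, and the $\FO_r$-type of $(v,\vec{x}')$ seen as a marked word. Summing over the finitely many such types yields
\begin{equation*}
\card{\phi} \;=\; \sum_i \delta_i \cdot \car{L_i} \cauchy \car{\set{a_i}} \cauchy (\car{M_i} \cdot \card{\psi_i}),
\end{equation*}
where $L_i, M_i$ are star-free and $\psi_i \in \FO_k$. Now $\car{L_i} \cauchy \car{\set{a_i}} = \car{L_i \cdot \set{a_i}}$ is a star-free indicator, and $\car{M_i} \cdot \card{\psi_i} \in \We_k$ since $\We_k$ is closed under pointwise multiplication by star-free indicators (one may conjoin the defining $\FO$ sentence of $M_i$ inside $\psi_i$). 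Each summand is thus of the desired form. The remaining atomic types (where $x_1$ equals some other $x_j$ or the pivot is another variable) are treated symmetrically. Iterating this decomposition $k+1$ times gives
\begin{equation*}
\We_k \;=\; \Vect{\Rel}(\setof{\car{L_0} \cauchy \cdots \cauchy \car{L_k}}{L_0,\dots,L_k \text{ star-free}}),
\end{equation*}
which is contained in the smallest class of \cref{it:sf-expr:smallest}.

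The main obstacle lies in the second step: one must handle all atomic order-and-equality types (including equalities among free variables) and apply the Feferman--Vaught composition theorem carefully, ensuring that the formulas produced on each side of the split remain in $\FO$ and have the correct arity. Effectiveness of both implications follows from the effective nature of the $\FO$-to-star-free translation and of the constructive Feferman--Vaught composition.
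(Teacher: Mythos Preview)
Your proposal is correct and follows essentially the same route as the paper. The paper's proof is a one-liner: it points back to \cref{lem:vks:inductcauchy} and \cref{claim:vks:cauchyincrease} and says the argument goes through \emph{mutatis mutandis} for $\FO$, yielding the star-free analogue of \cref{eq:cauchys}. Your write-up is simply a fleshed-out version of that same argument; in particular, your explicit appeal to Feferman--Vaught is exactly what justifies the ``easy check'' in the appendix proof of \cref{lem:vks:inductcauchy} once one replaces $\MSO$ by $\FO$ and regular by star-free.
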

\begin{proof}
    We apologize for the inconvenience of looking back at
    \cref{lem:vks:inductcauchy} and noticing that
    the property holds mutatis mutandis for first-order formulas.
    In particular, one obtains
    the equivalent of \cref{eq:cauchys}
    of \cref{theo:rational-polygrowth}
    \begin{equation}
        \label{eq:cauchys-starfree}
        \begin{array}{rcll}
             & \We_k & = \Vect{\Rel}(\{\car{L_0} \cauchy \cdots \cauchy
            \car{L_k}& \\ &&\colon L_0, \dots, L_k \text{ star-free languages}\})
        \end{array}
    \end{equation}
    and the result follows.
\end{proof}

\begin{example}
    The function
    $\car{A^*a} \cauchy \car{A^*} \colon w \mapsto |w|_a$ belongs to $\We_1$,
    and the function 
	 $\car{A^* a} \cauchy \car{A^*} \cauchy   \car{bA^*}  +
		\car{A^* b} \cauchy  \car{A^*} \cauchy \car{a A^*} \colon w \mapsto
        |w|_a \times |w|_b$
    belongs to $\We_2$.
\end{example}

\subsection{Deciding star-freeness}
\label{ssec:car-sf}

Now, we intend to show that given a \kl{$\Rel$-polyregular function},
we can decide if it is star-free. Furthermore, we provide
a semantic characterization of \kl{star-free $\Rel$-polyregular functions}
leveraging ultimate $N$-polynomiality.
We recall
(see \cref{def:ultimately-polynomial-general}) that 
a function $f \colon A^* \fonc \Rel$ is 
ultimately $1$-polynomial
when,
for all  $\alpha_0, w_1, \alpha_1, \dots, w_{\ell}, \alpha_{\ell} \in A^*$,
there exists $P \in \Rat[X_1, \dots, X_\ell]$,
such that
$f(\alpha_0 w_1^{X_1} \alpha_1 \cdots w_{\ell}^{X_{\ell}}\alpha_{\ell})
= P(X_1, \dots, X_\ell)$,
for $X_1, \dots, X_\ell$ large enough.
Being ultimately $1$-polynomial generalizes
star-freeness for regular languages, as easily 
observed in \cref{ex:1poly-aperiodic}.

\begin{claim}
\label{ex:1poly-aperiodic}
A regular language $L$ is star-free if and only if $\car{L}$ is
ultimately $1$-polynomial.
\end{claim}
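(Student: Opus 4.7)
The plan is to appeal to Schützenberger's theorem, which states that a regular language $L$ is star-free if and only if its syntactic monoid is aperiodic, i.e., there exists $n \ge 0$ such that $\mu(w)^n = \mu(w)^{n+1}$ for every $w \in A^*$, where $\mu \colon A^* \to M$ denotes the syntactic morphism. Both directions will be derived essentially from this characterization.

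For the forward direction, assume that $L$ is star-free and let $\mu \colon A^* \to M$ be its syntactic morphism with $M$ aperiodic, witnessed by some $n \ge 0$. Given any iteration pattern $\alpha_0 w_1^{X_1} \alpha_1 \cdots w_\ell^{X_\ell}\alpha_\ell$, as soon as $X_1, \dots, X_\ell \geq n$ we have $\mu(w_i^{X_i}) = \mu(w_i)^n$ for every $i$, so the image under $\mu$ of the whole word is independent of $X_1, \dots, X_\ell$. Consequently, the value of $\car{L}$ on this word is ultimately constant in $(X_1, \dots, X_\ell)$, which is the constant polynomial, hence $\car{L}$ is ultimately $1$-polynomial (with $M = n$).

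For the converse direction, assume $\car{L}$ is ultimately $1$-polynomial, and specialize the condition to $\ell = 1$: for every $\alpha, w, \beta \in A^*$ there exists a polynomial $P \in \Rat[X]$ such that $\car{L}(\alpha w^X \beta) = P(X)$ for all $X$ sufficiently large. Since $P$ takes values only in $\{0, 1\}$ on an infinite set of integers, and any non-constant univariate polynomial over $\Rat$ is unbounded on $\Nat$, $P$ must be a constant polynomial. Therefore the sequence $\left(\car{L}(\alpha w^X \beta)\right)_{X}$ is ultimately constant for every triple $\alpha, w, \beta$, which is exactly the aperiodicity condition on the syntactic monoid of $L$. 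Schützenberger's theorem then concludes that $L$ is star-free.

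The argument is short and the only slightly non-routine ingredient is the observation that a polynomial with values confined to $\{0, 1\}$ on a cofinite subset of $\Nat$ must be constant; this is a trivial asymptotic argument on the leading term.
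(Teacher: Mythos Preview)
Your proof is correct and follows essentially the same approach as the paper's own proof: both directions go through Sch\"utzenberger's characterization of star-free languages via aperiodic syntactic monoids, using the aperiodicity index as the uniform threshold in the forward direction and, in the converse, specializing to $\ell=1$ and observing that a $\{0,1\}$-valued polynomial must be constant to recover aperiodicity. The only cosmetic difference is that you explicitly justify why the polynomial is constant, whereas the paper leaves this implicit.
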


\begin{example}
    \label{ex:non-1-upoly}
    It is easy to see that $w \mapsto |w|_a \times |w|_b$ is
    ultimately $1$-polynomial.
	As a counterexample, recall the map
	$f \colon w \mapsto (-1)^{|w|} \times |w|$.
    The map $f$ is ultimately $2$-polynomial
    because $X \mapsto (-1)^{2X + 1} (2 X + 1)$
    and $X \mapsto (-1)^{2X} 2X$ are both
    polynomials. However, $f$
	is not ultimately $1$-polynomial 
	since $X \mapsto (-1)^X X$ is not a polynomial.
\end{example}

Now, let us state the main theorem of this section.

\begin{restatable}{theorem}{theofodecidable} \label{theo:FO-decidable}
	Let $k \ge 0$ and $f \in \Ve_k$. The following properties are
	(effectively) equivalent:
	\begin{enumerate}
		\item $f \in \We$;
		\item $f \in \We_k$;
		\item $f$ is $1$-\kl{ultimately polynomial}.
	\end{enumerate}
	Furthermore, this property is decidable.
\end{restatable}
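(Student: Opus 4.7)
The implication $(2) \Rightarrow (1)$ is immediate from $\We_k \subseteq \We$. For $(1) \Rightarrow (3)$, I would invoke \cref{theo:Zsf-expressions}: every $f \in \We$ is obtained from star-free indicators by sums, external $\Rel$-products, and Cauchy products. By \cref{ex:1poly-aperiodic} star-free indicators are $1$-ultimately polynomial, and \cref{lem:Zpoly-pump-first} (specialised to $N_1 = N_2 = 1$) shows that $1$-ultimate polynomiality is preserved under all three operations.

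The main direction $(3) \Rightarrow (2)$ is proved by induction on $k$. For $k = 0$, $f$ is bounded, so its image is finite; for each value $n$ the preimage $L_n \defined f^{-1}(n)$ is regular, and $1$-ultimate polynomiality of $f$ combined with integer-boundedness forces $\car{L_n}$ to be eventually constant along every shape $\alpha_0 w_1^{X_1} \cdots w_\ell^{X_\ell} \alpha_\ell$ (a bounded integer-valued polynomial is constant). By \cref{ex:1poly-aperiodic}, $L_n$ is star-free, so $f = \sum_n n \cdot \car{L_n} \in \We_0$. For the inductive step $k \geq 1$ (with the statement assumed for $k-1$), I construct the $k$-residual transducer $\trans$ of $f$ via \cref{algo:residual}, with canonical state set $Q = \Res(f)/\polysim{k-1}$, transitions $\delta$, outputs $F$, and labels $\lambda(q,a) \in \Ve_{k-1}$. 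Once I show that (A) the underlying DFA $(Q,\delta)$ is counter-free and (B) each label belongs to $\We_{k-1}$, I conclude by rewriting
\begin{equation*}
    f \;=\; \sum_{q \in Q} F(q) \car{L_q} \;+\; \sum_{q \in Q, a \in A} \car{L_q \cdot \{a\}} \cauchy \lambda(q,a),
\end{equation*}
where $L_q \defined \{u : \delta^*(q_0,u) = q\}$ is star-free by (A), so \cref{theo:Zsf-expressions} yields $f \in \We_k$. Part (B) is straightforward: each label $\lambda(q,a) = \push{f}{wa} - \push{f}{v}$ inherits $1$-ultimate polynomiality from $f$ (the pumping shapes for $\push{f}{u}$ are restrictions of those for $f$), so the induction hypothesis applies to $\lambda(q,a) \in \Ve_{k-1}$.

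The main obstacle will be (A), the aperiodicity of the canonical DFA. Fixing $u, w \in A^*$, I must show that for $n$ large enough, $g(v) \defined f(uw^{n+1}v) - f(uw^nv)$ lies in $\Ve_{k-1}$, so that $\push{f}{uw^{n+1}} \polysim{k-1} \push{f}{uw^n}$; finiteness of $Q$ then produces a uniform threshold witnessing aperiodicity. If instead $g \notin \Ve_{k-1}$, then \cref{thm:skel:pebblemin} makes $g$ \kl{$k$-pumpable} with witnesses $\alpha_0, w_1, \alpha_1, \ldots, w_k, \alpha_k$. Applying $1$-ultimate polynomiality of $f$ to the $(k{+}1)$-parameter shape $u w^{X_0} \alpha_0 \prod_{i=1}^k w_i^{X_i} \alpha_i$ yields a polynomial $P(X_0, \ldots, X_k)$ coinciding with $f$ for large parameters; since $f \in \Ve_k$ satisfies $|f(\cdot)| = \bigO(|\cdot|^k)$ by \cref{remark:z-poly-polygrowth}, a homogeneous decomposition forces $P$ to have total degree at most $k$. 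A term-by-term finite-difference computation then shows that $P(n{+}1, \vec{X}) - P(n, \vec{X})$ has total degree in $\vec{X}$ at most $k-1$: monomials without $X_0$ cancel out, and monomials with $X_0^{m_0}$ for $m_0 \geq 1$ contribute factors of total degree at most $k - m_0 \leq k-1$ in $\vec{X}$. This contradicts the $\Omega((X_1 + \cdots + X_k)^k)$ lower bound required by $k$-pumpability.

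For decidability, I would compute the minimal $k$ with $f \in \Ve_k$ via \cref{thm:skel:pebblemin}, construct $\trans$ via \cref{algo:residual}, decide counter-freeness of the DFA by the classical Schützenberger procedure, and recursively decide whether each label lies in $\We_{k-1}$ via the inductive procedure. Correctness of this algorithm follows from the equivalence $(1) \Leftrightarrow (2)$ and closure of $\We$ under residuals: if $f \in \We$, then $\lambda(q,a) \in \We \cap \Ve_{k-1} = \We_{k-1}$ by the induction hypothesis, and the DFA is necessarily counter-free by the argument above.
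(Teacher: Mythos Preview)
Your proof is correct and follows essentially the same approach as the paper, which derives \cref{theo:FO-decidable} from the more detailed \cref{theo:big-induction}: the implication $(1)\Rightarrow(3)$ is exactly \cref{lem:sf-up}, your counter-freeness argument for part (A) is the content of \cref{lem:up-cf} (with the same degree-drop computation via \cref{lem:skel:lemmapoly}), your reconstruction formula is \cref{lem:aper:counterfreefo}, and your decidability procedure matches the paper's. The only cosmetic differences are that you spell out the base case $k=0$ explicitly (the paper's induction handles it uniformly since labels in $\Ve_{-1}=\{0\}=\We_{-1}$ are trivial) and that you phrase counter-freeness as eventual stabilization of $\delta^*(q_0,uw^n)$ rather than starting from an assumed cycle.
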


Let us observe that  \cref{theo:FO-decidable}
implies an analogue of
\cref{thm:skel:pebblemin} for the classes $\We_k$.
We conjecture that a direct proof of \cref{cor:FO-min}
is possible. However, such a proof cannot rely on
factorizations forests (that cannot be built in $\FO$),
and it would require a  (weakened) notion of $\FO$-definable
factorization forest as that proposed in \cite{colcombet2022countable}.

\begin{corollary}
    \label{cor:SF-POLY-k}
    $\We_k = \We \cap \Ve_k$.
\end{corollary}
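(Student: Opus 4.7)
The plan is to prove the two inclusions separately; the non-trivial direction is a direct application of \cref{theo:FO-decidable}.

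First I would dispatch the easy inclusion $\We_k \subseteq \We \cap \Ve_k$ straight from the definitions. On the one hand, $\We_k \subseteq \We$ by definition of $\We$ as $\bigcup_\ell \We_\ell$. On the other hand, every first-order formula $\phi \in \FO_\ell$ is in particular an $\MSO_\ell$ formula, so each generator $\card{\phi}$ of $\We_k$ (with $\ell \leq k$) is also a generator of $\Ve_k$; hence $\We_k \subseteq \Ve_k$, and combining the two gives $\We_k \subseteq \We \cap \Ve_k$.

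For the reverse inclusion $\We \cap \Ve_k \subseteq \We_k$, I would take $f \in \We \cap \Ve_k$ and invoke \cref{theo:FO-decidable}. The hypothesis $f \in \Ve_k$ is precisely the assumption required by that theorem (for this particular value of $k$), and the equivalence between items (1) and (2) stated there says that $f \in \We \Longleftrightarrow f \in \We_k$. Applying this to our $f$, which lies in $\We$ by assumption, immediately yields $f \in \We_k$.

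Consequently, no additional argument is needed beyond \cref{theo:FO-decidable}, and the corollary should be viewed as a clean structural repackaging of that theorem: the fine-grained filtration of $\We$ by the number of free variables coincides exactly with its intersection with the corresponding filtration of $\Ve$, mirroring the analogous fact that $\Ve_k$ coincides with the functions of $\Ve$ of growth $\bigO(|w|^k)$ obtained from \cref{thm:skel:pebblemin}. The genuinely hard part — showing that star-freeness together with the polynomial growth bound $|f(w)| = \bigO(|w|^k)$ suffices to reduce the free-variable count down to $k$ — was already carried out inside \cref{theo:FO-decidable}, so the proof of this corollary itself contains no further obstacle.
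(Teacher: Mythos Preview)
Your proof is correct and matches the paper's approach: the paper presents this corollary as an immediate consequence of \cref{theo:FO-decidable}, and you use exactly that theorem for the non-trivial inclusion, with the easy inclusion handled from the definitions as expected.
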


\begin{corollary}[$\FO$ free variable minimization]
	\label{cor:FO-min}
	Let $f \in \We$, then $f \in \We_k$ if and only if
	\kl{$|f(w)| = \bigO(|w|^k)$}.
	This property is decidable and the construction
	is effective.
\end{corollary}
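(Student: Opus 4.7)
The proof plan is to chain together the two main ingredients already at our disposal: the free variable minimization theorem for $\Ve$ (\cref{thm:skel:pebblemin}) and the coincidence $\We_k = \We \cap \Ve_k$ established in \cref{cor:SF-POLY-k}.

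For the forward direction, suppose $f \in \We_k$. Because every $\FO_\ell$ formula is in particular an $\MSO_\ell$ formula, we have the inclusion $\We_k \subseteq \Ve_k$, and then \cref{remark:z-poly-polygrowth} (or the easy direction \cref{item:fvmin-f-ve} $\Rightarrow$ \cref{item:fvmin-f-polyk} of \cref{thm:skel:pebblemin}) gives $|f(w)| = \bigO(|w|^k)$.

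For the backward direction, suppose $f \in \We$ and $|f(w)| = \bigO(|w|^k)$. Since $\We \subseteq \Ve$, the equivalence \cref{item:fvmin-f-polyk} $\Leftrightarrow$ \cref{item:fvmin-f-ve} of \cref{thm:skel:pebblemin} yields $f \in \Ve_k$. Now \cref{cor:SF-POLY-k} concludes: $f \in \We \cap \Ve_k = \We_k$.

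Decidability and effectivity follow by the same chain applied at the algorithmic level. Given $f \in \We$, \cref{thm:skel:pebblemin} effectively computes the minimal $k'$ such that $f \in \Ve_{k'}$, which is also the minimal $k$ with $f \in \We_k$ by the equivalence just proved; to decide whether a given $k$ works, one simply compares it with $k'$. For the effective construction of $f$ as an $\Rel$-linear combination of $\card \phi$ with $\phi \in \FO_\ell$, $\ell \leq k$, one invokes the effective constructions underlying \cref{thm:skel:pebblemin} together with the effective form of \cref{cor:SF-POLY-k}. There is no genuine obstacle here: the real work lies in \cref{thm:skel:pebblemin} and \cref{cor:SF-POLY-k}, which are both already established. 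This corollary is therefore a direct bookkeeping consequence, and the only point that requires any care is ensuring that the effectivity claims stated in those two earlier results are used in a compatible way.
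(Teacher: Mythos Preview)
Your proof is correct and follows essentially the same approach as the paper: the paper's proof invokes \cref{thm:skel:pebblemin} to get $f \in \Ve_k$ and then \cref{theo:FO-decidable} to conclude $f \in \We_k$, whereas you route the second step through \cref{cor:SF-POLY-k}, which is itself an immediate consequence of \cref{theo:FO-decidable}. The difference is purely cosmetic.
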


\begin{proof} Let $f \in \We$ be such that
	$|f(w)| = \bigO(|w|^k)$.
	By \cref{thm:skel:pebblemin}
	we get $f \in \Ve_k$, thus by \cref{theo:FO-decidable},
	$f \in \We_k$. All the steps are effective and decidable.
\end{proof}

The rest of \cref{ssec:car-sf} is devoted to sketching the proof
of \cref{theo:FO-decidable}. Given $f \in \Ve_k$,
the main idea is to use its \kl{$k$-residual transducer}
to decide whether $f \in \We_k$. Indeed, this transducer
somehow contains intrinsic information
on the semantic of $f$. We show that  \kl{star-freeness}
faithfully translates to a \kl{counter-free} property of the
$k$-residual transducer, together with an inductive property
on the labels of its transitions.

\begin{definition}[Counter-free]
	A deterministic automaton $(A,Q,q_0, \delta)$ is \emph{counter-free}
	if for all $q \in Q$, $u \in A^*$, $n \ge 1$, if $\delta(q,u^{n})  = q$ then $\delta(q,u)  = q$
	 (see e.g. \cite{mcnaughton1971counter}).
     We say that a $\Oras$-transducer is \emph{counter-free} if
	its underlying automaton is so.
\end{definition}

\begin{example} The $\Ve_0$-transducer depicted in \cref{fig:2-res:-1}
	is not counter-free, since  $\delta(q_0, aa) = q_0$
	but $\delta(q_0,a) \neq q_0$.
\end{example}

\Cref{theo:FO-decidable} is a direct consequence of the more precise
\cref{theo:big-induction}. Note that the semantic characterization (\cref{it:big:up})
is not a side result: it is needed within the inductive proof
of equivalence between the other items.

\begin{theorem} \label{theo:big-induction}
	Let $k \ge 0$ and $f \in \Ve_k$, the following conditions are equivalent:
	\begin{enumerate}
		\item \label{it:big:SF} $f \in \We$;
		\item \label{it:big:up} $f$ is ultimately $1$-polynomial;
		\item \label{it:big:kres} for all \kl{$k$-residual transducer} of $f$,
			this transducer is \kl{counter-free} and has labels in $\We_{k-1}$;
		\item \label{it:big:kres2} 
		      there exists a counter-free $\We_{k-1}$-transducer
		      that computes $f$;
		\item \label{it:big:sf-k} $f \in \We_k$.
	\end{enumerate}
	Furthermore, this property is decidable and
	the constructions are effective.
\end{theorem}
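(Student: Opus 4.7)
The plan is to proceed by strong induction on $k \geq 0$, establishing the cyclic chain
$\cref{it:big:sf-k} \Rightarrow \cref{it:big:SF} \Rightarrow \cref{it:big:up} \Rightarrow \cref{it:big:kres} \Rightarrow \cref{it:big:kres2} \Rightarrow \cref{it:big:sf-k}$ together with effectivity at every inductive step. The base case $k = 0$ reduces to the classical McNaughton--Papert--Sch\"utzenberger theorem: a $0$-residual transducer has trivial labels in $\We_{-1} = \{0\}$, hence coincides with the minimal DFA jointly recognizing the (regular) level sets of $f \in \Ve_0$; on that DFA, counter-freeness is equivalent to aperiodicity of its transition monoid, to star-freeness of the level sets, and to ultimate $1$-polynomiality of each $\car{L}$.

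For the inductive step, \cref{it:big:sf-k} $\Rightarrow$ \cref{it:big:SF} is immediate, and \cref{it:big:SF} $\Rightarrow$ \cref{it:big:up} follows by writing $f$ via \cref{theo:Zsf-expressions} as a $\Rel$-linear combination of Cauchy products of indicators of star-free languages, sharpening the proof of \cref{lem:Zpoly-pump-first} to yield $N = 1$ in the star-free case (aperiodicity of the syntactic monoid lets one take $N = 1$), and invoking closure of ultimate $1$-polynomiality under sum, external $\Rel$-product, and Cauchy product (since $1 \times 1 = 1$). The implication \cref{it:big:kres} $\Rightarrow$ \cref{it:big:kres2} is direct via \cref{lem:resitrans}. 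For \cref{it:big:kres2} $\Rightarrow$ \cref{it:big:sf-k}, unfolding a counter-free $\We_{k-1}$-transducer $\trans$ computing $f$ yields
\begin{equation*}
    f \;=\; \sum_{q \in Q,\, a \in A} \car{L_{q,a}} \cauchy \lambda(q,a) \;+\; \sum_{q \in Q} F(q) \cdot \car{L_q},
\end{equation*}
where $L_{q,a}$ and $L_q$ are path languages of a counter-free DFA, and hence star-free; combining this with $\lambda(q,a) \in \We_{k-1}$ and the characterization of $\We_k$ in \cref{eq:cauchys-starfree} places $f \in \We_k$.

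The main obstacle is the implication \cref{it:big:up} $\Rightarrow$ \cref{it:big:kres}. Fix a $k$-residual transducer $\trans$ of $f$, whose existence is guaranteed by \cref{lem:resitrans}. For counter-freeness, I argue by contraposition: a counter of minimal period $n \geq 2$ at state $q = [\push{f}{w_0}]_{\polysim{k-1}}$ forces the residuals $\push{f}{w_0 u^j}$ for $j = 0, \dots, n-1$ to live in $n$ distinct $\polysim{k-1}$-classes, so each difference $\push{f}{w_0 u^{j+1}} - \push{f}{w_0 u^j}$ belongs to $\Ve_k \setminus \Ve_{k-1}$ and is therefore $k$-pumpable by \cref{thm:skel:pebblemin}. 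Concatenating such a $k$-pumping witness $\alpha_0 w_1^{X_1} \cdots w_k^{X_k} \alpha_k$ after $w_0 u^X$ yields a word family on which $f$ must simultaneously be a single polynomial $P(X, X_1, \dots, X_k)$ by ultimate $1$-polynomiality, while its first difference along $X$ restricted to a fixed residue modulo $n$ genuinely grows of degree $k-1$ in $X_1, \dots, X_k$ --- an impossibility once one matches polynomial degrees via \cref{lem:skel:lemmapoly}. For the labels: each $\lambda(q,a) = \push{f}{wa} - \push{f}{v}$ produced by \cref{algo:residual} lies in $\Ve_{k-1}$ by construction and inherits ultimate $1$-polynomiality from $f$ as a $\Rel$-combination of residuals, so the induction hypothesis at level $k-1$ yields $\lambda(q,a) \in \We_{k-1}$. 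Decidability follows accordingly: the $k$-residual transducer is effectively computable (\cref{lem:resitrans}), counter-freeness of a finite DFA is decidable, and membership of each label in $\We_{k-1}$ is decidable by the induction hypothesis.
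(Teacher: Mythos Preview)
Your overall strategy matches the paper's: induction on $k$ with the cyclic chain $(5) \Rightarrow (1) \Rightarrow (2) \Rightarrow (3) \Rightarrow (4) \Rightarrow (5)$, relying on the same three lemmas (\cref{lem:sf-up}, \cref{lem:up-cf}, \cref{lem:aper:counterfreefo}, which you essentially reprove inline), plus the induction hypothesis to upgrade the labels from ``ultimately $1$-polynomial in $\Ve_{k-1}$'' to $\We_{k-1}$. The decidability argument is also identical.

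There is, however, a genuine slip in your counter-freeness argument for $(2) \Rightarrow (3)$. You fix a $k$-pumping witness $\alpha_0, w_1, \dots, w_k, \alpha_k$ for $g_j = \push{f}{w_0 u^{j+1}} - \push{f}{w_0 u^j}$ at a \emph{small} index $j \in \{0,\dots,n-1\}$, and then invoke the polynomial identity $f(w_0 u^X \alpha_0 w_1^{X_1} \cdots w_k^{X_k} \alpha_k) = P(X, \vec{X})$. But that identity only holds for $X \geq M$, so you cannot read off $g_j$ as a first difference of $P$; the phrase ``restricted to a fixed residue modulo $n$'' does not bridge this, since equal $\polysim_{k-1}$-classes at $X = j$ and $X = mn+j$ do not give equal residuals. (Separately, the sentence is garbled: the contradiction is that the first difference of $P$ has degree at most $k-1$ in $\vec{X}$ while the pumping witness forces degree $k$, not that anything ``genuinely grows of degree $k-1$''.) The paper's \cref{lem:up-cf} avoids this by exploiting that the bound $M$ in \cref{def:ultimately-polynomial-general} is \emph{uniform} over all choices of $\alpha_i, w_i$: for \emph{every} such choice one gets $|P(nM+1, \vec{X}) - P(nM, \vec{X})| = O((\sum X_i)^{k-1})$, whence $\push{f}{w_0 u^{nM+1}} \polysim{k-1} \push{f}{w_0 u^{nM}}$ directly --- which is exactly counter-freeness. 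Your contrapositive phrasing can be salvaged identically by working with $g_M := \push{f}{w_0 u^{nM+1}} - \push{f}{w_0 u^{nM}}$ (still in $\Ve_k \setminus \Ve_{k-1}$ since the counter keeps the states distinct at every iterate) rather than $g_j$.
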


The proof of  \cref{theo:big-induction} will be done
by induction on $k \ge 0$. First, 
let us note that a counter-free transducer
computes a star-free function (provided that
the labels are star-free).

\begin{lemma}
	\label{lem:aper:counterfreefo}
	Let $k \ge 0$, a \kl{counter-free} \kl{$\We_{k-1}$-transducer}
	(effectively) computes a function of $\We_k$.
\end{lemma}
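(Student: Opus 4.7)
The plan is to unfold the recursive definition of $\trans_{q_0}$ as a finite sum of Cauchy products of star-free indicator functions with the labels of the transducer. First I would expand, by induction from the defining relation $\trans_q(aw) = \trans_{\delta(q,a)}(w) + \lambda(q,a)(w)$, into
\[
\trans_{q_0}(w) \;=\; F(\delta^*(q_0,w)) \;+\; \sum_{\substack{w = uav \\ a \in A}} \lambda(\delta^*(q_0,u),a)(v)\,.
\]
Grouping the sum on the right according to the state $q = \delta^*(q_0,u)$ reached after reading the prefix $u$ and to the scanned letter $a$, and writing $L_q \defined \setof{u \in A^*}{\delta^*(q_0,u) = q}$, this rewrites as
\[
\trans_{q_0}(w) \;=\; \sum_{q \in Q} F(q)\,\car{L_q}(w) \;+\; \sum_{q \in Q,\, a \in A}\bigl(\car{L_q \cdot \{a\}} \cauchy \lambda(q,a)\bigr)(w)\,,
\]
using the fact that for a single letter $a$, the decomposition $w = x a v$ with $x \in L_q$ is unique when it exists, so $\car{L_q} \cauchy \car{\{a\}} = \car{L_q \cdot \{a\}}$.

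Next I would invoke the Schützenberger-McNaughton-Papert theorem: because the underlying automaton of $\trans$ is counter-free, every $L_q$ is (effectively) star-free, and since star-free languages are closed under concatenation, each $L_q \cdot \{a\}$ is star-free too, so $\car{L_q \cdot \{a\}} \in \We_0$.

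Finally I would appeal to the characterization $\We_k = \Vect{\Rel}(\{\car{L_0} \cauchy \cdots \cauchy \car{L_k} : L_i \text{ star-free}\})$ established as equation~\cref{eq:cauchys-starfree} in the proof of \cref{theo:Zsf-expressions}: since $\lambda(q,a) \in \We_{k-1}$ by hypothesis, it is a $\Rel$-linear combination of $k$-fold Cauchy products of star-free indicators, so each $\car{L_q \cdot \{a\}} \cauchy \lambda(q,a)$ becomes a $\Rel$-linear combination of $(k{+}1)$-fold such products, i.e.\ lies in $\We_k$. Taking a $\Rel$-linear combination of finitely many such functions (and adding the constant-growth term $\sum_{q} F(q)\car{L_q} \in \We_0$) keeps us in $\We_k$, yielding $\trans_{q_0} \in \We_k$. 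Effectivity is immediate, since star-freeness of $L_q$ is witnessed by an effective star-free expression built from the transition graph and all the above manipulations are constructive.

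The main subtlety is the free-variable bookkeeping: a naive decomposition $\car{L_q} \cauchy \car{\{a\}} \cauchy \lambda(q,a)$ would, via an analogue of \cref{claim:vks:cauchyincrease}, land in $\We_{k+1}$, one level too high. Merging $\car{L_q}$ and $\car{\{a\}}$ into the single star-free indicator $\car{L_q \cdot \{a\}}$ before taking the Cauchy product with $\lambda(q,a)$ is precisely what keeps the free-variable count at $k$.
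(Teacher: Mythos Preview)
Your proof is correct and follows essentially the same approach as the paper: define $L_q = \{u : \delta^*(q_0,u) = q\}$, use McNaughton--Papert to get that each $L_q a$ is star-free, express $\trans_{q_0}$ as $\sum_{q,a} \car{L_q a} \cauchy \lambda(q,a)$ plus the final-output term, and conclude via \cref{eq:cauchys-starfree}. You are in fact slightly more careful than the paper, which omits the term $\sum_q F(q)\,\car{L_q}$ in its displayed formula; your explicit treatment of it and your remark on why one must merge $L_q$ and $\{a\}$ into a single indicator before taking the Cauchy product are both welcome clarifications.
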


We show that \kl{star-freeness} implies ultimate
$1$-polynomiality. This result  generalizes
ultimately $1$-polynomiality of
the characteristic functions of star-free languages
 (see \cref{ex:1poly-aperiodic}).

%\begin{lemma} Let $\phi(x_1, \dots, x_k) \in \level{\Aper}{k}$.
%Then $\#\phi$ is a ultimately polynomial. 
%\end{lemma}
%
%\begin{proof} Without loss of generalities, we
% assume that there exists $a_1, \dots, a_k \in A$ such that if
%$u \models \phi(i_1, \dots, i_k)$ then $i_{1} < \cdots < i_k$ and
%$u[i_1] = a_1, \dots, u[i_k] = a_k$. Indeed, $\phi$ can always
%be re-written as a strict disjunction of formulas of this form (with possibly
%less than $k$ variables), and the sum of functions which
%are ultimately polynomial is also ultimately polynomial.
%
%There exists a morphism $\mu : A^* \fonc M$ into an aperiodic monoid \gd{le prouver
%avec un poil plus de détails, via la projection du morphisme reconnaissant le langage}
%and $M_0, \dots, M_k \subseteq M$ such that $u \models \phi(i_1, \dots, i_k)$
%if and only if $u[i_1] = a_1, \dots, u[i_k] = a_k$
%and for all $1 \le j \le k$, $\mu(u[i_{j-1}{:}i_j]) \in M_j$.
%
%
%Now let $\alpha_0, u_1, \alpha_1, \dots, u_{\ell}, \alpha_{\ell} \in A^*$,
%and consider the function
%$W : X_1, \dots, X_{\ell} \mapsto \alpha_0 u_1^{X_1} \alpha_1 \cdots u_{\ell}^{X_{\ell}}\alpha_{\ell}$.
%We denote by $J_0, \dots, J_{\ell}$ the 
%\end{proof}

%\begin{claim} \label{claim:up-submodule}
%	The set of ultimately polynomial  functions
%	is a submodule of  $A^* \fonc \Rel$.
%\end{claim}

\begin{lemma} \label{lem:sf-up} Let $f \in \We$, then $f$
	is ultimately $1$-polynomial.
\end{lemma}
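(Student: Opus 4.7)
The plan is to reduce to atomic building blocks of $\We$ via the Cauchy-product description of star-free $\Rel$-polyregular functions, then invoke the closure results of \cref{lem:Zpoly-pump-first} specialised to $N=1$.

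First, by \cref{theo:Zsf-expressions} (more precisely \cref{eq:cauchys-starfree}), every $f \in \We$ can be written as a finite $\Rel$-linear combination
\begin{equation*}
    f \;=\; \sum_{i} \delta_i \, \car{L_{i,0}} \cauchy \car{L_{i,1}} \cauchy \cdots \cauchy \car{L_{i,k_i}},
\end{equation*}
where each $L_{i,j} \subseteq A^*$ is a star-free language. Hence it suffices to prove that the class of ultimately $1$-polynomial functions contains the indicators $\car{L}$ of star-free languages, and is closed under sums, external $\Rel$-products and Cauchy products.

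The atomic case is exactly \cref{ex:1poly-aperiodic}: a star-free language $L$ has $\car{L}$ ultimately $1$-polynomial (this is the classical fact that iterating any word in a star-free language stabilises, since the syntactic monoid is aperiodic). For the closure properties, I would just read off \cref{lem:Zpoly-pump-first}: if $f, g$ are ultimately $N_1$- and $N_2$-polynomial then $f+g$ and $f \cauchy g$ are ultimately $(N_1 N_2)$-polynomial, and $\delta f$ is ultimately $N_1$-polynomial. Specialising to $N_1 = N_2 = 1$ gives $N_1 N_2 = 1$, so ultimate $1$-polynomiality is preserved under all three operations.

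Combining, every $\Rel$-linear combination of Cauchy products of indicators of star-free languages is ultimately $1$-polynomial, so $f$ is ultimately $1$-polynomial, as required. There is no genuine obstacle once \cref{theo:Zsf-expressions} and \cref{lem:Zpoly-pump-first} are in hand; the only subtle point is noticing that the exponent $N$ multiplies under $+$ and $\cauchy$, so that starting from $N=1$ on the building blocks is preserved exactly throughout the construction.
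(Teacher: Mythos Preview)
Your proof is correct and follows exactly the same strategy as the paper: use \cref{ex:1poly-aperiodic} for the base case, \cref{theo:Zsf-expressions} for the decomposition, and \cref{lem:Zpoly-pump-first} (specialised to $N_1=N_2=1$) for the closure properties. The paper's proof is just a more terse version of what you wrote.
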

\begin{proof}
    From \cref{ex:1poly-aperiodic} we get that $\car{L}$ is ultimately $1$-polynomial
    if $L$ is star-free. The result therefore immediately follows from \cref{theo:Zsf-expressions}
    and \cref{lem:Zpoly-pump-first}.
\end{proof}

Last but not least, we show that ultimate $1$-polynomiality
implies that any $k$-residual transducer is counter-free.
\Cref{lem:up-cf} is the key ingredient for showing \cref{theo:big-induction}.

\begin{lemma} \label{lem:up-cf} Let $k \ge 0$. 
    Let $f \in \Ve_k$ which is ultimately $1$-polynomial
	and $\trans$ be a $k$-residual transducer of $f$.
	Then $\trans$ is counter-free and its label functions are
    ultimately $1$-polynomial.
\end{lemma}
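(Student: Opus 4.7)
My plan splits the lemma into two claims: (a) the label functions of $\trans$ are ultimately $1$-polynomial, and (b) $\trans$ is counter-free. Claim (a) is the easy one. By inspection of \cref{algo:residual}, every label is either $0$ or a difference $\push{f}{wa} - \push{f}{v}$ of two residuals of $f$. I would note that any residual $\push{f}{u}$ inherits ultimate $1$-polynomiality from $f$, since $\push{f}{u}(\alpha_0 w_1^{X_1} \cdots w_\ell^{X_\ell}\alpha_\ell) = f((u\alpha_0)w_1^{X_1}\cdots w_\ell^{X_\ell}\alpha_\ell)$ is eventually polynomial in $\vec X$; sums and differences of ultimately $1$-polynomial functions are clearly ultimately $1$-polynomial.

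For claim (b), I would fix a state $q$ of $\trans$ represented by a residual $\push{f}{w}$ together with $u$ and $n \geq 1$ such that $\delta^*(q, u^n) = q$, i.e., $\push{f}{wu^n} \polysim{k-1} \push{f}{w}$. My goal is to derive $\push{f}{wu} \polysim{k-1} \push{f}{w}$. Equivalently, the function $\hat h : v \mapsto f(wuv) - f(wv)$, which lies in $\Ve_k$, must lie in $\Ve_{k-1}$. By \cref{thm:skel:pebblemin}, this reduces to showing that $\hat h$ is not $k$-pumpable. I would proceed by contradiction, assuming a pumping family $v^{\vec X} := \alpha_0 w_1^{X_1}\alpha_1 \cdots w_k^{X_k}\alpha_k$ such that $|\hat h(v^{\vec X})| = \Omega((X_1 + \cdots + X_k)^k)$.

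The central step is to apply ultimate $1$-polynomiality of $f$ to three pumping patterns: $wv^{\vec X}$, $wuv^{\vec X}$, and $wu^Y v^{\vec X}$, producing polynomials $Q(\vec X)$, $P(\vec X)$, and $S(Y, \vec X) = \sum_{i=0}^d Y^i T_i(\vec X)$ that coincide with $f$ above a common threshold. The pumping assumption gives $\hat h(v^{\vec X}) = P - Q$ for $\vec X$ large with $(\sum X_i)^k = O(|P-Q|)$, so \cref{lem:skel:lemmapoly} yields $\deg(P - Q) \geq k$. In the opposite direction, iterating the hypothesis via \cref{claim:properties-resi} gives $\push{f}{wu^{jn}} \polysim{k-1} \push{f}{w}$ for every $j \geq 0$, hence $v \mapsto f(wu^{jn}v) - f(wv) \in \Ve_{k-1}$ and its value at $v^{\vec X}$ is $O(|\vec X|^{k-1})$; for $j$ large, this value equals $S(jn, \vec X) - Q(\vec X)$, so \cref{lem:skel:lemmapoly} forces $\deg(S(jn,\cdot) - Q) \leq k - 1$. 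Reading off the coefficient of each monomial $\vec X^{\vec a}$ with $|\vec a| \geq k$ produces the identity $(c_{0,\vec a} - c_{Q,\vec a}) + \sum_{i \geq 1}(jn)^i c_{i,\vec a} = 0$ for all $j$ large, where $c_{i,\vec a} := [\vec X^{\vec a}] T_i$ and $c_{Q,\vec a} := [\vec X^{\vec a}] Q$. Viewed as a polynomial in $j$ with infinitely many zeros, this vanishes identically, so $c_{i,\vec a} = 0$ for $i \geq 1$ and $c_{0,\vec a} = c_{Q,\vec a}$: the degree-$\geq k$ part of $T_0$ coincides with that of $Q$. Running the same Vandermonde-style argument on the shifted family $\push{f}{wu^{jn+1}} \polysim{k-1} \push{f}{wu}$ (obtained by applying $\push{}{u}$ to the previous equivalences via \cref{claim:properties-resi}) at $Y = jn+1$ shows that the degree-$\geq k$ part of $T_0$ coincides symmetrically with that of $P$. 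Combining the two conclusions, $P$ and $Q$ agree on every monomial of degree $\geq k$, so $\deg(P - Q) \leq k - 1$, contradicting $\deg(P - Q) \geq k$.

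The main obstacle is that ultimate $1$-polynomiality only describes $f$ for \emph{large} values of the iterated exponent $Y$, so one cannot simply read $P(\vec X) = S(1, \vec X)$ and $Q(\vec X) = S(0, \vec X)$. The Vandermonde-style extraction from the two infinite families $\{jn : j \text{ large}\}$ and $\{jn+1 : j \text{ large}\}$ is precisely what bridges this gap and forces the high-degree parts of $P$, $Q$, and $T_0$ to coincide, yielding the contradiction.
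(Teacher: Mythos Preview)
Your argument is correct: part (a) is handled exactly as in the paper, and for part (b) your Vandermonde extraction from the two arithmetic progressions $\{jn\}$ and $\{jn+1\}$ legitimately forces the degree-$\ge k$ parts of $P$, $Q$, and $T_0$ to coincide, yielding the desired contradiction. The only cosmetic point is that the contradiction wrapper is superfluous: your Vandermonde step already shows, for \emph{every} pumping family, that $\deg(P-Q)\le k-1$, hence $\hat h$ is never $k$-pumpable.

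The paper's route is noticeably shorter because it sidesteps the small-exponent obstacle you identify in your last paragraph rather than confronting it. Instead of proving $\push{f}{w}\polysim{k-1}\push{f}{wu}$ directly, the paper observes that by determinism of $\trans$ it suffices to show $\delta^*(q_0,\alpha w^{nX})=\delta^*(q_0,\alpha w^{nX+1})$ for \emph{some} $X\ge 1$, and then picks $X=M$ large enough that both $nM$ and $nM+1$ lie in the polynomial regime. One then works with a \emph{single} polynomial $P(X,X_1,\dots,X_k)$ describing $f(\alpha w^X\alpha_0 w_1^{X_1}\cdots w_k^{X_k}\alpha_k)$; since $f\in\Ve_k$, \cref{lem:skel:lemmapoly} gives $\deg P\le k$, and writing $P=\sum_i X^iP_i$ with $\deg P_i\le k-i$, the difference $P(nM,\cdot)-P(nM+1,\cdot)$ kills $P_0$ and what remains has $\vec X$-degree at most $k-1$. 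No Vandermonde is needed, and the loop hypothesis is used only for the reduction to large exponents, not inside the polynomial estimate. Your approach, by contrast, uses the loop hypothesis substantively (via the iterated equivalences $\push{f}{wu^{jn}}\polysim{k-1}\push{f}{w}$) to feed the Vandermonde; this is heavier but has the minor conceptual advantage of exhibiting explicitly how the $\polysim{k-1}$-structure of the residuals constrains the coefficients of $S$.
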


\begin{proof}[Proof of \cref{theo:big-induction}]
	The (effective) equivalences are shown by
	induction on $k \ge 0$. 
    For \cref{it:big:sf-k} $\implies$ \cref{it:big:SF}, the
    implication is obvious.
    For \cref{it:big:SF} $\implies$ \cref{it:big:up}
	we apply \cref{lem:sf-up}. For \cref{it:big:up} $\implies$
	\cref{it:big:kres}, we use  \cref{lem:up-cf} which shows that
	any \kl{$k$-residual transducer} of $f$ is \kl{counter-free}
	and has ultimately $1$-polynomial labels. Since these labels are
	in $\Ve_{k-1}$, then by induction hypothesis they belong
	to $\We_{k-1}$. 
    For \cref{it:big:kres} $\implies$ \cref{it:big:kres2},
    the result follows because there exists a $k$-residual transducer
    computing $f$.
    For  \cref{it:big:kres2} $\implies$
	\cref{it:big:sf-k} we use \cref{lem:aper:counterfreefo}.

	It remains to see that this property can be decided,
	which is also shown by induction on $k \ge 0$.
	Given $f \in \Ve_k$, we can effectively build a \kl{$k$-residual transducer}
	of $f$ by \cref{lem:resitrans}. If it is not \kl{counter-free}, the function is
	not \kl{star-free polyregular}. Otherwise, we can check by induction
	that the labels belong to $\We_{k-1}$ (since they belong to $\Ve_{k-1}$).
\end{proof}

\subsection{Relationship with polyregular functions
and rational series}

Let us now specialize the multiple characterizations of
$\Ve$ presented
in \cref{sec:prelim}
to $\We$, which completes the third column of
\cref{tab:summary-characterisations}.

%In this section, we intend to show that star-free
%$\Rel$-polyregular functions enjoy multiple characterizations, in
%the spirit of those presented in \cref{sec:prelim}
%for $\Rel$-polyregular functions.

Bojańczyk \cite[page 13]{bojanczyk2018polyregular}
 introduced the notion of \emph{first-order (definable) polyregular functions}.
It is an easy check that star-free $\Rel$-polyregular functions
are obtained by post composition with $\polysum$, in a similar way
as \cref{prop:mikolaj1bis}.

\begin{proposition}
	\label{prop:mikolaj2}
	The class $\We$ is (effectively) the class
	of functions $\polysum \circ f$ where $f : A^* \fonc \{\pm1\}^*$
	is first-order polyregular.
\end{proposition}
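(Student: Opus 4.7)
The plan is to mirror the proof of \cref{prop:mikolaj1} verbatim, substituting the FO fragment for MSO wherever it appears. The template transfers because every logical device invoked in that proof (label formulas, tuple coordinates, guards on output positions) is already first-order; neither direction introduces set quantification.

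For the backward inclusion, let $f \colon A^* \to \{\pm 1\}^*$ be first-order polyregular, specified by an FO-interpretation of some arity $k$. Among its data are two FO formulas $\theta_{+1}, \theta_{-1} \in \FO_k$ isolating the tuples that correspond to output positions labelled $+1$ and $-1$, respectively. Since $\polysum$ is insensitive to the order of its arguments, the FO-definable linear order produced by the interpretation plays no role, and
$$\polysum \circ f(w) = \card{\theta_{+1}}(w) - \card{\theta_{-1}}(w) \in \We_k.$$

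For the forward inclusion, given $g = \sum_{i=1}^n \delta_i \card{\phi_i} \in \We$ with $\phi_i \in \FO_{k_i}$ and $\delta_i \in \Rel$, we build, for each $i$, a first-order polyregular function $f_i \colon A^* \to \{\pm 1\}^*$ that enumerates (in some FO-definable order) $|\delta_i|$ copies of $\mathrm{sign}(\delta_i)$ per valuation satisfying $\phi_i$. This is realised by an FO-interpretation of arity $k_i+1$ whose extra coordinate ranges over $|\delta_i|$ fixed input positions (selectable by an FO formula whenever $|w|$ is large enough, with short words absorbed by a $\car{\set{\movi}}$ correction exactly as in \cref{remark:zerology}). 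By construction $\polysum \circ f_i = \delta_i \card{\phi_i}$. Since first-order polyregular functions are closed under pointwise concatenation (combine the FO-interpretations by disjoint union, ordering the fragments consecutively), the function $f \defined f_1 \cdots f_n$ is first-order polyregular and $\polysum \circ f = g$.

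The only nontrivial point is checking that each step of the MSO-based argument of \cref{prop:mikolaj1} remains within FO; this is routine as none of the operations require second-order quantification. Effectivity is preserved throughout, yielding the claimed effective translation in both directions.
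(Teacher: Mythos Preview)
Your approach is essentially the paper's: both transplant the proof of \cref{prop:mikolaj1} to the $\FO$ setting and check that nothing introduces set quantification. The backward direction is identical in content---the paper phrases it via post-composition with a letter-erasing morphism to obtain star-free $\Nat$-polyregular functions $g_+,g_-$, which are precisely your $\card{\theta_{+1}}$ and $\card{\theta_{-1}}$.

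For the forward direction the paper handles the scalar $\delta_i$ differently: it shows that $\{\polysum\circ f : f \text{ star-free polyregular}\}$ is closed under sums (via concatenation of the underlying functions) and under negation (via swapping $+1$ and $-1$), then builds $\delta_i\card{\phi_i}$ by $|\delta_i|$-fold addition and a possible sign flip. Your extra-coordinate trick is more direct but has a small gap: the correction you invoke from \cref{remark:zerology} only covers $w=\movi$, whereas the trick miscounts on every $w$ with $1\le|w|<|\delta_i|$ (there are not enough input positions to select). This is easily repaired---finitely many short words, each correctable by a star-free singleton indicator in $\We_0$, itself expressible as $\polysum\circ h$---but the paper's concatenation route sidesteps the issue entirely.
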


Now, let us provide a description
of $\We$ in terms of eigenvalues
in the spirit of \cref{theo:Zpoly-Zrat}.
Intuitively, it shows that
a linear representation $(I, \mu,F)$
computes a function in $\We$
if and only if $\Spec(\mu(A^*))$ contains no
non-trivial subgroup,
mimicking the notion of \emph{aperiodicity} for monoids
%This notion can be compared
%to the notion of \emph{aperiodicity} for monoids,
%belongs to $\We$
%if and only if
%its spectrum 
%the non-star-freeness
%of a function follows from the periodic behaviors
%introduced by eigenvalues $\lambda \in \Uni \smallsetminus \{1\}$.
%This notion can be compared
%to the notion of \emph{aperiodicity} for monoids,
%as it amounts to stating that the spectrum of
%the linear representation contains no non-trivial subgroup
\footnote{Beware: the spectrum of a linear representation
    may not be a semigroup.}.

\begin{theorem}[Star-free]
	\label{theo:Zsf-eigen}
	Let $f: A^* \fonc \Rel$, the following are (effectively) equivalent:
	\item
	\begin{enumerate}
		\item \label{it:sf:star} $f$ is a star-free
		$\Rel$-polyregular function;
		\item \label{it:sf:forall} $f$ is a $\Rel$-rational series
		and for all minimal linear representation $(I, \mu,F)$ of $f$,
		$\Spec(\mu(A^*)) \subseteq \{0,1\}$;
		\item \label{it:sf:exists} $f$ is a $\Rel$-rational series
		and there exists a linear representation $(I, \mu,F)$ of $f$
		such that $\Spec(\mu(A^*)) \subseteq \{0,1\}$.
	\end{enumerate}
\end{theorem}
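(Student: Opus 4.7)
The plan is to prove the cycle \cref{it:sf:star} $\implies$ \cref{it:sf:forall} $\implies$ \cref{it:sf:exists} $\implies$ \cref{it:sf:star}, mimicking the structure of \cref{theo:Zpoly-Zrat} but refining the eigenvalue conclusion from $\{0\}\cup\Uni$ to $\{0,1\}$. The implication \cref{it:sf:forall} $\implies$ \cref{it:sf:exists} is immediate, since a minimal representation is in particular a linear representation.

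For \cref{it:sf:star} $\implies$ \cref{it:sf:forall}, I would first invoke \cref{lem:sf-up} to obtain that $f$ is ultimately $1$-polynomial, and then follow the scheme used for \cref{it:Zpoly-pump-first} $\implies$ \cref{it:unity} of \cref{theo:Zpoly-Zrat}. Fix a minimal representation $(I,\mu,F)$, a word $w \in A^*$, and an eigenvalue $\lambda \in \Spec(\mu(w))$. By \cref{lem:capturing-eigenvalues}, one can write $\lambda^X = \sum_{i,j} \alpha_{i,j} f(v_i w^X u_j)$ for some coefficients $\alpha_{i,j} \in \Com$ and words $u_i, v_j \in A^*$. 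Applying ultimate $1$-polynomiality to each term $X \mapsto f(v_i w^X u_j)$ (this is the $\ell = 1$ case of \cref{def:ultimately-polynomial-general} with $N = 1$), the right-hand side is ultimately polynomial in $X$; hence so is $X \mapsto \lambda^X$, which forces $\lambda \in \{0,1\}$.

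For \cref{it:sf:exists} $\implies$ \cref{it:sf:star}, the idea is to derive ultimate $1$-polynomiality directly from the spectral assumption and then conclude with \cref{theo:FO-decidable}. Fix a linear representation $(I,\mu,F)$ with $\Spec(\mu(A^*)) \subseteq \{0,1\}$. A Jordan decomposition of each $\mu(w)$ has blocks with eigenvalues in $\{0,1\}$: the $0$-blocks are nilpotent, hence vanish for $X$ large; each $1$-block has the form $I + N$ with $N$ nilpotent of some index $r$, so $(I+N)^X = \sum_{k < r} \binom{X}{k} N^k$ is a polynomial in $X$ of degree less than $r$. Consequently, $\mu(w)^X$ is, for $X$ large enough, a matrix polynomial in $X$. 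Expanding
\begin{equation*}
    f(\alpha_0 w_1^{X_1} \alpha_1 \cdots w_\ell^{X_\ell} \alpha_\ell)
    = I \mu(\alpha_0) \mu(w_1)^{X_1} \cdots \mu(w_\ell)^{X_\ell} \mu(\alpha_\ell) F,
\end{equation*}
this expression is ultimately a polynomial in $X_1, \dots, X_\ell$ taking values in $\Rel$, hence with rational coefficients. This establishes ultimate $1$-polynomiality of $f$. Since $\{0,1\} \subseteq \{0\} \cup \Uni$, \cref{theo:Zpoly-Zrat} yields $f \in \Ve$, and finally \cref{theo:FO-decidable} gives $f \in \We$.

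The main obstacle is the Jordan-block computation showing that matrices with spectrum in $\{0, 1\}$ have polynomial powers — this is precisely what separates the star-free case from the general polyregular setting of \cref{theo:Zpoly-Zrat}, where powers may only be quasi-polynomial (polynomial after passing to a multiple $N$ of the exponent). All conversions are effective: ultimate $1$-polynomiality is obtained explicitly from the Jordan decomposition of the representation, and the remaining steps invoke the effective procedures of \cref{theo:Zpoly-Zrat} and \cref{theo:FO-decidable}.
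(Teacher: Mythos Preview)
Your proof is correct and follows essentially the same route as the paper: the same cycle of implications, the same appeal to \cref{lem:capturing-eigenvalues} together with ultimate $1$-polynomiality for \cref{it:sf:star} $\Rightarrow$ \cref{it:sf:forall}, and the same idea of showing that $\mu(w)^X$ has entries polynomial in $X$ (you via Jordan blocks, the paper via upper-triangularization over $\Rat$) to deduce ultimate $1$-polynomiality and then conclude through \cref{theo:FO-decidable}/\cref{theo:big-induction}. One minor wording nit: your clause ``Since $\{0,1\} \subseteq \{0\} \cup \Uni$, \cref{theo:Zpoly-Zrat} yields $f \in \Ve$'' should instead point to the ultimate $1$-polynomiality you just established (item~\cref{it:Zpoly-pump-first} of \cref{theo:Zpoly-Zrat}), since the spectral items of that theorem require a \emph{minimal} representation, which \cref{it:sf:exists} does not assume.
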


\begin{proof}
    For \cref{it:sf:forall} $\implies$ \cref{it:sf:exists}, the result
    is obvious.
    
    For \cref{it:sf:star} $\implies$ \cref{it:sf:forall},
	consider a minimal presentation of $f$
	using $(I, \mu, F)$ of dimension $n$.
	Then consider a word $w$, $\lambda$ a complex eigenvalue
	of $\mu(w)$.
    Thanks to \cref{lem:capturing-eigenvalues},
    there exists $w, \alpha_{i,j}, u_i, v_j \in A^*$ for $1 \leq i,j \leq n$
    such that
	$
    \lambda^X = \sum_{i,j = 1}^n \alpha_{i,j} f(v_i w^X u_j)
    $.
    Because $f \in \We$, $f$ is ultimately $1$-polynomial thanks
    to \cref{theo:big-induction}.
	This entails that $X \mapsto \lambda^X$
	is a polynomial for $X$ large enough.
	Therefore, $\lambda \in \{0,1\}$.

    For \cref{it:sf:exists} $\implies$ \cref{it:sf:star},
	let us prove that the computed function is
    ultimately $1$-polynomial, which is enough thanks
    to \cref{theo:big-induction}.
    Because the eigenvalues of the matrix $\mu(w) \in \Mat{n,n}(\Rel)$
	for $w \in A^*$ are all in $\{0,1\}$,
	its characteristic polynomial splits over $\Rat$, hence
    there exists $P \in \Mat{n,n}(\mathbb{Q})$ such that 
    $T \defined P M_w P^{-1}$ is
	upper triangular with diagonal values in $\{0,1\}$.
	In particular, $\mu(w)^X = P^{-1} T^X P$,
	but a simple induction proves that the coefficients of $T^X$
	are in $\Rat[X]$  for large enough $X$, hence so does $\mu(w)^X$.
	Pumping multiple patterns at once only computes sums of products
	of polynomials, hence the function is ultimately
	$1$-polynomial. Thanks to \cref{theo:big-induction}, it is
	star-free $\Rel$-polyregular.
\end{proof}

\begin{remark}
    \label{remark:1-polynomial-monomial}
When showing
 \cref{it:sf:exists} $\implies$ \cref{it:sf:star},
 we have in fact shown that the following weaker form
 of ultimate $1$-polynomiality characterizes 
 $\We$ among $\Rel$-rational series:
for all  $u,w,v \in A^*$,
there exists $P \in \Rat[X]$,
such that $f(u w^{X} v) = P(X)$,
for $X$ large enough. 
\end{remark}

Beware that 
\cref{remark:1-polynomial-monomial}
slightly differs from
\cref{remark:N-polynomial-monomial}:
the latter deals with a polynomial upper bound, whereas
an equality is needed to
characterize star-freeness.

\begin{example}
    Let $u,v,w \in A^*$,
    then $|\Lodd(u w^{X} v)| \leq 1$ for every $X \geq 0$.
    However, $\Lodd \not\in \We$.
\end{example}

As a concluding example, let us observe that our notion of
star-free $\Rel$-polyregular  functions
differs from the functions definable in
the \emph{weighted first order logic}
introduced by Droste and Gastin
\cite[Section 4]{droste2019aperiodic}
when studying rational series.

\begin{example}
	\label{ex:gastin}
    Thanks to \cite[Theorem 1]{droste2019aperiodic}, the map
    $f \colon w \mapsto (-1)^{|w|} |w|$ is definable in weighted first order logic
    (however, $f \not\in \We$ as shown in \cref{ex:non-1-upoly}).
    Similarly, the indicator function $\Lodd$ is also
    definable in weighted first order logic, even though
    the language of words of odd length is not star-free.
\end{example}

\section{Outlook}

This paper describes a robust class of functions,
which admits several characterizations in terms
of logics, rational expressions, rational series and transducers.
Furthermore, two natural class membership problems
(free variable minimization and first-order definability) are shown decidable.
We believe that these results together with the technical tools
introduced to prove them open the range towards
a vast study  of $\Rel$- and $\Nat$-polyregular functions.
Now, let us discuss a few tracks which
seem to be promising for future work.

\subparagraph*{Weaker logics} Boolean combinations of existential
first-order formulas define a well-known subclass of first-order logic,
often denoted $\mathcal{B}(\exists \FO)$. Over finite words, $\mathcal{B}(\exists \FO)$-sentences
describe the celebrated class of \emph{piecewise testable languages} (see e.g. \cite{perrin1986first}).
In our quantitative setting, one could define for all $k \ge 0$
the class of linear combinations of the counting formulas from $\mathcal{B}(\exists \FO)_k$,
as we did for $\Ve_k$ (resp. $\We_k$) with $\MSO_k$ (resp. $\FO_k$).
While this class seems to be a good candidate for defining
``piecewise testable $\Rel$-polyregular functions'', it
does not admit a free variable minimization theorem depending on the
growth rate of the functions. Indeed, let $A \defined \{a,b\}$
and consider the indicator function $ \car{a A^*} = \card \phi$
for $\phi(x) \defined a(x) \land \forall y. y \ge x \in \mathcal{B}(\exists \FO)_1$.
Even if $|\mathbf{1}_{a A^*}(w)| = \bigO(1)$, this function cannot be written
as a linear combination of counting formulas from $\mathcal{B}(\exists \FO)_0$.
Indeed, if we assume the converse, then
$\car{a A^*}$ could be written $\sum_{i=1}^n \delta_i \car{L_i}$
for some piecewise testable languages $L_i$,
which implies that $a A^*$ would be piecewise testable, which is not the case.

%
%\subparagraph*{Other vector spaces.}
%It may be more natural to consider $\Rat$-linear combinations in order
%to obtain a proper vector space. As defined, the spaces $\Ve_i$
%are $\Rel$-modules, and we will see that this allows us to express
%finer properties. In any case, notice that the spaces
%$\Vect{\Rat}{a}(\{ \card \phi : \phi \in \FO_k\})$ (resp. $\MSO_k$) are
%not of finite dimension.
%\gd{Est-ce pertinent?}

\subparagraph*{Star-free $\Nat$-polyregular functions}
A very natural question is, given an
$\Nat$-polyregular function (recall that it is an element of
$\VeN \defined \Vect{\Nat}(\card \phi : \phi \in \MSO)$)
to decide whether it is in fact
a \emph{star-free $\Nat$-polyregular} function
(i.e. an element of $\WeN \defined \Vect{\Nat}(\card \phi : \phi \in \FO)$).
In this setting, we conjecture that
$\WeN = \VeN \cap \We$.
This question seems to be challenging. Indeed, the techniques introduced
in the current paper cannot directly be applied
 to solve it, since the residual automaton (see \cref{sec:aperiodic})
of an $\Nat$-polyregular function may need
labels which are not $\Nat$-polyregular, or even not nonnegative.
In other words, replacing the output group by an output monoid
seems to prevent from representing
the functions with  canonical objects
based on residuals.

%\subparagraph*{
%It is worth mentioning that the model of $\Rel$-polyregular functions
%as defined here does not coincide with what
%is sometimes called ``Newton polynomial functions''
%\cite[Proposition 3.1]{pin_mahlers_2019}.
%Newton polynomial functions over $(\Rel,+)$
%are precisely the $\Rel$-polyregular
%functions $f$
%such that $|\Res(f)/\polysim{k}| = 1$ for every $k \in \Nat$
%\cite[Theorem 3.2]{pin_mahlers_2019}.
%As Newton polynomial functions
%can be valued in any group $G$, it begs the question
%of the generalization of $\Rel$-polyregular functions
%to $G$-polyregular functions. To our knowledge,
%the proof techniques developed in this paper
%cannot be applied to a non-commutative output group.
%Even for commutative groups,
%first-order definability becomes less meaningful as
%the indicator function $\Leven$ is first-order definable
%(using one free variable) when $G = (\Rel/2\Rel,+)$.

\subparagraph*{Star-free $\Rel$-rational series}
In \cref{fig:Z-classes}, there is no generalization of the class $\We$
among the whole class of $\Rel$-rational series. 
We
are not
aware of a way to define a class of
``star free $\Rel$-rational series'', neither with logics nor
with $\Rel$-rational expressions. Indeed, allowing the use of Kleene star
for series automatically builds the whole class of
$\Rel$-rational series (including the indicator
functions of all regular languages).
%For that reason, the notion of ``star-height'' for $\Rel$-rational series
 %{\cite[page 7]{berstel2011noncommutative}} is not related
 %to this paper's notion of ``star-free''.
 %\gd{Est-ce vraiment ``for that reason''?
 %Et en vrai je ne sais trop si cette remarque
 %est utile ou non.}
 
From a logical standpoint, 
it is tempting to go from polynomial behaviors to
exponential ones by shifting from
first-order free variables to second-order free variables.
While this approach actually 
captures the whole class of $\Rel$-rational series,
it fails to circumscribe star-freeness. To make the above
statement precise, let us write 
$\MSO^X$ (resp. $\FO^X$) as the set
of $\MSO$ (resp. $\FO$) formulas with free
second-order variables, i.e. of the shape
$\varphi(X_1, \dots, X_k)$. Given
$\phi \in \MSO^X$, we let $\card \phi(w) \colon A^* \fonc \Rel$
be the function that counts second-order valuations.
As an example of the expressiveness of this model,
let us illustrate how to compute $w \mapsto (-2)^{|w|} \not \in \Ve$.

\begin{example}
    Let $\phi(X) \defined \top$, then $\card \varphi(w) = 2^{|w|}$.
    Let $\psi(X)$ be the first-order formula
    stating that $X$ contains the first position of the word,
    $X$ contains the last position of the word,
    and if $x \in X$, then $x+1 \not \in X$ and $x+2 \in X$.
    It is an easy check that $\card{\psi} = \Lodd$,
    even though $\psi \in \FO^X$ (but recall that
    $\Lodd$ is the indicator function 
    of a non star-free regular language).
    Now, $w \mapsto (-2)^{|w|}$
    equals 
    $\card{\phi} \times (2 \card{\psi} - 1)$.
\end{example}

We are now ready to explain formally how
both $\FO^X$ and $\MSO^X$ capture $\Rel$-rational series.

\begin{proposition}
    \label{prop:counting-mso-variables}
    For every function $f \colon A^* \to \Rel$,
    the following are equivalent:
    \begin{enumerate}
        \item \label{it:ratrat} $f$ is a $\Rel$-rational series;
        \item \label{it:MSOX} $f \in \Vect{\Rel}(\setof{\card \varphi}{\varphi \in \MSO^X})$;
        \item \label{it:FOX} $f \in \Vect{\Rel}(\setof{\card \varphi}{\varphi \in \FO^X})$.
    \end{enumerate}
\end{proposition}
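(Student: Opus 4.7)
The plan is to prove the chain $\cref{it:FOX} \implies \cref{it:MSOX} \implies \cref{it:ratrat} \implies \cref{it:FOX}$; the first implication is immediate since $\FO^X \subseteq \MSO^X$. For $\cref{it:MSOX} \implies \cref{it:ratrat}$, I will check that each $\card\varphi$ with $\varphi \in \MSO^X$ is an $\Nat$-rational series, hence a fortiori a $\Rel$-rational one. Given $\varphi$ with $k$ free second-order variables, I would encode each valuation of $(X_1, \dots, X_k)$ on a word $w \in A^*$ as an extended word over $A \times \{0,1\}^k$. The set of extended words satisfying $\varphi$ is $\MSO$-definable, hence a regular language $L$; running a deterministic automaton for $L$ on $w$ while nondeterministically guessing the $\{0,1\}^k$-component at each position yields an NFA whose number of accepting runs is exactly $\card\varphi(w)$, so $\card\varphi$ is indeed an $\Nat$-rational series.

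For $\cref{it:ratrat} \implies \cref{it:FOX}$, I plan to first write any $\Rel$-rational series as the difference of two $\Nat$-rational series, and then realise each of those as $\card\psi$ for some $\psi \in \FO^X$. Starting from a linear representation $(I,\mu,F)$ of $f$ with state set $Q$, I would decompose each entry as $\mu(a)_{q,q'} = \mu^+(a)_{q,q'} - \mu^-(a)_{q,q'}$, and likewise for $I$ and $F$, then expand the path sum $f(w) = \sum_{\pi} I_{q_0}\prod_k \mu(w[k])_{q_{k-1},q_k} F_{q_n}$ over all choices of a positive or a negative copy for each factor. Each expanded term is a product of nonnegative integers multiplied by $(-1)^{\#\text{negative choices}}$, and grouping by the parity of this count produces $f = g^+ - g^-$ with $g^\pm$ both $\Nat$-rational: they are computed by an NFA whose states are in $Q \times \{+,-\}$, where positive copies of original transitions preserve the sign bit and negative copies flip it, with multiplicities realised as parallel transitions and with initial and final weights absorbed by state duplication. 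It remains to observe that for any NFA $(Q', \Delta, q_0, F')$, the number of accepting runs on $w$ equals $\card\psi(w)$ for a suitable $\psi \in \FO^X$: introduce one free second-order variable $X_q$ per state $q \in Q'$, intended to be the set of positions at which the run visits $q$, and let $\psi$ assert that the $X_q$ partition $\{1,\dots,|w|\}$, that the first position witnesses a valid transition from $q_0$ on $w[1]$, that each consecutive pair of positions $(k{-}1,k)$ corresponds to a transition of $\Delta$, and that position $|w|$ belongs to some $X_q$ with $q \in F'$. Each valid valuation is then in bijection with an accepting run, so $\card\psi$ counts exactly the runs.

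The main obstacle is the decomposition of a signed integer-weighted automaton into a difference of unsigned ones: sign tracking must remain compatible with matrix multiplication, which is exactly what moving a parity bit into the state space achieves. The edge case $w = \movi$ contributes only a constant correction and presents no difficulty, while the $\FO$-encoding of NFA runs by second-order variables is conceptually straightforward once multiplicities in the initial and final vectors have been cleared by state duplication so that one is reduced to an NFA with a unique initial state and a boolean acceptance set.
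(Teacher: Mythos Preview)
Your overall strategy coincides with the paper's: both prove $(3)\Rightarrow(2)\Rightarrow(1)\Rightarrow(3)$, both obtain $(2)\Rightarrow(1)$ by the standard ``encode a valuation as extra bits and count accepting runs'' argument, and both handle $(1)\Rightarrow(3)$ by a sign-tracking state duplication followed by an $\FO$ description of runs via one second-order variable per state. So the route is essentially the same.

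There is, however, one genuine gap in your $(1)\Rightarrow(3)$ step. You absorb the transition multiplicities as \emph{parallel transitions} in the NFA, and then encode a run by the tuple $(X_q)_{q\in Q'}$ recording only the state visited at each position. With parallel edges this is not a bijection: two distinct accepting runs may traverse the same state sequence while choosing different parallel copies of some transition, so $\card\psi$ undercounts the runs and hence the value of $g^{\pm}$. The paper avoids this by pushing the multiplicity into the state space as well (not just $I$ and $F$), reducing to transition matrices with entries in $\{0,1\}$ before writing down the run-encoding formula. You can patch your argument the same way, or alternatively add a further family of second-order variables indexing which of the boundedly many parallel copies is taken at each position; either fix is routine, and with it your proof is complete and matches the paper's.
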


In our setting, it seems natural to say that
$w \mapsto 2^{|w|}$ should be a star-free $\Rel$-rational series,
contrary to $w \mapsto (-2)^{|w|}$
(as observed in \cref{ex:gastin}, this approach contrasts with
the weighted logics of Droste and Gastin \cite{droste2019aperiodic},
for which $(-2)^{|w|}$ is considered as ``star free'').
Recall that in \cref{theo:Zsf-eigen}, we have characterized
$\We$ as the class of series whose
spectrum falls in $\{0,1\}$.
Following this result, we conjecture that a ``good'' notion
of star-free $\Rel$-rational series could be those
whose spectrum falls in the set $\Rea_+$ of
nonnegative real numbers. This way, exponential
growth is allowed (e.g. for $w \mapsto 2^{|w|}$)
but no periodic behaviors (e.g. for $w \mapsto (-2)^{|w|}$).

%\gd{A-t-on vraiment besoin de parler de ça là?
%notion for star-free rational  ``correct'' notion would
%be those where the eigenvalues are positive real numbers.
%This has the virtue of being a correct generalisation:
%
%\begin{lemma}
%	Let $f$ be computable using a $\Rel$ weighted automata. The following
%	are equivalent, and if $f$ satisfies any of them, it is called
%	a "star-free" function.
%	\begin{enumerate}
%		\item $f$ the eigenvalues of the matrices of the weighted automata
%		      are natural numbers,
%		\item For every pattern $w^{\bar X}$,
%		      $f(w^{\bar X})$ is asymptotically
%		      an element of $\Vect{\mathbb{C}}(X^k, k^X \colon k \in \Nat)$.
%	\end{enumerate}
%\end{lemma}}

\newpage

\bibliography{globals/ressources}

%\onecolumn
%\appendix

%\newgeometry{left=3cm,right=3cm,onecolumn}

%\setlength{\textwidth}{14cm}
%\setlength\oddsidemargin{2cm}
%\addtolength{\textwidth}{-2.0cm}
%\addtolength{\columnwidth}{5cm}
%\addtolength\oddsidemargin{2cm}

\clearpage
\appendices\pagenumbering{roman}
\setlength\textwidth{390pt}
\setlength\oddsidemargin{31pt}
\onecolumn

\section{Proofs of \cref{sec:prelim}}

\subsection{Proof of \cref{prop:precompose}}

In this section, we show that the functions of $\Ve_k$
are closed by precomposition under a regular function.
This proof is somehow classical and inspired by well-known composition techniques
for $\MSO$-transductions.

    \begin{definition}[Transduction]
      \label{def:mso-transduction}
        A ($k$-copying) \emph{$\MSO$-transduction} from $A^*$ to $B^*$ 
        consists in several $\MSO$ formulas over $A$:
        \begin{itemize}
            \item for all $1 \le j \le k$, a formula $\phi^{\Dom}_j(x) \in \MSO_1$;
            \item for all $1 \le j \le k$ and $a \in B$, a formula $\phi^a_j(x) \in \MSO_1$;
            \item for all $1 \le j,j' \le k$, a formula $\phi^{<}_{j,j'}(x,x') \in \MSO_2$.
        \end{itemize}
    \end{definition}

Let $w \in A^*$, we define the domain
$D(w) \defined \setof{(i,j)}{1 \le i \le |w|, 1 \le j \le k, w \models \phi^{\Dom}_j(i)}$.
Using the formulas $\phi^b_j(x)$ (resp. $\phi^{<}_{j,j'}(x,x')$), we can
label the elements of $D(w)$ with letters of $B$ (resp. define
a relation $<$ on the elements of $D(w)$). The transduction
is defined if and only if the structure $D(w)$ equipped with the
labels and $<$ is a word $v \in B^*$, for all $w \in A^*$. In this case,
the transduction computes the function that maps $w \in A^*$ to this $v \in B^*$.

It follows from \cite{engelfriet2001mso} that  regular functions
can (effectively) be described by $\MSO$-transductions.

\begin{claim}
    Let $\ell \ge 0$, $k \geq 1$,
    $\psi(x_1, \dots, x_{\ell}) \in \MSO_{\ell}$
    be a formula over $B$
    and $f \colon A^* \fonc B^*$ be computed by a $k$-copying
    $\MSO$-transduction.
    Let us write $W \defined \set{x_1, \dots, x_\ell}^{\set{1, \dots, k}}$.
    There exists formulas $\theta_{\rho} \in \MSO_\ell$
    over $A$ where $\rho$ ranges in $W$,
    such that
    for all $w \in A^*$,
    $\card \phi (f(w)) = \sum_{\rho \in W} \card{\theta_\rho}(w)$.
\end{claim}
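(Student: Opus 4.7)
The plan is to give a backwards-translation along the $\MSO$-transduction that respects free variables. Let the $k$-copying transduction for $f$ be given by the formulas $\phi^{\Dom}_j$, $\phi^b_j$, $\phi^{<}_{j,j'}$ of \cref{def:mso-transduction}. The key observation is that every position of $f(w)$ can be written as a pair $(i,j)$ with $1 \le i \le |w|$ and $1 \le j \le k$. Hence every valuation of $\psi(x_1,\dots,x_\ell)$ in $f(w)$ is determined by a tuple of positions $(i_1,\dots,i_\ell)$ in $w$ together with a ``copy assignment'' $\rho\colon \{x_1,\dots,x_\ell\} \to \{1,\dots,k\}$ (which is what I would take $W$ to be); the copy assignment is unique for a given valuation, so the valuations of $\psi$ in $f(w)$ are in bijection with the disjoint union, over $\rho \in W$, of the tuples $(i_1,\dots,i_\ell)$ for which the corresponding positions belong to $\Dom(f(w))$ and satisfy $\psi$.

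For each $\rho \in W$, I would build $\theta_\rho \in \MSO_\ell$ by structural induction on $\psi$, carrying along a partial copy assignment $\sigma$ that extends $\rho$ on the first-order variables currently bound. The translation $\llbracket \psi \rrbracket_\sigma$ is defined as follows: for an atomic formula $b(x_m)$ replace it by $\phi^b_{\sigma(x_m)}(x_m)$; for $x_m < x_{m'}$ replace it by $\phi^{<}_{\sigma(x_m),\sigma(x_{m'})}(x_m,x_{m'})$; Boolean connectives commute; for a first-order quantifier $\exists y.\, \psi'$ take $\bigvee_{j=1}^{k} \exists y.\, (\phi^{\Dom}_j(y) \wedge \llbracket \psi' \rrbracket_{\sigma[y \mapsto j]})$. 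For a second-order quantifier $\exists Y.\, \psi'$, split $Y$ into $k$ monadic second-order variables $Y_1,\dots,Y_k$ (one per copy) and replace every atomic formula of the form $y \in Y$ occurring in $\psi'$ by $y \in Y_{\sigma(y)}$, relativizing each $Y_j$ to $\phi^{\Dom}_j$. Finally, setting
\[
\theta_\rho(x_1,\dots,x_\ell) \;\defined\; \bigwedge_{m=1}^{\ell} \phi^{\Dom}_{\rho(x_m)}(x_m) \;\wedge\; \llbracket \psi \rrbracket_{\rho}
\]
ensures that $w \models \theta_\rho(i_1,\dots,i_\ell)$ if and only if the positions $(i_m,\rho(x_m))$ all lie in the domain of the output structure and $f(w) \models \psi$ at that valuation.

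Correctness follows from a straightforward induction showing that for every subformula $\psi'$, every partial assignment $\sigma$, and every valuation of the remaining free variables of $\psi'$ at positions of $w$ whose images under $\sigma$ lie in the domain, $w,\nu \models \llbracket \psi' \rrbracket_\sigma$ if and only if $f(w),\tilde\nu \models \psi'$, where $\tilde\nu$ is the valuation in $f(w)$ obtained by pairing each position with its $\sigma$-assigned copy. Summing $\card{\theta_\rho}(w)$ over $\rho \in W$ then counts each valuation of $\psi$ in $f(w)$ exactly once, yielding $\card \psi(f(w)) = \sum_{\rho \in W} \card{\theta_\rho}(w)$.

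The only delicate point, and the step I would expect to require the most care, is the treatment of second-order quantification: one must check that splitting a set variable $Y$ into $k$ pieces $Y_1,\dots,Y_k$ and rewriting memberships according to $\sigma$ really enumerates precisely the subsets of $\Dom(f(w))$; the rest of the argument is an essentially mechanical compositional translation.
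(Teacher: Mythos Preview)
Your proposal is correct and follows essentially the same approach as the paper: both define a compositional backwards-translation $\psi \mapsto \psi_\rho$ (your $\llbracket\psi\rrbracket_\sigma$) by structural induction, handle first-order quantifiers by a disjunction over copies, split each second-order variable into $k$ pieces, and then set $\theta_\rho \defined \psi_\rho \wedge \bigwedge_m \phi^{\Dom}_{\rho(x_m)}(x_m)$. Your treatment of the membership atom $y \in Y$ as $y \in Y_{\sigma(y)}$ is in fact slightly cleaner than the paper's disjunctive version, and you are right that $W$ should be read as the set of maps $\{x_1,\dots,x_\ell\}\to\{1,\dots,k\}$.
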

\begin{proof}[Proof Sketch]
	Assume that the transduction is given by formulas 
	$\phi^{\Dom}_j(x)$, $\phi^a_j(x) \in \MSO_1$ for $a \in B$
	and $\phi^{<}_{j,j'}(x,x') \in \MSO_2$ as in \cref{def:mso-transduction}.
    Let $\psi$ be an $\MSO$ formula
    over $B$ with first order variables $x_1, \dots, x_\ell$
    and second order variables $(X_1, \dots, X_k),
    (Y_1, \dots, Y_k), \dots $.
    Let 
    $\rho$ be a mapping from
    $\set{x_1, \dots, x_\ell}$ to $\set{1, \dots, k}$.
    We define by induction on $\psi$ the formula $\psi_{\rho}$
    as follows (it roughly translates the formula
    from $B$ to $A$ using the transduction):
    \begin{align*}
        (\exists x. \phi)_\rho &\defined
        \bigvee_{j = 1}^k
        \exists x. \phi^{\Dom}_j(x) \wedge \phi_{\rho+[x \mapsto j]} \\
        (\exists X. \phi)_{\rho} &\defined
        \exists X_1, \dots, X_k.
            \bigwedge_{j = 1}^k (\forall x \in X_j, \phi^{\Dom}_j(x))
            \wedge \phi_{\rho} \\
        (\neg \phi)_\rho &\defined \neg (\phi_\rho) \\
        (\phi \vee \phi')_{\rho} &\defined \phi_\rho \vee \phi'_{\rho} \\
        (P_a(x))_{\rho} &\defined \phi_{\rho(x)}^a(x) \\
        (x < y)_{\rho} &\defined \phi_{\rho(x), \rho(y)}^{<}(x,y). \\
        (x \in X)_{\rho} &\defined \bigvee_{j = 1}^k
                \phi^{\Dom}_j(x) \wedge (x \in X_j)
    \end{align*}
    \newcommand{\pos}{\small \operatorname{\textsf{pos}}}%
    It is then a mechanical check that the translation 
    works as expected. In the following equation,
    we fix $w \in A^*$ and we let $\pos \colon D(w) \fonc [1{:}|f(w)|]$
    be the function that maps a tuple $(i,j)$ to the corresponding
    position in the word $f(w) \in B^*$.
    To simplify notations, given $\rho \in W$, a word $w \in A^*$,
    and a valuation
    $\tau \colon \set{x_1, \dots, x_\ell} \to [1{:}|w|]$,
    we write $\pos[\tau \times \rho](\vec{x})
    \defined \pos (\tau(x_1), \rho(x_1)), 
    \dots,
    \pos (\tau(x_\ell), \rho(x_\ell))$.
    \begin{align*}
        \card{\phi}(f(w)) 
        &= 
        \card{\setof{
                \nu \colon \set{x_1, \dots, x_\ell} \to [1{:}|f(w)|]
        }{
            f(w) \models \psi(\nu(x_1), \dots, \nu(x_{\ell}))
        }} \\
        &= \sum_{\rho \in W}
        \card{\setof{
                \tau \colon \set{x_1, \dots, x_\ell} \to [1{:}|w|]
        }{
            f(w) \models \psi(\pos[\tau \times \rho](\vec{x}))
        }} \\
        &= \sum_{\rho \in W}
        \card{\setof{
                \nu \colon \set{x_1, \dots, x_\ell} \to \set{1, \dots, |w|}
        }{
            w \models \psi_\rho(\nu)
            \wedge \bigwedge_{i = 1}^\ell \phi^{\Dom}_{\rho(x_i)}(x_i)
        }} \\
    \end{align*}
    We then let $\theta_\rho \defined \psi_\rho
            \wedge \bigwedge_{i = 1}^\ell \phi^{\Dom}_{\rho(x_i)}(x_i)$
    to conclude.
\end{proof}

The result follows immediately since $ \Ve_{\ell}$ is closed
under taking sums and $\Rel$-external products.

\subsection{Proof of \cref{prop:polypoly}}

    We first show that any $\Rel$-polyregular
    function can be written under the form $\polysum \circ g$
    where $g : A^* \fonc \{\pm 1\}^*$ is polyregular.
    This is an immediate consequence of the following claims.

	\begin{claim}
        For all $\varphi \in \MSO$, there exists a polyregular function 
        $f \colon A^* \to \set{\pm 1}^*$ such that  $\card \varphi = \polysum \circ
        f$.
	\end{claim}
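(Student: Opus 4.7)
The plan is to realize $\card{\varphi}$ directly as $\polysum \circ f$ where $f$ is described by a $k$-dimensional $\MSO$-interpretation, which is the canonical presentation of polyregular functions (see \cite{bojanczyk2019string}).

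First, I would specify the target function $f \colon A^* \to \{+1\}^*$ informally: on input $w$, the output of $f(w)$ has one position per $k$-tuple $(i_1, \dots, i_k) \in [1{:}|w|]^k$ such that $w \models \varphi(i_1, \dots, i_k)$, each labelled by $+1$, and the output positions are ordered lexicographically on the index tuples. Then $f(w) = (+1)^{\card{\varphi}(w)}$, so that $\polysum(f(w)) = \card{\varphi}(w)$.

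Second, I would argue that $f$ is polyregular by exhibiting it as a $k$-dimensional $\MSO$-interpretation. The universe of output positions is defined by the formula $\varphi(x_1, \dots, x_k)$ itself, so the filtering of tuples is immediate. The strict order on output positions is the lexicographic order on $k$-tuples of input positions, which is easily defined by an $\MSO$ formula in $2k$ free variables. The labelling is constant: the formula selecting the letter $+1$ is $\top$, and the one for $-1$ is $\bot$. All three ingredients fit the definition of a $k$-dimensional $\MSO$-interpretation, hence $f$ is polyregular with output alphabet $\{\pm 1\}$.

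The main (minor) obstacle is the boundary case $k = 0$, where $\card{\varphi}(w) \in \{0,1\}$ and the ``tuple of positions'' is the empty tuple. Here the construction degenerates, but it suffices to let $f(w) \defined {+}1$ if $w \models \varphi$ and $f(w) \defined \varepsilon$ otherwise; this is a regular (sequential) function, hence polyregular, and again $\polysum \circ f = \card{\varphi}$. Beyond that, everything is a direct unpacking of definitions, and there is no real combinatorial difficulty: the content of the statement is essentially that counting satisfying valuations of an $\MSO$ formula is a special case of enumerating them in lexicographic order, which is exactly what polyregular $\MSO$-interpretations are built to do.
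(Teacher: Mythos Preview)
Your proof is correct and follows essentially the same approach as the paper: both construct $f$ as a $k$-dimensional $\MSO$-interpretation whose domain formula is $\varphi$ itself, whose order is lexicographic, and whose labelling is the constant $+1$. Your treatment is even slightly more explicit about the degenerate case $k=0$, which the paper leaves implicit.
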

    \begin{proof}
        Polyregular functions are characterized
        in  \cite[Theorem 7]{bojanczyk2019string}
        as the functions computed by 
        (multidimensional) $\MSO$-interpretations.
        Recall that an $\MSO$-interpretation
        of dimension $k \in \Nat$
        is given by a formula $\varphi_{\leq}(\vec{x},\vec{y})$ 
        defining a total ordering
        over $k$-tuples of positions,
        a formula $\varphi^{\Dom}(\vec{x})$ that selects
        valid positions,
        and formulas $\varphi^{a}(\vec{x})$ that
        place the letters over the output word
        \cite[Definition 1 and 2]{bojanczyk2019string}.
        In our specific situation, letting
        $\varphi_{\leq}$ be the usual lexicographic ordering of positions 
        (which is $\MSO$-definable)
        and placing the letter $1$ over every element of the output
        is enough: the only thing left to do is select
        enough positions of the output word.
        For that, we let $\varphi^{\Dom}$ be defined as $\varphi$ itself.
        It is an easy check that this $\MSO$-interpretation
        precisely computes $1^{f(w)}$ over $w$,
        hence computes $f$ when post-composed with $\polysum$.
    \end{proof}
	
	\begin{claim}
        The set $\setof{\polysum \circ f}{ 
        f \colon A^* \to \set{\pm 1}^* \text{ polyregular}}$
        is closed under sums and external $\Rel$-products.
	\end{claim}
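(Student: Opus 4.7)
The plan is to exploit the fact that $\polysum \colon \{\pm 1\}^* \to \Rel$ is a monoid morphism from $(\{\pm 1\}^*, \cdot, \varepsilon)$ onto $(\Rel, +, 0)$. This reduces closure under sums to closure of polyregular functions under pointwise output concatenation, and closure under external $\Rel$-products to closure under output repetition and under output renaming. All three properties are classical consequences of the $\MSO$-interpretation characterization of polyregular functions (see \cite{bojanczyk2018polyregular,bojanczyk2019string}), so the work amounts to identifying the right auxiliary polyregular functions.

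For closure under sums, given polyregular $f_1, f_2 \colon A^* \to \{\pm 1\}^*$, I would define $f_3(w) \defined f_1(w) \cdot f_2(w)$ and argue that $f_3$ is polyregular: starting from $\MSO$-interpretations of respective dimensions $k_1$ and $k_2$ for $f_1$ and $f_2$, one builds an interpretation of dimension $\max(k_1,k_2)+1$ in which the extra coordinate is used to order the copies of $f_1$ before those of $f_2$. The morphism property of $\polysum$ then yields $\polysum(f_3(w)) = \polysum(f_1(w)) + \polysum(f_2(w))$, as desired.

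For closure under external $\Rel$-products, I would split on the sign of $\delta$. When $\delta = 0$, the zero function equals $\polysum \circ g$ for the (trivially polyregular) constant function $g \colon w \mapsto \varepsilon$. When $\delta > 0$, I would iterate the previous sum construction $\delta$ times to obtain a polyregular $f'(w) \defined f(w)^\delta$ satisfying $\polysum(f'(w)) = \delta \, \polysum(f(w))$. When $\delta < 0$, I would first post-compose $f$ with the letter-to-letter morphism $\sigma \colon \{\pm 1\} \to \{\pm 1\}$ swapping $+1$ and $-1$; since $\sigma$ is rational (hence regular) and polyregular functions are closed under post-composition with regular functions, $\sigma \circ f$ is polyregular, computes $-\polysum \circ f$, and reduces the problem to the case $-\delta > 0$.

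The main obstacle is, in fact, quite modest: it consists entirely in invoking the above three closure properties of polyregular functions, all of which follow directly from the $\MSO$-interpretation definition. No genuine subtlety appears because $\polysum$, being a morphism, transports the algebraic structure of $(\Rel,+)$ faithfully to operations on $\{\pm 1\}^*$ that polyregular functions already support.
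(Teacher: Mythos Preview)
Your proposal is correct and follows essentially the same approach as the paper: use pointwise output concatenation (and closure of polyregular functions under it) for sums, and post-composition with the $\pm 1$-swapping morphism for negation. The paper is terser---it simply observes that closure under external $\Rel$-products reduces to closure under negation once sums are available---whereas you spell out the cases $\delta=0$, $\delta>0$, $\delta<0$ explicitly, but the underlying argument is identical.
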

    \begin{proof}
        Notice that $\polysum \circ f + \polysum \circ g = \polysum \circ (f
        \cdot g)$ where $f \cdot g(w) \defined f(w) \cdot g(w)$.
        As polyregular functions are closed under concatenation
        \cite{bojanczyk2018polyregular},
        the set of interest is closed under sums.
        To prove that it is closed under external $\Rel$-products,
        it suffices to show that it is closed under negation.
        This follows because one can permute the
        $1$ and $-1$ in the output of a polyregular function
        (polyregular functions are closed under post-composition
        by a morphism).
    \end{proof}

    Let us consider a \kl{polyregular function}
	$g \colon A^* \to \set{ \pm 1 }^*$.
	The maps $g_+ \colon w \mapsto |g(w)|_1$
	and $g_- \colon w \mapsto |g(w)|_{-1}$
    are \kl{polyregular functions with unary output}
	(since they correspond to a post-composition by
	the regular function which removes some letter, and polyregular functions are
	closed under post-composition by a regular function \cite{bojanczyk2018polyregular}).
	Hence $g_-$ and $g_+$ are polyregular functions with unary output,
    a.k.a. $\Nat$-polyregular functions.
    As a consequence,
	${\polysum} \circ g = g_+ - g_-$ lies in $\Ve$.

\section{Proofs of \cref{ssec:expressions}}

\subsection{Proof of \cref{claim:vks:cauchyincrease}}

Let $f \in \Ve_{k}$  and $g \in \Ve_{\ell}$, we
(effectively) show that  $f \cauchy g \in \Ve_{k+\ell+1}$.

First, observe that if $f,g,h : A^* \fonc \Rel$ and $\gamma, \delta \in \Rel$, then
$(\gamma f+ \delta g) \cauchy h = \gamma (f \cauchy g) + \delta (g \cauchy h)$.
Thus it is sufficient to show the result for $f = \card \phi$ and $g = \card \psi$
with $\phi (x_1, \dots, x_k) \in \MSO_{k}$ and $\psi(y_1, \dots, y_{\ell}) \in \MSO_{\ell}$.
For all $w \in A^*$ we have:
	\begin{align*}
		(\card \phi \cauchy \card \psi) (w)
		 & =
		\sum_{0 \leq i \leq |w|}
		\sum_{i_1, \dots, i_k \leq i}
		\sum_{j_1, \dots, j_{\ell} > i}
		\mathbf{1}_{w[1{:}i] \models \phi(i_1, \dots, i_k)}
		\times
		\mathbf{1}_{w[i{+}1:|w|] \models \psi'(j_1, \dots, j_{\ell})}                             \\
		 & =
         \card{\phi}(\movi) \cdot \card{\psi}(w)  \\
         &+
		\sum_{1 \leq i \leq |w|}
		\sum_{i_1, \dots, i_k \leq i}
		\sum_{j_1, \dots, j_{\ell} > i}
		\mathbf{1}_{w[1{:}i] \models \phi(i_1, \dots, i_k)}
		\times
		\mathbf{1}_{w[i{+}1:|w|] \models \psi'(j_1, \dots, j_{\ell})}                             \\
		 & = 
         \card{\phi}(\movi) \cdot \card{\psi}(w) 
         + \card (\phi'(z,x_1, \dots, x_k) \wedge \psi'(z,y_1, \dots, y_l))
		(w) \qedhere
	\end{align*}
where $\phi'(z, x_1, \dots, x_k) \in \MSO_{k+1} $ is a formula such that $w \models \phi'(i,i_1, \dots, i_k)$
if and only if $i_1, \dots, i_k \le i$ and $w[1{:}i] \models \phi(i_1, \dots, i_k$)
(this is a regular property which is $\MSO$ definable), and similarly
for $\psi'$.

\subsection{Proof of \cref{lem:vks:inductcauchy}}

	Let $k \geq 0$, we want to show that
	$\Ve_{k+1} = \Vect\Rel{{}}(\setof{\mathbf{1}_{L} \cauchy f}{L \text{\normalfont~regular}, f \in \Ve_k})$.
    Observe that for all
    $f \colon A^* \to \Rel$,
    $\car{\set{\movi}} \cauchy f$ equals $f$,
    therefore 
    $\Ve_k \subseteq \Vect\Rel{{}}(\setof{\mathbf{1}_{L} \cauchy f}{L \text{\normalfont~regular}, f \in \Ve_k})$.
	As in the proof of  \cref{claim:vks:cauchyincrease}, it is sufficient
	to show that $\card \phi$ for $\phi(x_1, \dots, x_{k+1}) \in \MSO_{k+1}$,
	can be written as a linear combination
	of $\car{L} \cauchy f$ where $L$ is a regular language.
	Observe that for all $w \in A^+$, for
    all valuation $i_1, \dots, i_k$ of $x_1, \dots, x_k$,
    we can define $P \defined \setof{1 \le j \le k}{i_j = \min \{i_1, \dots, i_k\}}$
    (i.e. the $x_j$ for $j \in P$ are the variables with minimal value).
    Therefore, for all $w \in A^+$:
\begin{equation*}
\begin{aligned}
\card \varphi (w)
	=
    \sum_{\emptyset \subsetneq P \subseteq [1{:}k]}
	\sum_{w = uv, u \neq \movi}
    \card (\varphi \wedge \bigwedge_{j \in P} x_j = |u| \wedge 
        \bigwedge_{j
    \not \in P} x_j >
		|u|) (w).
\end{aligned}
\end{equation*}

It is an easy check that one can
(effectively) build a regular language $L^P \subseteq A^+$
and a formula $\psi^P$
such that
for all $u \in A^+$, $v \in A^*$,
$uv \models \varphi \wedge \bigwedge_{j \in P} (x_j = |u|) \wedge
(\bigwedge_{j \not \in P} x_j >
	|u|)$
if and only if
$u \in L^P$ and $v \models \psi^P( (x_j)_{j \not \in P} )$.
Thus, for all $w \in A^+$:

\begin{align*}
\card \varphi (w)
    &=
    \sum_{\emptyset \subsetneq P \subseteq [1{:}k]}
	\sum_{w = uv}
    \car{L^P} (u) \times \card{\psi^P}(v)
    \\
    &= 
    \underbrace{
    \sum_{\emptyset \subsetneq P \subseteq [1{:}k]}
    (\car{L^P} \cauchy \card{\psi^P})}_{\defined g} (w)
    \quad .
\end{align*}

Notice that $\psi^P$ has exactly
$k - |P| \leq k - 1$ free-variables, thus 
$g$ belongs to $\Vect\Rel{{}}(\setof{\mathbf{1}_{L} \cauchy f}{L \text{\normalfont~regular}, f \in \Ve_k})$.
Observe moreover that $g(\movi) = 0 = \card{\varphi}(\movi)$ because
$k+1 > 0$.

\section{Proofs of \cref{ssec:matrix}}

\subsection{Proof of \cref{lem:capturing-eigenvalues}}

	 Let $f \colon A^* \to \Rel$ be a $\Rel$-rational series and
	$(I, \mu, F)$ be a minimal $\Rel$-linear representation of $f$ of
         dimension $n$. First note that $(I, \mu, F)$
	is also a minimal $\Rat$-linear representation of $f$
	by \cite[Theorem 1.1 p 121]{berstel2011noncommutative}
	($\Rat$-linear representations are defined by allowing
	rational coefficients whithin the matrices and vectors, instead of integers).
	Let $w \in A^*$, $\lambda \in \Spec(\mu(w))$
	and consider a complex eigenvector $V \in \Mat{n,1}(\Com)$
	associated to $\lambda$. We let $||V|| \defined {}^t V V$,
	observe that it is a positive real number.
	Because $(I,\mu,F)$ is a minimal $\Rat$-linear
	representation of $f$, then
	$\Vect{\Rat}(\setof{\mu(u) F}{u \in A^*}) = \Rat^n$
	by \cite[Proposition~2.1 p~32]{berstel2011noncommutative}.
	Hence there exists numbers $\alpha_j \in \mathbb{C}$
	and words $u_j \in A^*$
	such that $V = \sum_{j = 1}^n \alpha_j \mu(u_j) F$.
	Symmetrically by \cite[Proposition 2.1 p 32]{berstel2011noncommutative},
	there exists numbers $\beta_i \in \mathbb{C}$ and words $v_i \in A^*$
	such that ${}^t V = \sum_{i = 1}^n \beta_i I \mu(v_i)$.
	Therefore:
	\begin{align*}
		\lambda^X ||V||      = {}^t V \mu(w)^X V &          
		             = \sum_{i ,j= 1}^{n}
		\alpha_i \beta_j  I \mu(v_i w^X u_j) F              
		             = \sum_{i ,j= 1}^{n}
		\alpha_i \beta_j  f(v_i w^X u_j).
        \qedhere
	\end{align*}
	The result follows since $||V|| \neq 0$ (it is an eigenvector).

\subsection{Proof of \cref{lem:Zpoly-pump-first}}

\label{pro:pump}

	If $L$ is a regular language, the fact that
	$\car{L}$ is $N$-polynomial for some $N \ge 0$
	follows from the traditional pumping lemmas.
	Now let $f, g \colon A^* \to \Rel$ be respectively ultimately
	$N_1$-polynomial and ultimately $N_2$-polynomial.
	The fact that $f + g$ and $\delta f$ for $\delta \in \Rel$ are
	ultimately $(N_1 \times N_2)$-polynomial is obvious.
	In the rest of \cref{pro:pump}, we focus on the main
	difficulty which is the Cauchy product of two functions.
	For that, we will first prove the following claim
	about Cauchy products of polynomials.

    \begin{claim}
        \label{claim:sum-is-polynomial}
        For every $p \in \Nat$, $\sum_{i = 0}^X i^p$ is a polynomial in $X$.
    \end{claim}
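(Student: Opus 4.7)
The plan is to prove the claim by induction on $p \in \Nat$, using a telescoping identity together with the binomial theorem. The base case $p = 0$ is immediate because $\sum_{i=0}^X i^0 = X+1$ is already a polynomial in $X$ (with the convention $0^0 = 1$; otherwise $\sum_{i=0}^X i^0 = X$).

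For the inductive step, assume that $\sum_{i=0}^X i^q$ is a polynomial in $X$ for every $q < p$. I would start from the telescoping identity
\begin{equation*}
(X+1)^{p+1} = \sum_{i=0}^X \bigl((i+1)^{p+1} - i^{p+1}\bigr),
\end{equation*}
expand each difference with the binomial theorem as
\begin{equation*}
(i+1)^{p+1} - i^{p+1} = \sum_{k=0}^{p} \binom{p+1}{k} i^k,
\end{equation*}
and swap the two sums to obtain
\begin{equation*}
(X+1)^{p+1} = \sum_{k=0}^{p} \binom{p+1}{k} \sum_{i=0}^X i^k.
\end{equation*}
Isolating the term $k = p$, whose binomial coefficient is $p+1 \neq 0$, yields
\begin{equation*}
(p+1) \sum_{i=0}^X i^p = (X+1)^{p+1} - \sum_{k=0}^{p-1} \binom{p+1}{k} \sum_{i=0}^X i^k.
\end{equation*}
By the induction hypothesis each inner sum on the right is a polynomial in $X$, and $(X+1)^{p+1}$ is obviously a polynomial, so the right-hand side is a polynomial in $X$. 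Dividing by $p+1$ (which is legal since we work over $\Rat$) shows that $\sum_{i=0}^X i^p$ is a polynomial in $X$, completing the induction.

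There is no real obstacle here: the claim is a classical instance of Faulhaber's formula, and the argument above additionally shows that the resulting polynomial has degree $p+1$. This degree information is what will matter when the claim is invoked to analyse the Cauchy product $f \cauchy g$ of two ultimately polynomial functions in the remainder of the proof of \cref{lem:Zpoly-pump-first}, since Cauchy products naturally introduce a summation of the form $\sum_{i}$ applied to a polynomial in $i$ and in the pumping parameters.
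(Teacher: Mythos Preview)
Your proof is correct. The paper takes a different route: it argues via finite differences, first showing by induction that any function $f \colon \Nat \to \Rat$ with $\Delta^p f = 0$ for some $p \ge 1$ must be a polynomial (where $\Delta f(n) = f(n+1) - f(n)$), and then observing that $\Delta^{p+2}\bigl(X \mapsto \sum_{i=0}^X i^p\bigr) = 0$. Your telescoping-plus-binomial argument is the classical derivation of Faulhaber's formula and has the minor advantage of yielding the degree $p+1$ explicitly; the paper's finite-difference argument is slightly more abstract and reusable (any function annihilated by some iterated difference is polynomial). Both approaches are elementary and equally adequate here, since the subsequent proof of \cref{claim:cauchy-product-polynomials} only needs that the sum is \emph{some} polynomial, not its exact degree.
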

    \begin{proof}
        It is a folklore result, but let us prove it using finite differences.
        If $f \colon \Nat \to \Rat$, let $\Delta f \colon n \mapsto f(n+1) - f(n)$.
        Let us now prove by induction that every function $f \colon \Nat \to
        \Rat$ such that $\Delta^{p} f = 0$ for some $p \ge 1$
        is a polynomial. For $p = 1$, this holds because
        $f$ must be constant.
        For $p + 1 > 1$, 
        if we assume that $\Delta^{p+1} f = 0$,
        then $\Delta^{p} f$ is a constant $C$. Let 
        $g \defined f - C \frac{n^p}{p!}$, 
        and remark that $\Delta^{p} g = 0$.
        By induction hypothesis $g$ is a polynomial, hence so is $f$.

        Finally, a simple induction proves that $\Delta^{p+2} (X \mapsto \sum_{i = 0}^X i^p) = 0$.
    \end{proof}

    \begin{claim}
        \label{claim:cauchy-product-polynomials}
        Let $P, Q \in \Rat[X, Y_1, \dots, Y_\ell]$ be
        two multivariate polynomials,
        then their Cauchy product
        $P \cauchy Q (X, Y_1, \dots, Y_\ell)
        \defined \sum_{i = 0}^{X} P(i, Y_1, \dots, Y_\ell) Q(Y - i, Y_1, \dots,
        Y_\ell)$
        belongs to $\Rat[X, Y_1, \dots, Y_\ell]$.
    \end{claim}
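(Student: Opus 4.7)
The plan is to reduce to the case of monomials by linearity, and then apply the previous \cref{claim:sum-is-polynomial}. Indeed, the Cauchy product operation $\cauchy$ defined here is bilinear in $(P, Q)$: for all polynomials $P_1, P_2, Q$ and scalars $\alpha, \beta \in \Rat$, we have $(\alpha P_1 + \beta P_2) \cauchy Q = \alpha (P_1 \cauchy Q) + \beta (P_2 \cauchy Q)$, and symmetrically on the right. Hence, writing $P$ and $Q$ as $\Rat$-linear combinations of monomials of the form $X^p M(\vec{Y})$, respectively $X^q N(\vec{Y})$, it suffices to treat this case.

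For such monomials, I would compute directly:
\begin{equation*}
(X^p M) \cauchy (X^q N) (X, \vec{Y}) = M(\vec{Y}) N(\vec{Y}) \sum_{i=0}^{X} i^p (X-i)^q.
\end{equation*}
Expanding $(X-i)^q$ via the binomial theorem and swapping the order of summation gives
\begin{equation*}
\sum_{i=0}^{X} i^p (X-i)^q = \sum_{k=0}^{q} \binom{q}{k} (-1)^k X^{q-k} \sum_{i=0}^{X} i^{p+k}.
\end{equation*}
By \cref{claim:sum-is-polynomial}, each inner sum $\sum_{i=0}^X i^{p+k}$ lies in $\Rat[X]$, and the outer sum is a finite $\Rat$-linear combination of products of such polynomials with powers of $X$. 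Therefore the whole expression belongs to $\Rat[X]$, and multiplying by $M(\vec{Y}) N(\vec{Y}) \in \Rat[\vec{Y}]$ yields a polynomial in $\Rat[X, \vec{Y}]$.

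There is no real obstacle here beyond carefully justifying the swap of summation and the bilinearity reduction; the heavy lifting has already been done in \cref{claim:sum-is-polynomial}. The proof is essentially a calculation, and the same argument can be used later in \cref{pro:pump} to show that the Cauchy product of two ultimately $N$-polynomial functions is ultimately $N$-polynomial, by applying this claim to the polynomial describing the values of $f \cauchy g$ on a pumped pattern and summing over the splitting position.
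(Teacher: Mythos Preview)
Your proof is correct and essentially identical to the paper's own: both reduce by bilinearity to monomials $X^p M(\vec{Y})$ and $X^q N(\vec{Y})$, factor out the $\vec{Y}$-part, expand $(X-i)^q$ by the binomial theorem, swap the sums, and invoke \cref{claim:sum-is-polynomial} on $\sum_{i=0}^X i^{p+k}$. The only cosmetic difference is the indexing in the binomial expansion (you sum over the exponent of $-i$, the paper over the exponent of $X$).
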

    \begin{proof}
        By linearity of the Cauchy product, it suffices
        to check that the result holds 
        for products of the form
        $(X^p Y_1^{p_1} \cdots Y_\ell^{p_\ell})
        \cauchy 
        (X^q Y_1^{q_1} \cdots Y_\ell^{q_\ell})
        = (X^p \cauchy X^q) \times 
        Y_1^{p_1} \cdots Y_\ell^{p_\ell}
        Y_1^{q_1} \cdots Y_\ell^{q_\ell}$.
        Hence, the only thing left to check
        is that $X^p \cauchy X^q$ is a polynomial in $X$.

        \begin{align*}
        X^p \cauchy X^q (Y)
        &= \sum_{i = 0}^Y i^p (Y - i)^q \\
        &= \sum_{i = 0}^Y i^p \sum_{k = 0}^q \binom{q}{k} Y^k (-i)^{q - k} \\
        &= \sum_{k = 0}^q \binom{q}{k} Y^k \sum_{i = 0}^Y i^p (-i)^{q - k} \\
        &= \sum_{k = 0}^q \binom{q}{k}  (-1)^{q - k} Y^k \sum_{i = 0}^Y i^{p + q - k}
        \end{align*}

        Which is a polynomial thanks to
        \cref{claim:sum-is-polynomial}.
    \end{proof}

    Let us now prove that $f \cauchy g$ is ultimately
    $N \defined(N_1\times N_2)$-polynomial.
    For that, let us consider
    $\alpha_0, u_1, \alpha_1, \dots, u_\ell, \alpha_\ell \in A^*$
    and prove that 
    $(f \cauchy g) (\alpha_0 u_1^{N X_1} \alpha_1 \cdots u_{\ell}^{N
    X_{\ell}}\alpha_{\ell})$
    is a polynomial for $X_1, \dots, X_{\ell}$ large enough.

    \begin{equation*}
        \begin{aligned}
             (f \cauchy g) (\alpha_0 u_1^{N X_1} \alpha_1 \cdots u_{\ell}^{N X_{\ell}}\alpha_{\ell})
            & = f(\alpha_0 u_1^{N X_1} \alpha_1 \cdots
            u_{\ell}^{N X_{\ell}}\alpha_{\ell})g(\movi) \\
             & + \sum_{j=0}^{\ell} \sum_{i=0}^{|\alpha_j|-1}
            f(\alpha_0 u_1^{N X_1} \alpha_1 \cdots u_j^{N X_j}(\alpha_{j}[1{:}i]))\\
            &\times g((\alpha_{j}[i{+}1{:}|\alpha_j|]) u_{j+1}^{N X_{j+1}}\cdots \alpha_{\ell})        \\
             & + \sum_{j=1}^{\ell}  \sum_{i=0}^{|u_j^N|-1} \sum_{Y=0}^{X_{j}-1}
             f(\alpha_0 u_1^{N X_1} \alpha_1 \cdots u_j^{N Y} (u_j^N[1{:}i]))
             \times
             \\
             &
            \quad \quad g((u_j^N[i{+}1{:}|u_j^N|])u_j^{N(X_j-Y-1)} \cdots \alpha_{\ell})
        \end{aligned}
    \end{equation*}

    From the hypothesis on $f$, we deduce that the first term of this sum
    is ultimately $N_1$-polynomial, hence ultimately $N$-polynomial.
    We conclude similarly for the second term of this sum,
    because the product of two polynomials is a polynomial.

    Let us now focus on the third term.
    Using the induction hypotheses on $f$ and $g$,
    there exists polynomials
    $P_{j,i}$ and $Q_{j,i}$ such that the following equalities
    ultimately hold,
    where $(X_1, \dots, \hat{X_j}, \dots X_\ell)$
    denotes the tuple obtained by removing the $j$-th element
    from $(X_1, \dots, X_\ell)$:
    \begin{align*}
        f(\alpha_0 u_1^{N X_1} \alpha_1 \cdots u_j^{N Y} (u_j^N[1{:}i]))
        &=
        P_{j,i}(Y, X_1, \dots, \hat{X_j}, \dots X_\ell)
        \\
        g((u_j^N[i{+}1{:}|u_j^N|])u_j^{N(X_j-Y-1)} \cdots \alpha_{\ell})
        &= Q_{j,i}(Y, X_1, \dots, \hat{X_j}, \dots X_\ell)
    \end{align*}

    As a consequence, we can rewrite the third term as a Cauchy product
    of polynomials for large enough values of $X_1, \dots, X_\ell$:
    \begin{align*}
             & \sum_{j=1}^{\ell}  \sum_{i=0}^{|u_j^N|-1} \sum_{Y=0}^{X_{j}-1}
             f(\alpha_0 u_1^{N X_1} \alpha_1 \cdots u_j^{N Y} (u_j^N[1{:}i]))
            g((u_j^N[i{+}1{:}|u_j^N|])u_j^{N(X_j-Y-1)} \cdots \alpha_{\ell}) \\
             &= 
             \sum_{j=1}^{\ell}  \sum_{i=0}^{|u_j|-1} \sum_{Y=0}^{X_{j}-1}
             P_{j,i} (Y, X_1, \dots, \hat{X_j}, \dots, X_\ell)
             Q_{j,i} (X_j - Y - 1, X_1, \dots, \hat{X_j}, \dots, X_\ell) \\
             &= 
             \sum_{j=1}^{\ell}  \sum_{i=0}^{|u_j|-1} 
             P_{i,j} \cauchy Q_{j,i} (X_j - 1)
    \end{align*}

    Thanks to \cref{claim:cauchy-product-polynomials}, we conclude that
    this third term is also ultimately a polynomial.

\section{Proofs of \cref{sec:pebblemin}}

\subsection{Proof of \cref{claim:skel:skeletons-totally-ordered}}

    First of all, given a leaf $x \in \Leaves(F)$,
    $\Ske(x) = \{ x \}$ contains $x$. Hence, every leaf
    is contained in at least one skeleton. It remains
    to show that if  $\nod$ and $\nod'$ are two nodes such that
    $x \in \Ske(\nod)$ and $x \in \Ske(\nod')$,
    then $ \Ske(\nod) \subseteq \Ske(\nod')$
    or the converse holds.
    
	As $\Ske(\nod)$ contains only children of $\nod$, one
	deduces that $x$ is a children of both $\nod$ and $\nod'$.
	Because $F$ is a tree, parents of $x$ are totally ordered by their
	height in the tree. As a consequence, without loss of generality,
	one can assume that $\nod$ is a parent of $\nod'$.
	Because $\Ske(\nod)$ is a subforest of $F$ containing $x$,
	it must contain $\nod'$. Now, by definition of skeletons,
	it is easy to see that whenever $\nod' \in \Ske(\nod)$, we have
	$\Ske(\nod') \subseteq \Ske(\nod)$.

%\subsection{Proof of \cref{lem:skel:to-forest}}

%\begin{proof}
%Let $f \in \Ve_k$ be given as a linear combination
%$f = \sum_{i = 1}^n \lambda_i \times \card{\phi_i}$.
%One can first make sure that all formulas have the same number of
%free variables by adding dummy ones.

%\al{@gd: je ne sais pas faire ça formellement donc je te laisse le faire}
%Then, one can build a finite monoid $M$ factorizing all formulas
%together with a surjective morphism $\mu \colon A^* \to M$.
%By definition, the evaluation of $\phi_i$ over a word $w$ only depends
%on its image by $\mu$. This can easily be rewritten as an ``invariant formula''
%$\psi_i$ on the forest.

%As a consequence, $f(w) = \sum_{i = 1}^{n} \lambda_i \times \card{\psi_i}
%(F)$ whenever $F$ is a $\mu$-factorization of $w$.
%\end{proof}
\subsection{Proof of \cref{rem:skel:bound-depnodes}}

	Let $x \in \Leaves(F)$, we show that the number of
	$x'$ such that $x'\dependson x$ is bounded (independently
	from $x$ and $F \in \forests{\mu}{d}$).
	Observe that $\partitionskel(x')$ is either an ancestor
	or the sibling of an ancestor of $\partitionskel(x)$.
	Observe that for all $\nod \in \Nodes(F)$, $\Ske(\nod)$ is a
	binary tree of height at most $d$, thus is has at most
	$2^d$ leaves.
	Moreover, $\partitionskel(x)$ has at most
	$d$ ancestors and $2d$ immediate siblings of its ancestors.
	As a consequence, there are at most
    $3d \times 2^d$ leaves that depend on $x$.

\subsection{Proof of \cref{lem:skel:inv-decomp}}

	Let $d \ge 0$, $M$ be a finite monoid,
    $\mu \colon A^* \to M$, $k \geq 1$, and $\psi \in \INV_k$.
    We want to build a function $g \in \Ve_{k-1}$
    such that for every $F \in \forests{\mu}{d}$,
	$g(F) = \card (\psi(\vec{x}) \wedge \msodep(\vec{x}))(F)$
	(since $\forests{\mu}{d}$ is a regular language of
	$\hat{A}^*$, it does not matter how $g$ is defined
	on inputs $F \not \in \forests{\mu}{d}$).

First, we use the lexicographic order to find the first pair $(x_i,x_j)$ that is
dependent in the tuple $\vec{x}$. This allows to partition our set of
valuations as follows:
\begin{align*}
	 & \setof{\vec{x} \in \Leaves(F)}{F, \vec{x} \models \psi \wedge
		\msodep(\vec{x})}                                                \\
	 & = \biguplus_{1 \leq i < j \leq n}
	\setof{\vec{x} \in \Leaves(F)}{F, \vec{x} \models \psi \wedge
		\msodep(x_i, x_j) \wedge
		\bigwedge_{(k,{\ell}) \lexl (i,j)} \neg \msodep(x_k, x_{\ell})
	}
	\\
	 & = \biguplus_{1 \leq i < j \leq n}
	\setof{\vec{x} \in \Leaves(F)}{F, \vec{x} \models
		\underbrace{\psi \wedge
			x_j \operatorname{\mathsf{depends}-\mathsf{on}} x_i \wedge
			\bigwedge_{(k,{\ell}) \lexl (i,j)} \neg \msodep(x_k, x_{\ell})}_{
			\defined
			\psi_{i \to j}(\vec{x})
		}
	}                                                                \\
	 & \quad\quad\quad\quad \cup
	\setof{\vec{x} \in \Leaves(F)}{F, \vec{x} \models \underbrace{
			\psi \wedge
			x_i \operatorname{\mathsf{depends}-\mathsf{on}} x_j \wedge
			\bigwedge_{(k,{\ell}) \lexl (i,j)} \neg \msodep(x_k, x_{\ell})
		}_{\defined \psi_{i \leftarrow j}(\vec{x})}
	}
\end{align*}
As a consequence,
$\card{(\psi \wedge \msodep)}
	=
	\sum_{1 \leq i < j \leq n}
	\card{\psi_{i \to j}}
	+ \card{\psi_{i \leftarrow j}}
	- \card{\psi_{i \to j} \wedge \psi_{i \leftarrow j}}
$
(the last term removes the cases when 
both $x_i \dependson x_j$ and  $x_j \dependson x_i$,
which occurs e.g. when $x_i = x_j$).

We can now rewrite this sum using $\intro\existsexactly{\ell} x_j. \psi$
to denote the fact that there exists exactly $\ell$ different values for $x$
so that $\psi(\dots, x_j, \dots)$ holds (this quantifier
is expressible in $\MSO$ at every fixed $\ell$).
Thanks to \cref{rem:skel:bound-depnodes}, there exists a bound
$N_d$ over the maximal number of leaves that dependent on a leaf $x_i$
(among forests of depth at most $d$.) Hence: 

\begin{align*}
	\card{(\psi \wedge \msodep)}
    &=
	\sum_{1 \leq i < j \leq n}
	\card{\psi_{i \to j}}
	+ \card{\psi_{i \leftarrow j}}
	- \card{\psi_{i \to j} \wedge \psi_{i \leftarrow j}}
    \\
    &=
	\sum_{1 \leq i < j \leq n}
	\sum_{0 \leq \ell \leq N_d} \ell \cdot \card{\exists^{=\ell} x_j. \psi_{i \to j}}
    \\
    &+
	\sum_{1 \leq i < j \leq n}
    \sum_{0 \leq \ell \leq N_d} \ell \cdot \card{\exists^{=\ell} x_i. \psi_{i
    \leftarrow j}}
    \\
    &-
	\sum_{1 \leq i < j \leq n}
    \sum_{0 \leq \ell \leq N_d} \ell \cdot \card{\exists^{=\ell} x_i. \psi_{i \to j} \wedge \psi_{i \leftarrow j}}
\end{align*}

\subsection{Proof of \cref{lem:skel:indep-iter-zero}}

In order to prove \cref{lem:skel:indep-iter-zero},
we consider $f$ such that $\findep \neq 0$.
Our goal is to construct a pumping family to exhibit a
growth rate of $\findep$.
To construct such a pumping family, we will rely on the
fact that independent tuples of leaves have a very specific behavior
with respect to the factorization forest.
Given a node $\nod$, we write
$\intro\skelstart(\nod) \defined \min_{\leq} \{ y \in \Leaves(F) \cap
	\Ske(\nod) \}$
and $\intro\skelend(\nod) \defined \max_{\leq} \{ y \in \Leaves(F) \cap \Ske(\nod)
	\}$.

\begin{claim}
    Let $x_1, \dots, x_k$ be an independent tuple of $k \geq 1$ leaves
    in a forest $F \in \forests{\mu}{d}$ factorizing a word $w$.
    Let $\vec{\nod}$ be the vector of nodes such that
    $\nod_i \defined \partitionskel(x_i)$ for all $1 \le i \le k$.
    One can order the $\nod_i$
    according to their position in the word $w$
    so that
    $1 < \skelstart(\nod_1) \leq \skelend(\nod_1) < \dots < \skelstart(\nod_k)
	\leq \skelstart(\nod_k) < |w|$.
\end{claim}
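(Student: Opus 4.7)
The plan is to derive both the ordering of the nodes and the strict boundary inequalities from two consequences of the independence assumption: the $\nod_i$ are pairwise incomparable in the tree order of $F$, and none of them coincides with a root of $F$.

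For the pairwise incomparability, I would observe that if $\nod_i$ were an ancestor of $\nod_j$ (including the case $\nod_i = \nod_j$), then $\nod_i$ would observe $\nod_j$ by the first clause of \cref{def:observation}, contradicting $\neg \msodep(x_i, x_j)$. Since $\Ske(\nod)$ sits inside the subtree of $F$ rooted at $\nod$, and since its recursive definition always reaches the leftmost and rightmost descendants, $\skelstart(\nod)$ and $\skelend(\nod)$ coincide with the positions of the leftmost and rightmost leaves of that subtree. The standard property that the subtrees of incomparable nodes in an ordered tree span disjoint leaf intervals then yields pairwise disjoint intervals $[\skelstart(\nod_i),\skelend(\nod_i)]$, and reordering by position gives $\skelend(\nod_i) < \skelstart(\nod_{i+1})$ for every $1 \leq i < k$.

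The main obstacle is the strict boundary inequality $1 < \skelstart(\nod_1)$ (and symmetrically $\skelend(\nod_k) < |w|$). First, independence forces each $\nod_i$ to admit a proper parent $\nod_i^+$: the $k = 1$ case is exactly the clause $\neg \msodep(x_1)$, while for $k \geq 2$ a root-$\nod_i$ would be an ancestor of every other $\nod_j$ and hence break independence. Next, the maximality built into the definition of $\partitionskel(x_i) = \nod_i$ forces $\nod_i$ to be a non-extremal child of $\nod_i^+$: otherwise $\nod_i$ would be the leftmost or the rightmost child of $\nod_i^+$, giving $\Ske(\nod_i) \subseteq \Ske(\nod_i^+)$ and hence $x_i \in \Ske(\nod_i^+)$, contradicting the maximality of $\nod_i$. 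Therefore $\nod_i^+$ admits a leftmost child $c_L \neq \nod_i$ and a rightmost child $c_R \neq \nod_i$; since every subforest in a factorization forest carries at least one leaf, the leaves of $c_L$ and $c_R$ lie strictly to the left and right of the subtree of $\nod_i$, yielding $\skelstart(\nod_i) \geq \skelend(c_L) + 1 > 1$ and $\skelend(\nod_i) \leq \skelstart(c_R) - 1 < |w|$. Applying this at $i = 1$ and $i = k$ concludes.
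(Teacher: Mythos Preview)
Your proof is correct and follows essentially the same approach as the paper's: both use that independence forbids ancestor relations among the $\nod_i$ (giving disjoint leaf intervals), and that $\partitionskel(x_i)$ being the root contradicts independence (giving the strict boundaries). Your argument is in fact more explicit than the paper's terse contrapositive: you spell out the key step that maximality of $\partitionskel$ forces each $\nod_i$ to be a non-extremal child of its parent, which the paper leaves implicit when asserting that $\skelstart(\nod_i)=1$ implies $\nod_i$ is the root.
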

\begin{claimproof}
    Assume by contradiction that there exists a pair $i < j$
    such that $\skelstart(\nod_j) \geq \skelend(\nod_i)$.
    We
    then 
    know that $\skelstart(\nod_i) \leq \skelstart(\nod_j) \leq
    \skelend(\nod_i)$.
    In particular, 
    $\partitionskel(\skelstart(\nod_i)) = \nod_i$
    is an ancestor of $\skelstart(\nod_j)$,
    hence $\nod_i$ is an ancestor of $\nod_j$.
    This contradicts the independence of $\vec{x}$.

    Assume by contradiction that there exists $i$ such that
    $\skelstart(\nod_i) = 1$ (resp. $\skelend(\nod_i) = |w|)$.
    Then $\partitionskel(x_i)$ must be the root of $F$, but
    then $\vec{x}$ cannot be an independent tuple.
\end{claimproof}

Given an independent tuple $x_1, \dots, x_k \in \Leaves(F)$, with
$\partitionskel(\vec{x}) = \vec{\nod}$, ordered by their
position in the word, let us define
$m_0 \defined \mu(w[1{:}\skelstart(\nod_1) {-}1])$,
$m_k \defined \mu(w[\skelend(\nod_k) {+} 1 {:} w_{|w|}])$
and
$m_i \defined \mu(w[\skelend(\nod_k) {+} 1{:}\skelstart(\nod_{i+1}) {-} 1])$
for $1 \leq i \leq k-1$.

\begin{definition}[Type of a tuple of $\partitionskel$]
	Let $F \in \forests{\mu}{d}$ factorizing a word $w$,
    $\vec{x}$ be an independent tuple of leaves in $F$,
    and $\vec{\nod} = \partitionskel(\vec{x})$.
	Without loss of generality assume that the nodes
	are ordered by $\skelstart$.
	The \emph{type} $\skeltype(\vec{\nod})$
	in the forest $F$ is
	defined as
	the tuple $(m_0, \Ske(\nod_1), m_1, \dots, m_{k-1},
		\Ske(\nod_k), m_k)$.
\end{definition}

At depth $d$, there are finitely many possible types
for tuples of $k$ nodes,
which we collect in the set $\Types_{d,k}$.
Moreover, given a type $T \in \Types_{d,k}$, one can build
the $\MSO$ formula $\skelhastype{T}(\vec{\nod})$ over $\forests{\mu}{d}$
that tests whether a tuple of \emph{nodes} $\vec{\nod}$
is of type $T$, and can be obtained
as $\partitionskel(\vec{x})$ for some tuple $\vec{x}$ of independent
leaves.
The key property of types
is that counting types is enough to count independent
valuations for a formula $\psi \in \INV$.

\begin{claim}
	\label{app:claim:bijection-types}
    Let $k \geq 1$, $d \geq 0$, $M$ be a finite monoid, $\mu \colon A^* \to M$
    be a morphism.
	Let $T \in \Types_{d,k}$, $F \in \forests{\mu}{d}$,
	$\vec{x}$ and $\vec{y}$ be two $k$-tuples of independent leaves
	of $F$ such that $\skeltype(\partitionskel(x_1), \dots,
		\partitionskel(x_k)) =
		\skeltype(\partitionskel(y_1), \dots,
		\partitionskel(y_k)) = T$.
		\\
	There exists a bijection $\sigma
		\colon L_1 \to L_2$, where
	$L_1 \defined \Leaves(F) \cap \bigcup_{i = 1}^k \Ske(\partitionskel(x_i))$
	and
	$L_2 \defined \Leaves(F) \cap \bigcup_{i = 1}^k \Ske(\partitionskel(y_i))$,
	such that
	for every $z \in L_1^k$,
	for every formula $\psi \in \INV_k$,
	$F \models \psi(z)$
	if and only if $F \models \psi(\sigma(z))$.
\end{claim}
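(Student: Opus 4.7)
The plan is to define $\sigma$ as the natural leaf-by-leaf bijection between corresponding skeletons, then verify that $\sigma$ preserves satisfaction of every $\INV$-atom. The key enabling fact is the semantic invariant $\mu(\repword(\Ske(\nod))) = \mu(\repword(\nod))$ from \cref{claim:skel-semantic-invariant}, together with the observation that the type $T$ completely records both the ordered $A$-labelled structure of each skeleton $\Ske(\nod_i)$ and the $\mu$-values $m_0, \ldots, m_k$ of the words that connect (or border) these skeletons in $w$.

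First I would construct $\sigma$. Since $T = \skeltype(\partitionskel(\vec{x})) = \skeltype(\partitionskel(\vec{y}))$ encodes each $\Ske(\partitionskel(x_i))$ as an ordered, $A$-labelled tree, the skeletons $\Ske(\partitionskel(x_i))$ and $\Ske(\partitionskel(y_i))$ are isomorphic as such structures. Let $\sigma_i$ be the unique order-preserving leaf-to-leaf bijection induced by this isomorphism, and define $\sigma \colon L_1 \to L_2$ by gluing the $\sigma_i$. By the independence of $\vec{x}$ and $\vec{y}$, the skeletons involved are pairwise disjoint, so $\sigma$ is well-defined and bijective.

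Second I would verify preservation of atomic $\INV$-formulas by a case analysis, since $\INV$ formulas are quantifier-free Boolean combinations of these atoms (all of whose free variables are guarded by $\msoisleaf$, which is trivially preserved). For an atom $\msoprod_m(a,b)$ with $a \in \Ske(\partitionskel(x_i))$, $b \in \Ske(\partitionskel(x_j))$ and $a \le b$ (so $i \le j$), I would decompose the word from $a$ to $b$ in $\repword(F)$ as: the suffix of $\repword(\Ske(\partitionskel(x_i)))$ starting at $a$, followed by the concatenation, for $i < \ell < j$, of a connector word of $\mu$-value $m_\ell$ and the full word $\repword(\Ske(\partitionskel(x_\ell)))$ (whose $\mu$-value equals that of $\partitionskel(x_\ell)$ by \cref{claim:skel-semantic-invariant}, and is determined by $T$), and finally the prefix of $\repword(\Ske(\partitionskel(x_j)))$ ending at $b$. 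The $\mu$-image of the intermediate part depends only on $T$, and the $\mu$-images of the skeleton-internal suffix and prefix are preserved by $\sigma$ because it is a letter-preserving isomorphism of the skeletons. The same decomposition strategy applies to $\msoleft_m$ and $\msoright_m$, using the recorded boundary values $m_0$ and $m_k$.

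The main obstacle is just the bookkeeping in this decomposition: one must argue that between any two consecutive maximal skeletons $\Ske(\partitionskel(x_\ell))$ and $\Ske(\partitionskel(x_{\ell+1}))$ the intervening portion of $\repword(F)$ has $\mu$-value exactly $m_\ell$, which is precisely how the $m_\ell$'s are defined, so nothing deep is happening once the right invariants are set up. Applying this atom-by-atom preservation then extends to all of $\INV_k$ by structural induction on the Boolean connectives, yielding the required equivalence $F \models \psi(z) \Leftrightarrow F \models \psi(\sigma(z))$.
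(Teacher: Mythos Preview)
Your proposal is correct and follows essentially the same approach as the paper's proof sketch: construct $\sigma$ from the skeleton isomorphisms guaranteed by the shared type, then verify preservation atom-by-atom for $\msoisleaf$, $\msoprod_m$, $\msoleft_m$, $\msoright_m$ and extend to Boolean combinations. The paper only spells out the case $j=i+1$ for $\msoprod_m$ and declares the rest ``similar'', whereas you describe the general decomposition through all intermediate skeletons and connector values $m_\ell$; this is the same argument made slightly more explicit.

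One small point of phrasing to tighten: the word $w[a{:}\skelend(\nod_i)]$ is not literally a suffix of $\repword(\Ske(\nod_i))$, since non-skeleton leaves of $\nod_i$ may sit between $a$ and $\skelend(\nod_i)$ in $w$. What is true (and what both you and the paper rely on) is that its $\mu$-value coincides with that of the corresponding suffix of $\repword(\Ske(\nod_i))$; this is where the idempotency at each skeleton branching and \cref{claim:skel-semantic-invariant} are actually used. You already invoke the right lemma, so this is just a matter of stating the decomposition at the level of $\mu$-values rather than words.
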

\begin{proof}[Proof Sketch]
	Because of the type equality, we know that
	$\Ske(\partitionskel(x_i))$ and $\Ske(\partitionskel(y_{i}))$
	are isomorphic for $1 \leq i \leq k$. As the skeletons are disjoint
	in an independent tuple, this automatically provides the
	desired bijection $\sigma$.

	Let us now prove that $\sigma$ preserves the semantics of
	invariant formulas. Notice that this property is stable under
	disjunction, conjunction and negation. Hence, it suffices to check
	the property for the following
	three formulas $\msoprod_m(x,y)$,
	$\msoleft_m(x)$, $\msoright_m(y)$ and $\msoisleaf(x)$.
    For $\msoisleaf$, the result is the consequence of
    the fact that $\sigma$ sends leaves to leaves.

    Let us prove the result for $\msoprod_m$ and leave the other
    and leave the other
    cases as an exercise.
    Let $(y,z) \in L_1^2$.
    By definition of $L_1$, 
    there exists $1 \leq i,j \leq k$
    such that $y \in \Leaves(F) \cap \Ske(\partitionskel(x_i))$
    and $z \in \Leaves(F) \cap \Ske(\partitionskel(x_j))$.
    To simplify the argument, let us assume that
    $y < z$ and $i + 1 =  j$.
    Let $w \defined \repforest(F)$,
    and $m_{y,z} \defined \mu(w[y:z])$.
    One can decompose the computation of $m_{y,z}$
    as follows:
    \begin{align*}
        m_{y,z}
                    &= \mu(w[y:z])
                    \\
                    &=  \mu(w[y:\skelend(x_i)]w[\skelend(x_i) + 1: \skelstart(x_{i+1}) - 1]
        w[\skelstart(x_{i+1}):z])
        \\
                    &=
        \mu(w[y:\skelend(x_i)]) m_i \mu(w[\skelstart(x_i):z])
    \end{align*}

    Therefore,
    $\mu(w[y:z])$ only depends on
    $\Ske(\partitionskel(y)) = \Ske(\partitionskel(x_i))$,
    the position of $y$ in $\Ske(\partitionskel(y))$,
    $\Ske(\partitionskel(z)) = \Ske(\partitionskel(x_{i+1}))$,
    the position of $z$ in $\Ske(\partitionskel(z))$,
    and
    $m_i$, all of which are presreved by the bijection $\sigma$.
    Hence,
    $\mu(w[y:z]) = \mu(w[\sigma(y) : \sigma(z)])$.
    Therefore, 
    $F \models \msoprod_m(y,z)$ if and only if
    $F \models \msoprod_m(\sigma(y),\sigma(z))$.

    It is an easy check that a similar argument works
    when $j \neq i+1$.
\end{proof}

Now, we show that counting the valuations of a $\INV$
formula can be done by counting the number of tuples
of each type.

\begin{lemma}
	\label{lem:skel:decomp-indep-type}
    Let $k \geq 1$, $d \geq 0$, $M$ be a finite monoid, $\mu \colon A^* \to M$
    be a morphism.
    For every $\psi \in \INV_k$,
    there exists computable coefficients $\lambda_T \ge 0$,
	such that the following functions from $\forests{\mu}{d}$
    to $\Nat$ are equal:
	\begin{equation*}
		{\card \psi}_{\mathsf{indep}}
		\defined 
		\card{(\psi \wedge \neg \msodep)}
		=
		\sum_{T \in \Types_{d,k}}
		\lambda_T
        \cdot
		\card{\skelhastype{T} }
	\end{equation*}
\end{lemma}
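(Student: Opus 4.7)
The plan is to partition the set of independent valuations of $\psi$ according to the type of their skeleton-roots, and to show, using \cref{app:claim:bijection-types}, that the number of valuations attached to each collection of skeleton-roots depends only on that type. The coefficient $\lambda_T$ will then be defined as this uniform count.

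First, I would fix a type $T \in \Types_{d,k}$ and say that a tuple of nodes $\vec{\nod} = (\nod_1, \dots, \nod_k)$ in a forest $F \in \forests{\mu}{d}$ \emph{realizes} $T$ when $\skelhastype{T}(\vec{\nod})$ holds, i.e. when $\vec{\nod}$ arises as $\partitionskel(\vec{x})$ for some independent tuple $\vec{x}$. For such a realization, I would consider the set
\[
	S_\psi(F, \vec{\nod}) \defined \setof{\vec{x} \in \Leaves(F)^k}{\partitionskel(\vec{x}) = \vec{\nod},\ \vec{x} \text{ independent},\ F \models \psi(\vec{x})}.
\]
The key point is that $|S_\psi(F,\vec{\nod})|$ depends only on the type $T$ and not on the ambient forest or on the specific realization: indeed, given two realizations $\vec{\nod}, \vec{\nod'}$ of $T$ (possibly in different forests), \cref{app:claim:bijection-types} furnishes a bijection $\sigma$ between the relevant sets of leaves that preserves the truth of every $\INV_k$ formula, hence restricts to a bijection between $S_\psi(F, \vec{\nod})$ and $S_\psi(F', \vec{\nod'})$. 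I would therefore set $\lambda_T \defined |S_\psi(F, \vec{\nod})|$ for any realization, and remark that $\lambda_T \geq 0$.

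Next, I would observe that the map $\vec{x} \mapsto \partitionskel(\vec{x})$ partitions the set of independent tuples $\vec{x}$ satisfying $\psi$ according to which tuple of skeleton-roots they land on, and then further according to the type of that tuple. This yields
\[
	\card{(\psi \wedge \neg\msodep)}(F)
	= \sum_{T \in \Types_{d,k}} \sum_{\vec{\nod} \text{ realizes } T} |S_\psi(F, \vec{\nod})|
	= \sum_{T \in \Types_{d,k}} \lambda_T \cdot \card{\skelhastype{T}}(F),
\]
which is the desired identity.

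The main obstacle is the computability of $\lambda_T$. Since a type $T$ specifies the skeletons $\Ske(\nod_i)$ together with the monoid elements labeling the connecting intervals, one can evaluate any $\INV_k$ predicate appearing in $\psi$ on an arbitrary tuple of leaves inside these skeletons; in particular one can enumerate all tuples of leaves $\vec{x}$ with $\partitionskel(x_i) = \nod_i$, check independence, and check $\psi(\vec{x})$, without ever referring to a concrete forest. The only subtlety I anticipate is that $\partitionskel(x_i) = \nod_i$ is a condition on $x_i$ relative to the whole forest, not just to $\Ske(\nod_i)$; but the type $T$ records enough boundary data (the monoid elements $m_0, \dots, m_k$ and the skeletons themselves) to determine which leaves of $\Ske(\nod_i)$ have $\nod_i$ as their maximal containing skeleton, making the computation of $\lambda_T$ effective.
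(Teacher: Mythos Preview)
Your proposal is correct and follows essentially the same approach as the paper: partition the independent valuations by the tuple $\vec{\nod} = \partitionskel(\vec{x})$ and then by its type, invoke \cref{app:claim:bijection-types} to see that the inner count depends only on the type, and call that count $\lambda_T$. The paper's proof is just the terse algebraic version of your argument; your extra paragraph on computability of $\lambda_T$ is a welcome addition that the paper leaves implicit.
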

\begin{proof}

Using the claim, we can now proceed to prove \cref{lem:skel:decomp-indep-type}.
	\begin{align*}
		\card{\psi \wedge \neg \msodep}(F)
		 & =
		\sum_{\vec{x} \text{ indep}}
        \car{F \models \psi(\vec{x})}
		\\
		 & =
		\sum_{T \in \Types_{d,k}}
		\sum_{\vec{\nod} \in \Nodes(F)}
		\sum_{\vec{x} \text{ indep}}
        \car{F \models \psi(\vec{x})}
		\car{\vec{\nod} = \partitionskel(\vec{x})}
		\car{\skelhastype{T}(\vec{\nod})}
		\\
		 & =
		\sum_{T \in \Types_{d,k}}
		\sum_{\vec{\nod} \in \Nodes(F)}
		\car{\skelhastype{T}(\vec{\nod})}
		\left(
		\sum_{\vec{x} \text{ indep}}
        \car{F \models \psi(\vec{x})}
		\car{\vec{\nod} = \partitionskel(\vec{x})}
		\right)
		\\
		 & =
		\sum_{T \in \Types_{d,k}}
		\sum_{\vec{\nod} \in \Nodes(F)}
		\car{\skelhastype{T}(\vec{\nod})}
		\lambda_T
		\\
		 & =
		\sum_{T \in \Types_{d,k}}
		\lambda_T
		\card{(\skelhastype{T} (\vec{\nod}))}
	\end{align*}
	
    The coefficient $\lambda_T$
    does not depend on the specfic $\vec{\nod}$
    such that $\skeltype(\vec{\nod}) = T$
    thanks to
    \cref{app:claim:bijection-types}
    and the fact that $\psi \in \INV$.
\end{proof}

The behavior of the formulas $\skelhastype{T}$ is much more
regular and enables us to extract pumping families that clearly
distinguishes different types. 
Namely, we are going to prove that given $k \geq 1$, $d \geq 0$,
a finite monoid $M$, and a morphism $\mu \colon A^* \to M$,
$\setof{\card{\skelhastype{T}}}{T \in \Types_{d,k}}$
is a $\Rel$-linearly independent family of functions
from $\forests{\mu}{d}$ to $\Rel$.

\begin{lemma}[Pumping Lemma]
	\label{lem:skel:iteration}
	For all $T \in \Types_{d,k}$,
	there exists a pumping family $(w^{\vec{X}}, F^{\vec{X}})$
	such that for every type $T' \in \Types_{d,k}$,
	$\card( \skelhastype{T'}) (F^{\vec{X}})$ is ultimately a
	$\Rel$-polynomial in $\vec{X}$
	that has non-zero coefficient for $X_1 \cdots X_n$
	if and only if $T = T'$.
\end{lemma}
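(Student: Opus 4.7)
The plan is to construct, for each type $T = (m_0, S_1, m_1, \ldots, S_k, m_k) \in \Types_{d,k}$, an explicit pumping family $(w^{\vec{X}}, F^{\vec{X}})$ in which tuples of skeleton-type exactly $T$ contribute $\Theta(X_1 \cdots X_k)$ independent choices, while for every other $T'$ the function $\card(\skelhastype{T'})(F^{\vec{X}})$ has zero coefficient on the monomial $X_1 \cdots X_k$. This is a joint realization-plus-separation statement, proved by one careful construction followed by a case analysis on alternative types.

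For the construction, I proceed block by block. For each skeleton $S_i$ appearing in $T$, I pick a word $u_i \in A^+$ whose concatenation of leaves reproduces the frontier of $S_i$ and which admits a factorization forest of bounded depth whose root has skeleton isomorphic to $S_i$; by padding $u_i$ (e.g. with a factor of appropriate $\mu$-image), I can additionally ensure $\mu(u_i)$ is an idempotent of $M$. I then set $w_i \defined u_i$. The words $\alpha_0, \ldots, \alpha_k$ are chosen so that the $M$-product of the material strictly between the selected skeletons equals the prescribed $m_j$. Finally, I assemble $F^{\vec{X}}$ by placing, in the $i$-th pumped zone, an idempotent node with $X_i$ children (each a factorization whose root has skeleton isomorphic to $S_i$), and gluing these zones and the $\alpha_j$ fragments under a common bounded-depth root; enlarging $d$ to some $d'$ if necessary is harmless since only finitely many skeletons occur and the family is the same for all large $d$. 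In this family, one obtains a tuple of type exactly $T$ by selecting, independently in each block $i$, any one of the $X_i$ candidate children as $\partitionskel(x_i)$, yielding $\prod_i X_i$ tuples of type $T$ up to lower-order corrections.

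The ultimate polynomiality of $\card(\skelhastype{T'})(F^{\vec{X}})$ in $\vec{X}$ for every $T'$ is automatic: $\skelhastype{T'}$ is an $\MSO$-formula over $\forests{\mu}{d'}$, so its counting function precomposed with $\repforest$ is $\Rel$-polyregular, hence a $\Rel$-rational series, which by \cref{lem:Zpoly-pump-first} and \cref{theo:Zpoly-Zrat} is ultimately polynomial along any pumping family. Thus it only remains to identify the coefficient of $X_1 \cdots X_k$ in this polynomial.

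The main obstacle is the separation step: ruling out an accidental $X_1 \cdots X_k$ monomial for some $T' \ne T$. The danger is that a tuple $\vec{\nod}$ distributed across the blocks could have a $\partitionskel$-type that differs from $T$ yet still scales linearly in every $X_i$. To preclude this, I would make the blocks rigid by padding each $u_i$ with a distinctive letter so that the only maximal skeletons appearing inside the $i$-th pumped zone are, up to isomorphism, copies of $S_i$, and so that the monoid product between any two such nodes forces the prescribed $m_j$. Combined with a case analysis showing that any tuple whose $k$ nodes do not distribute as one per pumped block contributes a polynomial independent of at least one $X_i$, this forces $S'_j = S_j$ and $m'_j = m_j$ at every index, hence $T' = T$. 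This padding-and-rigidity argument is the subtle step; the rest is bookkeeping on factorization forests analogous to constructions used in \cite{doueneau2022hiding,doueneau2023pebble,bojanczyk2022growth}.
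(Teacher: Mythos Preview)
Your overall plan is the right one, but the execution has real gaps, and the paper's proof avoids exactly the ad hoc devices you reach for.

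\textbf{Construction.} You build the family from scratch and then ``pad $u_i$'' to force idempotence and ``enlarge $d$ to some $d'$''. Neither move is legitimate: the alphabet $A$, the morphism $\mu$ and the depth bound $d$ are fixed throughout \cref{sec:pebblemin}, and $\Types_{d,k}$ together with the formulas $\skelhastype{T}$ are defined relative to this fixed $d$. More importantly, padding is unnecessary. Since $T$ is a type, it is already realised by an independent tuple $\vec{x}$ in some $F\in\forests{\mu}{d}$; setting $\nod_i=\partitionskel(x_i)$, the maximality of $\nod_i$ forces it to be an \emph{inner} child of an idempotent node, so $\mu(\repword(\Ske(\nod_i)))=\mu(\repword(\nod_i))$ is automatically idempotent. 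The pumping family is then simply ``replace $\nod_i$ by $X_i$ copies of $S_i=\Ske(\nod_i)$ in $F$''. This is the structural fact you are missing; once seen, no construction from scratch is needed.

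\textbf{Separation.} Your proposal to enforce rigidity by ``padding each $u_i$ with a distinctive letter'' fails for the same reason: the alphabet is fixed. The paper's separation argument is purely structural and is the heart of the proof. In $F^{\vec{X}}$, let $E$ be the nodes outside all pumped copies. For a tuple $\vec{\nodb}$ with $\skeltype(\vec{\nodb})=T'$, record $\rho(i)\in\{1,\dots,k\}\uplus E$ according to where $\nodb_i$ lands. The key observation is that each $S_j$ is a \emph{binary} tree, so if $\nodb_i$ lies in a copy of $S_j$ then $\nodb_i$ must be the root of that copy (otherwise it would be a leftmost/rightmost child there and hence not a $\partitionskel$-value). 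Therefore, for fixed $\rho$ the count $\card C_\rho(\vec X)$ is a product of binomials $\binom{X_j-p_j+1}{p_j}$ (choosing $p_j=|\rho^{-1}(j)|$ non-adjacent copies among $X_j$), which is ultimately a polynomial in $\vec X$ and carries a nonzero $X_1\cdots X_k$ coefficient iff each $p_j=1$, i.e.\ $\rho=\id$. When $\rho=\id$, the skeleton at position $i$ is $S_i$ and the in-between products are the $m_i$, so $T'=T$. This simultaneously yields polynomiality and the coefficient statement; your appeal to \cref{lem:Zpoly-pump-first} via $\repforest$ is misplaced since $F^{\vec{X}}$ need not equal $\repforest(w^{\vec{X}})$.
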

\begin{proof}
	Let $T \in  \Types_{d,k}$ be a type, it is obtained
	as the type of some tuple $\vec{x}$ of independent leaves
	in some $F \in \forests{\mu}{d}$ factorizing a word $w$.
    Let $\nod_i \defined \partitionskel(x_i)$
    and $S_i \defined \Ske(\nod_i)$ for $1 \leq i \leq k$.
    Recall that 
    $\mu(\repword(S_i)) = \mu(\repword(\nod_i))$
    thanks to \cref{claim:skel-semantic-invariant}.
    As a consequence, $S_i$ is a subforest of $\nod_i$ that
    provides a valid $\mu$-forest of a subword of $\repword(\nod_i)$.

    Now, as $\nod_i$ cannot be the root of the forest $F$
    and is the highest ancestor of $x_i$ that is not a leftmost or rightmost
    child, it must be the immediate inner child of an idempotent node in $F$.
    As a consequence, $\mu(\repword(S_i)) = \mu(\repword(\nod_i))$ is an
    idempotent. Therefore, for ever $X_i \in \Nat$,
    the tree obtained by
    replacing $\nod_i$ with $X_i$ copies of $S_i$
    in $F$ is a valid $\mu$-forest.
    We write $F^{\vec{X}}$ for the forest $F$ where
    $\nod_i$ is replaced by $X_i$ copies of $S_i$. This is possible
    because
    the tuple $\vec{x}$ is composed of independent leaves, hence
    $\nod_i$ and $\nod_j$ are disjoint subtrees of $F$
    whenever $1 \leq i \neq j \leq k$.

    Hence,
    $F^{\vec{X}}$ is the factorization forest of the word
    $w^{\vec{X}} \defined 
        \alpha_0 (w_1)^{X_1} \alpha_1 \dots \alpha_{k-1} (w_k)^{X_k} \alpha_k$
    where $w_i = \repword(S_i)$, $\alpha_i = w[\skelend(\nod_i){+}1{:}
    \skelstart(\nod_i){-}1]$ for $2 \leq i \leq k-1$,
    $\alpha_0 = w[1{:}\skelstart(\nod_1) {-} 1]$,
    and $\alpha_k = w[\skelend(\nod_k) {+} 1{:} {|w|}]$
    are non-empty factors of $w$.

    We now have to understand the behavior of
    $\skelhastype{T'}$ over $F^{\vec{X}}$,
    for every $T' \in \Types_{d,k}$.
    To that end, let us consider $T' \in \Types_{d,k}$.
    Let us write $E$ for the set of nodes in $F^{\vec{X}}$ that
    are not appearing in any of the $X_i$ repetitions of $S_i$,
    for $1 \leq i \leq k$. The set $E$ has a size bounded independently
    of $X_1, \dots, X_k$.
    To a tuple $\vec{\nodb}$ such that
    $F^{\vec{X}} \models \skelhastype{T'}(\nodb)$,
    one can associate the mapping
    $\rho_{\vec{\nodb}} \colon \{1,\dots,k\} \to \{1, \dots, k\} \uplus E$,
    so that $\rho_{\vec{\nodb}}(i) = \nodb_i$ when $\nodb_i \in E$,
    and $\rho_{\vec{\nodb}}(i) = j$ when $\nodb_i$ is a node appearing in
    one of the $X_j$ repetitions of the skeleton $S_j$ 
    (there can be at most one $j$ satisfying this property).

    \begin{remark}
        \label{remark:appendix-skeltype-root}
        If $\skeltype(\vec{\nodb}) = T'$,
        and $\rho_{\vec{\nodb}}(i) = j$, then $\nodb_i$
        must be the root of one of the $X_j$ copies of $S_j$ in $F^{\vec{X}}$.
        Indeed, $\vec{\nod}$ is obtained as $\partitionskel(\vec{y})$ for
        some independent tuple $\vec{y}$ of leaves. Hence,
        $\nodb_i = \partitionskel(y_i)$ which belong to some copy of $S_j$, hence
        $\nodb_i$ must be the root of this copy of $S_j$, because $S_j$
        is a binary tree.
    \end{remark}

    Given a map $\rho \colon \{1, \dots k\} \to \{1, \dots, k\} \uplus E$ 
    and a tuple $\vec{X} \in \Nat^k$,
    we let
    $C_\rho(\vec{X})$
    be
    the set of tuples $\vec{\nodb}$ of nodes of $F^{\vec{X}}$
    such that $\skeltype(\vec{\nodb}) = T'$,
    and 
    such that $\rho_{\vec{\nodb}} = \rho$.
    This allows us to rewrite the number of such vectors as a finite
    sum:
    \begin{equation*}
        \card( \skelhastype{T'}(\vec{\nod})) (F^{\vec{X}})
        = \sum_{\rho \colon \{1, \dots, k\} \to \{1, \dots, k\} \uplus E}
        \card C_\rho(\vec{X})
    \end{equation*}

    \begin{claim}
        For every $\rho \colon \{1, \dots, k\} \to \{1, \dots, k\} \uplus E$,
        $\card{C_\rho(\vec{X})}$
        is ultimately a $\Rel$-polynomial
        in $\vec{X}$. Moreover,
        its
        coefficient for $X_1 \cdots X_k$
        is non-zero
        if and only if $\rho(i) = i$ for $1 \leq i \leq k$.
    \end{claim}
    \begin{claimproof}
        Assume that $C_\rho(\vec{X})$ is non-empty. Then
        choosing a vector $\vec{\nodb} \in C_\rho(\vec{X})$ is done by
        fixing the image of $\nodb_i$ to $\rho(i)$ when $\rho(i) \in E$,
        and selecting $p_j \defined |\rho^{-1}(\{j\})|$ non
        consecutive copies of $S_j$ among
        among the $X_j$ copies available.
        All nodes are accounted for since 
        \cref{remark:appendix-skeltype-root}
        implies that whenever $\nodb_i$ is in a copy of $S_j$,
        then $\nodb_i$ is the root of this copy, and since
        $\vec{\nodb}$ is independent, they cannot be direct siblings.

        The number of ways one can select $p$ non consecutive nodes
        in among $X$ nodes is
        (for large enough $X$) the binomial number $\binom{X-p+1}{p}$,
        as it is the same as selecting $p$ positions among $X-p+1$
        and then adding $p-1$ separators.

        As a consequence, the size of $C_\rho(\vec{X})$ is
        ultimately a product of $\binom{X_j - p_j + 1}{p_j}$ for
        the non-zero $p_j$,
        which is a $\Rel$-polynomial in $X_1, \dots, X_k$.
        Moreover, it has a non-zero coefficient for $X_1 \dots X_k$
        if and only if $p_j \neq 0$ for $1 \leq j \leq k$,
        which is precisely when $\rho(i) = i$.
    \end{claimproof}

    We have proven that $\card( \skelhastype{T'}) (F^{\vec{X}})$
    is a $\Rel$-polynomial in $X_1, \dots, X_k$, and 
    that the only term possibly having a non-zero coefficient for
    $X_1 \cdots X_k$ is $\card{C_{\id}(\vec{X})}$.
    Notice that if $\card{C_{\id}(\vec{X})}$
    is non-zero,
    we immediately conclude that
    $T = T'$.
\end{proof}

\begin{claim}
	\label{claim:polyzero}
	Let $P \in \mathbb{R}[X_1, \dots, X_n]$
	which evaluates to $0$ over  $\mathbb{N}^n$,
	then $P = 0$.
\end{claim}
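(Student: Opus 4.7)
The plan is to prove the claim by induction on the number of variables $n \geq 1$.

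For the base case $n = 1$, I would argue that a nonzero univariate polynomial $P \in \mathbb{R}[X_1]$ of degree $d$ has at most $d$ roots. Since $P$ vanishes on all of $\mathbb{N}$, which is infinite, $P$ must be the zero polynomial.

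For the inductive step, assuming the result holds for polynomials in $n$ variables, I would take $P \in \mathbb{R}[X_1, \dots, X_{n+1}]$ vanishing on $\mathbb{N}^{n+1}$, and decompose it as $P = \sum_{i=0}^{d} Q_i(X_1, \dots, X_n) \cdot X_{n+1}^i$ where the $Q_i \in \mathbb{R}[X_1, \dots, X_n]$. For any fixed tuple $(a_1, \dots, a_n) \in \mathbb{N}^n$, the univariate polynomial $X_{n+1} \mapsto P(a_1, \dots, a_n, X_{n+1}) = \sum_i Q_i(a_1, \dots, a_n) X_{n+1}^i$ vanishes on all of $\mathbb{N}$, hence is zero by the base case. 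This forces $Q_i(a_1, \dots, a_n) = 0$ for every $i$ and every $(a_1, \dots, a_n) \in \mathbb{N}^n$. Applying the induction hypothesis to each $Q_i$ yields $Q_i = 0$, whence $P = 0$.

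There is no real obstacle here: the argument is the standard ``infinite set of roots forces a univariate polynomial to vanish'' extended to multiple variables by slicing. The only mild subtlety is ensuring that the reduction from $n+1$ variables to $n$ variables is done through a univariate slice (fixing $n$ coordinates and letting the last vary), which is precisely what makes the induction go through.
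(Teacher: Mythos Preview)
Your proof is correct and follows essentially the same approach as the paper: induction on the number of variables, using that a univariate real polynomial with infinitely many roots must vanish. The only cosmetic difference is the slicing order---you fix the first $n$ coordinates and apply the base case before the induction hypothesis, whereas the paper fixes $x_{n+1}$ and applies the induction hypothesis first---but the underlying argument is identical.
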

\begin{claimproof}
	The proof is done by induction on the number $n$ of variables.
	If $P$ has one variable and $P_{|\Nat} = 0$,
	then $P$ has infinitely many roots and $P = 0$.
	Now, let  $P$ having $n+1$ variables, and such that
	$P(x_1, \dots, x_n, x_{n+1}) = 0$ for all $(x_1, \dots, x_{n+1}) \in
		\mathbb{N}^{n+1}$.
	By induction hypothesis, $P(X_1, \dots, X_n, x_{n+1}) = 0$
	for all $x_{n+1} \in \mathbb{N}$. 
    Hence for all $x_1, \dots, x_n \in \mathbb{R}$,
    $P(x_1, \dots, x_n, X_{n+1})$ is a polynomial with one free variable
    having infinitely many roots, hence $P(x_1, \dots, x_n, x_{n+1}) = 0$
    for every $x_{n+1} \in \mathbb{R}$.
    We have proven that
	$P = 0$.
\end{claimproof}

We now have all the ingredients to prove \cref{lem:skel:indep-iter-zero},
allowing us to pump functions built by counting independent tuples
of invariant formulas.

Let $k \geq 1$, and
$\findep$ be a linear combination of $\card{\psi_i \wedge \neg \msodep}$,
where $\psi_i \in \INV_k$.
Assume moreover that $\findep \neq 0$.
Thanks to \cref{lem:skel:decomp-indep-type},
every $\card{\psi_i \wedge \neg \msodep}$ can be
written as a linear combination of $\card{\skelhastype{T}(\vec{t})}$,
hence
$\findep = \sum_{T \in \Types_{d,k}} \lambda_T
\card{\skelhastype{T}}$, and the coefficients
$\lambda_T$ (now in $\Rel$) are computable.

Since $\findep \neq 0$, there exists $T \in \Types_{d,k}$
such that $\lambda_T \neq 0$. Using
\cref{lem:skel:iteration}, there exists a pumping family
$(w^{\vec{X}}, F^{\vec{X}})$ adapted to $T$.
In particular,
$f(F^{\vec{X}})$ is ultimately a $\Rel$-polynomial
in $\vec{X}$, and its coefficient
in $X_1 \cdots X_k$ is the sum of the coefficients
in $X_1 \cdots X_k$ of the polynomials
$\card{\skelhastype{T'}(F^{\vec{X}})}$ multiplied by $\lambda_{T'}$.
This coefficient is
non-zero if and only if $T=T'$. Hence, 
$f(F^{\vec{X}})$ is ultimately a $\Rel$-polynomial
with a non-zero coefficient for $X_1 \cdots X_k$.

As a side result, we have proven that
a linear combination of $\card{\skelhastype{T}}$ is the constant
function $0$ if and only if all the coefficient are $0$, which 
is decidable since one can enumerate all the elements of
$\Types_{d,k}$. For the converse implication,
one leverages \cref{claim:polyzero}: if one coefficient
is non-zero, then the polynomial $f(F^{\vec{X}})$
must be non-zero.

\subsection{Proof of \cref{lem:skel:lemmapoly}}

	Let $P,Q \in \mathbb{R}[X_1, \dots, X_n]$ be such
	that $|P| = \bigO(|Q|)$. We show that $\deg (P) \leq \deg (Q)$.

	If $P = 0$, then $\deg (P) \leq \deg (Q)$. Otherwise,
	let us write $P = P_1 + P_2$ with $P_1$ containing all the
	terms of degree exactly $\deg(P)$ in $P$.
    Because $|P| = \bigO(|Q|)$, there 
    exists $N \geq 0$ and $C \geq 0$
    such that $|P(x_1, \dots, x_n)| \leq C|Q(x_1, \dots, x_n)|$
    for all $x_1, \dots, x_n \in \Nat$
    such that $x_1, \dots, x_n \geq N$.

	Because $P_1$ is a non-zero polynomial,
    there exists a tuple $(x_1, \dots, x_n) \in \mathbb{N} \setminus \set{0}$
	such that $\alpha \defined P_1(x_1, \dots, x_n) \neq 0$
	(\cref{claim:polyzero}).
	Let us now consider
	$R(Y) \defined P(Y x_1, \dots, Y x_n) \in \Rea[Y]$,
	and $S(Y) \defined Q(Y x_1, \dots, Y x_n) \in \Rea[Y]$.
	Notice that $R(Y)$ has
	degree exactly $\deg(P)$ and its term of degree $\deg(P)$ 
	is $\alpha Y^{\deg(P)}$. Furthermore,
	$S(Y)$ is a polynomial in $Y$ of degree at most $\deg(Q)$,
    with dominant coefficient $\beta \neq 0$.
    We know that for $Y$ large enough,
    $|R(Y)| \leq C|S(Y)|$.
	Since $|R(Y)| \polysim{+\infty} |\alpha| Y^{\deg(P)}$,
	and $|S(Y)| \polysim{+\infty} |\beta| Y^{\deg(S)} \leq |\beta|
		Y^{\deg(Q)}$, we conclude that 
	$\deg(P) \leq \deg(Q)$.

\section{Proofs of \cref{sec:residual}}

\subsection{Proof of \cref{claim:push-pk}}

Let $k \ge 0$, $f \in \Ve_k$
and $u \in A^*$. We want to show
that $\push{f}{u} \in \Ve_k$.
Notice that for every $u$,
the map $u\square \colon w \mapsto uw$ is regular,
hence
$\push{f}{u} = f \circ (u\square)$
belongs to $\Ve_k$ thanks to \cref{prop:precompose}.

\subsection{Proof of \cref{claim:properties-resi}}

	The fact that  $\polysim{k}$ is an equivalence relation
	is obvious from the properties of $\Ve$. Furthermore
	 if $f \polysim{k} g$, then $f{-}g \in \Ve_k$,
	thus $\push{(f{-}g)}{u} = \push{f}{u} - \push{g}{u}  \in \Ve_k$
	by \cref{claim:push-pk}. Furthermore it is obvious that
	$\delta \cdot f \polysim{k} \delta \cdot g$,
	and if $f' \polysim{k} g'$ then
        $f + f' \polysim{k} g + g'$.
        
        It remains to show that
	$\push{(\car{L} \cauchy f)}{u}
	\polysim{k} (\push{\car{L}}{u}) \cauchy f$ for $L \subseteq A^*$
	and for this we proceed by induction on $|u|$.
	By expanding the definitions we note that
	$\push{(\car{L} \cauchy g)}{a}
	= (\push{{\car{L}}}{a}) \cauchy g + \car{L}(\movi) \times (\push{g}{a})$
for all $a \in A$.
%\begin{proof}
%	It suffices to expand the definitions.
%	\begin{align*}
%		a \cdot (f \cauchy g) (w) & = (f \cauchy g) (aw)              \\
%		                          & = \sum_{aw = uv} f(u) \times g(v) \\
%		                          & = \sum_{w = uv} f(au) \times g(v)
%		+ f(\movi) \times g(aw)                                       \\
%		                          & = ((a \cdot f) \cauchy g)(w)
%		+ f(\movi) \times (a \cdot g) (w) \qedhere                    \\
%	\end{align*}
%\end{proof}
By \cref{claim:push-pk} we get $\push{g}{a} \in \Ve_k$,
hence $\push{(\car{L} \cauchy g)}{a} \polysim{k} (\push{{\car{L}}}{a}) \cauchy g$.
The result follows since $\push{{\car{L}}}{a} = \car{a^{-1}L}$
and by \cref{theo:rational-polygrowth}.

\subsection{Proof of \cref{lem:resifini}}

We first note that $\push{(\delta f+\eta g)}{u}
	= \delta (\push{f}{u}) + \eta (\push{g}{u})$, for all $f,g : A^* \fonc \Rel$,
$\delta, \eta \in \Rel$ and $u \in A^*$.
Hence it suffices to show that \cref{lem:resifini}
holds on a set $S$ of functions such that $\Vect{\Rel}(S) = \Ve_k$.
For $k=0$, we can chose $S \defined \{\car{L} : L \text{~regular}\}$.
As observed above, we have $\push{\car{L}}{u} = \car{u^{-1}L}$
and the result holds since regular languages have
finitely many residual languages. For $k \ge 1$, we can choose
$S \defined \{\car{L} \cauchy g : g \in \Ve_{k-1}, L \text{ regular}\}$
by \cref{lem:vks:inductcauchy}. Let $\car{L} \cauchy g \in S$.
Then by \cref{claim:properties-resi} we get
$\push{(\car{L} \cauchy g)}{u} \polysim{k-1}
	(\push{\car{L}}{u}) \cauchy g = \car{u^{-1}L} \cauchy g $.
Since a regular language has finitely many residual languages,
there are finitely many $\polysim{k-1}$-equivalence classes for the
(function) residuals of $\car{L} \cauchy g$.

\subsection{Proof of \cref{lem:resitrans}}

	Let $f : A^* \fonc \Rel$ be a function
	such that $\Res(f) / \polysim{k-1}$. We apply \cref{algo:residual}, which computes
	the set of residuals of $f$ and the relations between them.
	The states of our machine are not labelled by the equivalence
	classes of $\Res(f)/\polysim{k-1}$, but directly by some elements of
	$\Res(f)$.
	Remark that the labels on the transitions are of the form $\push{f}{w} - \push{f}{v}$
	when $\push{f}{w} \polysim{k-1} \push{f}{v}$, hence are in
	$\Vect{\Rel}(\Res(f)) \cap \Ve_{k-1}$ by definition of $\polysim{k-1}$
	(observe that the construction of these labels is effective and that equivalence
	of residuals is decidable if we start from $f \in \Ve_{k}$).
	Now, let us justify the correctness and
	termination of \cref{algo:residual}.

	First, we note that it maintains two sets $O$ and $Q$
	such that $O \uplus Q \subseteq \Res(f)$ and
	for all $f,g \in O \uplus Q$ we have $f \neq g \implies f \not \polysim{k-1} g$.
	Hence the algorithm terminates since $\Res(f)/\polysim{k-1}$ is finite and
	$Q$ increases at every loop.
	At the end of its execution, we have for all $q \in Q$ and $a \in A$,
	that $\delta(q,a) \polysim{k-1} \push{q}{a}$ and
	$\lambda(q,a) = \push{q}{a} - \delta(q,a)$.

	Let us show by induction on $n \ge 0$ that for all $a_1 \cdots a_n \in A^*$,
	if $q_0 \fonc^{a_1} q_1 \fonc^{a_2} \cdots \fonc^{a_n} q_n$ is the run labelled by $a_1 \cdots a_n$
	in the underlying automaton , and $g_1 \cdots g_n$
	are the functions which label the transitions, we have $q_n \polysim{k-1} \push{f}{a_1 \cdots a_n}$
	and for all $w \in A^*$, $f(a_1 \cdots a_nw) = \sum_{i=2}^n g_i(a_i \cdots a_n w) + q_n(w)$.
	For $n = 0$ the result is obvious because $q_0 = f$.
	Now, assume that the result holds for some $n \ge 0$ and
	let $a_1 \cdots a_{n} a_{n+1} \in A^*$. Let
	$q_0 \fonc^{a_1} q_1 \fonc^{a_2} \cdots \fonc^{a_{n+1}} q_{n+1}$ be the run
	and $g_1 \cdots g_{n+1}$ be the labels of the transitions.
	Since  $q_n \polysim{k-1} \push{f}{a_1 \cdots a_{n}}$ (by induction) we get
	$\push{q_n}{a_{n+1}} \polysim{k-1} \push{f}{a_1 \cdots a_{n}a_{n+1}}$ by
    \cref{claim:properties-resi}.
	Because $q_{n+1} = \delta(q_n,a_{n+1}) \polysim{k-1} \push{q_n}{a_{n+1}}$,
	then $q_{n+1}  \polysim{k-1} \push{f}{a_1 \cdots a_{n}a_{n+1}}$.
	Now, let us fix $w \in A^*$. We have
	$f(a_1 \cdots a_{n}a_{n+1}w) = \sum_{i=2}^n g_i(a_i \cdots a_n a_{n+1}u) + q_n(a_{n+1}w)$
	by induction hypothesis.
	But since $g_{n+1} = \lambda(q_n, a_{n+1})
	= \push{q_n}{a_{n+1}} - \delta(q_n, a_{n+1})
		= \push{q_n}{a_{n+1}} -q_{n+1} $
		we get $ q_n(a_{n+1} w) = g_{n+1}(w) + q_{n+1}(w)$.
	We conclude the proof that \cref{algo:residual} provides
	a $k$-residual transducer for $f$ by considering $w = \movi$
	and the definition of $F$.

\subsection{Proof of \cref{cor:Vk-trans}}

\cref{lem:resitrans} shows that any function from
$\Ve_k$ is computed by its $k$-residual transducer
(which is in particular a $\Ve_{k-1}$-transducer). Conversely,
given a $\Ve_{k-1}$-transducer computing $f$, it is easy to write $f$
as a linear combination of elements of the form $\car{L} \cauchy g$
(see e.g. \cref{pro:decompo}),
where $g$ is the label of a transition, thus $f \in \Ve_{k-1}$.

\subsection{Proof of \cref{cor:faux}}

Every map in $\Ve_k$ has finitely many residuals up to $\polysim{k-1}$
thanks to \cref{lem:resifini}. We now prove the converse implication.
Let $f$ such that $\Res(f) / \polysim{k-1}$ is finite. By \cref{lem:resitrans}
there exists a $k$-residual transducer of $f$
(which is in particular a $\Ve_{k-1}$-transducer).
Thanks to \cref{cor:Vk-trans}, it follows that $f \in \Ve_k$.

\section{Proofs of \cref{sec:aperiodic}}

\subsection{Proof of \cref{ex:1poly-aperiodic}}

Let $L$ be a regular language such that $\car{L}$ is ultimately $1$-polynomial.
Then, for every $u,w,v \in A^*$,
there exists a polynomial $P \in \Rat[X]$, such that
$\car{L}(uw^X v) = P(X)$ for $X$ large enough. This implies
that $P$ is a constant polynomial, and in particular
$\car{L}(uw^{X+1}v) = \car{L}(uw^{X}v)$ for $X$ large enough.
As a consequence, the syntactic monoid of $L$ is aperiodic,
thus $L$ is star-free \cite{schutzenberger1965finite}.
Conversely, assume that $L$ is star-free.
It is recognized by a morphism $\mu$
into an aperiodic finite monoid $M$.
Because $M$ is aperiodic, for every
$x \in M$, $x^{|M|+1} = x^{|M|}$. 
Hence, for all  $\alpha_0, w_1, \alpha_1, \dots, w_{\ell}, \alpha_{\ell} \in A^*$,
$\car{L}(\alpha_0 w_1^{X_1} \alpha_1 \cdots w_{\ell}^{X_{\ell}}\alpha_{\ell})$
is constant for $X_1, \dots, X_\ell \geq |M|$
since it only depends on the image
$\mu(\alpha_0 w_1^{X_1} \alpha_1 \cdots w_{\ell}^{X_{\ell}}\alpha_{\ell})$.

\subsection{Proof of \cref{lem:aper:counterfreefo}}
\label{pro:decompo}

Let $\trans = (A,Q, q_0, \delta, \lambda)$ be a
counter-free $\We_{k-1}$-transducer computing a function
$f : A^* \fonc \Rel$. Since the deterministic automaton
$(A,Q, q_0, \delta)$ is counter-free,
then by \cite{mcnaughton1971counter} for all $q \in Q$ the language
$L_q \defined \{u : \delta(q_0,u) = q\}$ is star-free. So is $L_q a$ for all $a \in A$.
Now observe that:
\begin{equation*}
	\begin{aligned}
		f = \sum_{\substack{q \in Q \\ a \in A}} \car{L_q a} \cauchy \lambda(q,a).
	\end{aligned}
\end{equation*}
We conclude thanks to \cref{eq:cauchys-starfree}.

\subsection{Proof of \cref{lem:up-cf}}

	Let $k \ge 0$. Let $f \in \Ve_k$ which is
    ultimately $1$-polynomial
	and $\trans = (A, Q, q_0, \delta, \Oras, \lambda, F)$ be a $k$-residual transducer of $f$.
	Since ultimate $1$-polynomiality
    is preserved under taking
	linear combinations and residuals, the function labels
	of $\trans$ are
    ultimately $1$-polynomial
    (by definition of a $k$-residual
	transducer). It remains to show that $\trans$ is counter-free.

	Let $\alpha, w \in A^*$ and suppose that $\delta(q_0, \alpha)
		= \delta(q_0, \alpha w^n)$ for some $n \ge 1$.
	We want to show that $\delta(q_0, \alpha w) = \delta(q_0, \alpha)$.
	Since $\delta(q_0, \alpha) = \delta(q_0, \alpha w^{n X})$
	and $\delta(q_0, \alpha w) = \delta(q_0, \alpha \alpha w^{n X+1})$
	for all $X \ge 1$, it is sufficient to
	show that we have $\delta(q_0, \alpha w^{n X + 1}) = \delta(q_0, \alpha w^{n X})$
	for some $X \ge 1$.

	Let $M \ge 1$ given by \cref{def:ultimately-polynomial-general} for
	the 
    ultimate $1$-polynomiality
    of $f$.
	We want to show that $(\push{f}{\alpha w^{n M +1}}) \polysim{k-1}
		(\push{f}{\alpha w^{n M}})$, i.e. $| (\push{f}{\alpha w^{n M +1}})(w) -
		(\push{f}{\alpha w^{n M}})(w)| = \bigO(|w|^{k-1})$ since the residuals
	belong to $\Ve$. For this, let us pick any $\alpha_0, w_1, \alpha_1, \cdots, w_k, \alpha_{k} \in A^*$.
	By \cref{thm:skel:pebblemin}, it is sufficient to show that:
	\begin{equation*}
        \begin{array}{rl}
            &|(\push{f}{\alpha w^{n M}} - \push{f}{\alpha w^{nM +1}})
			(\alpha_0 w_1^{X_1} \cdots w_{k}^{X_{k}} \alpha_{k}))| \\
            &= \bigO((X_1 + \cdots + X_k)^{k-1})
		\end{array}
	\end{equation*}
	Because $f$ is
    ultimately $1$-polynomial, for all $X, X_1, \cdots, X_{k} \ge M$,
	$f(\alpha w^{X} \alpha_0 w_1^{X_1} \cdots w_{k}^{X_{k}} \alpha_{k})$
	is a polynomial $P(X, X_1, \dots, X_{k})$. Our goal is
	to show that $|P(nM, X_1, \dots, X_{k})
		- P(nM+1, X_1, \dots, X_{k})| = \bigO( |X_1 + \cdots + X_k|^{k-1})$.
	Since $f \in \Ve_k$, we have $|P(X, X_1, \dots, X_{k})| = \bigO(|X + X_1 +
		\cdots + X_k|^k)$.
	Thus by \cref{lem:skel:lemmapoly}, $P$ has degree at most $k$,
	hence it can be rewritten under the form
	$P_0 + X P_1 + \cdots + X^{k} P_k$ where $P_i(X_1, \dots, X_k)$
	has degree at most $k-i$. Therefore:
	\begin{equation*}
        \begin{array}{rl}
             &|P(nM, X_1, \dots, X_{k})
			- P(nM+1, X_1, \dots, X_{k})|
            \\
			 & = \left| \sum_{i=1}^k P_i( X_1, \dots, X_{k}) ((nM)^i - (nM+1)^i) \right|
             \\
             & \le \sum_{i=1}^k |P_i( X_1, \dots, X_{k})| (nM+1)^i
		\end{array}
	\end{equation*}
	since the term $P_0$ vanishes when doing the subtraction.
	The result follows since the polynomials $P_i$
	for $1 \le i \le k$ have degree at most $k{-}1$.

\subsection{Proof of \cref{prop:mikolaj2}}

The proof of the proposition is essentially the same as \cref{prop:mikolaj1}
by noticing that everything remains $\FO$-definable. We will \underline{underline}
the parts where the two proofs differ, and in particular
when using stability properties of star-free polyregular functions.

\newcommand{\ul}[1]{\underline{#1}}

  We first show that any \ul{star free} $\Rel$-polyregular
    function can be written under the form $\polysum \circ g$
    where $g : A^* \fonc \{\pm 1\}^*$ is \ul{star-free} polyregular.
    This is a consequence of the following claims.

	\begin{claim}
        For all $\varphi \in \ul{\FO}$, there exists a \ul{star-free} polyregular function 
        $f \colon A^* \to \set{\pm 1}^*$ such that  $\card \varphi = \polysum \circ
        f$.
	\end{claim}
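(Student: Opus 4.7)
The plan is to imitate the proof given for the $\MSO$/polyregular case and to observe that every ingredient used there is in fact first-order definable, so that the constructed transduction lies in the subclass of first-order (i.e. star-free) polyregular functions rather than in the full class of polyregular functions.

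First I would recall (or cite) the fact that first-order polyregular functions are exactly the functions computed by \emph{first-order} multidimensional interpretations, that is, $\MSO$-interpretations in the sense of \cref{def:mso-transduction} whose defining formulas $\varphi^{\Dom}_j$, $\varphi^a_j$ and $\varphi^{<}_{j,j'}$ all lie in $\FO$ (this is Bojańczyk's characterization of first-order polyregular functions). Given $\varphi(x_1,\dots,x_k) \in \FO_k$, I would then construct a $k$-dimensional interpretation over $A$ producing outputs in $\{1\}^*$ as follows: take a single output letter $1$ (so $\varphi^1_1(\vec x) \defined \top$ and there is no copy index beyond the $k$ input positions viewed as one $k$-tuple copy), take $\varphi^{\Dom}(\vec x) \defined \varphi(\vec x)$, and take $\varphi^{<}$ to be the lexicographic ordering on $k$-tuples, which is expressible in quantifier-free first-order logic from $<$ and the equality of positions. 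Because all these formulas are in $\FO$, the resulting interpretation is a star-free polyregular function $f \colon A^* \to \{1\}^*$, and by construction $f(w) = 1^{\card\varphi(w)}$, so $\polysum \circ f = \card\varphi$.

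Next I would check that the set $\setof{\polysum \circ f}{f \colon A^* \to \{\pm 1\}^* \text{ star-free polyregular}}$ is closed under sums and external $\Rel$-products, exactly as in the $\MSO$ case. For sums, I use $\polysum \circ f + \polysum \circ g = \polysum \circ (f \cdot g)$ where $f \cdot g(w) \defined f(w)g(w)$, together with the fact that star-free polyregular functions are closed under concatenation (this closure follows directly from their definition via $\FO$-interpretations, or from the Cauchy-product characterization in \cref{theo:Zsf-expressions} transported through $\polysum$). For external $\Rel$-products, it suffices to be closed under negation, which corresponds to post-composing $f$ with the letter-to-letter morphism swapping $+1$ and $-1$; post-composition by a morphism preserves being a first-order $\MSO$-interpretation and hence preserves star-freeness of polyregular functions.

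The only non-routine point is justifying that these closure operations (concatenation and post-composition by a morphism) stay inside first-order polyregular functions; I expect this to be the main thing the reader would want a pointer for, and it is a well-known fact about Bojańczyk's first-order polyregular class, so I would simply cite the relevant reference on first-order polyregular functions. Combining the basic construction with these two closure properties yields the claim in full generality for arbitrary linear combinations $\sum_i \delta_i \card{\varphi_i}$ with $\varphi_i \in \FO$.
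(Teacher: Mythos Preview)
Your core construction is correct and essentially identical to the paper's: build a $k$-dimensional $\FO$-interpretation with domain formula $\varphi$, lexicographic order on $k$-tuples, and the single output letter $1$, then cite Bojańczyk's characterization of star-free polyregular functions as $\FO$-interpretations.

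Two small remarks. First, the reference to \cref{def:mso-transduction} is misplaced: that definition is the $k$-copying $\MSO$-transduction (formulas with a single free variable, output positions indexed by $[1{:}|w|]\times[1{:}k]$), not the multidimensional interpretation you actually use (output positions indexed by $[1{:}|w|]^k$). The paper, like you, works with the latter and cites \cite{bojanczyk2019string} for it; just fix the pointer. Second, everything from ``Next I would check\ldots'' onwards is not part of this claim: the closure of $\{\polysum\circ f\}$ under sums and $\Rel$-products is the content of the \emph{next} claim in the paper, and the statement you were asked to prove concerns a single $\varphi\in\FO$, not arbitrary linear combinations. Your first two paragraphs already prove the claim in full.
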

    \begin{proof}
        \ul{Star-free} polyregular functions are characterized
        in  \cite[Theorem 7]{bojanczyk2019string}
        as the functions computed by 
        (multidimensional) \ul{$\FO$}-interpretations.
        Recall that an \ul{$\FO$}-interpretation
        of dimension $k \in \Nat$
        is given by a $\ul{\FO}$ formula $\varphi_{\leq}(\vec{x},\vec{y})$ 
        defining a total ordering
        over $k$-tuples of positions,
        a $\ul{\FO}$ formula $\varphi^{\Dom}(\vec{x})$ that selects
        valid positions, and $\ul{\FO}$ formulas $\varphi^{a}(\vec{x})$ that
        place the letters over the output word
        \cite[Definition 1 and 2]{bojanczyk2019string}.
        In our specific situation, letting
        $\varphi_{\leq}$ be the usual lexicographic ordering of positions 
        (which is $\ul{\FO}$-definable)
        and placing the letter $1$ over every element of the output
        is enough: the only thing left to do is select
        enough positions of the output word.
        For that, we let $\varphi^{\Dom}$ be defined as $\varphi$ itself.
        It is an easy check that this $\ul{\FO}$-interpretation
        precisely computes $1^{f(w)}$ over $w$,
        hence computes $f$ when post-composed with $\polysum$.
    \end{proof}
	
	\begin{claim}
        The set $\setof{\polysum \circ f}{ 
        f \colon A^* \to \set{\pm 1}^* \text{ \ul{star-free} polyregular}}$
        is closed under sums and external $\Rel$-products.
	\end{claim}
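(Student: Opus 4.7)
The plan is to mirror the proof of the analogous claim for polyregular functions (just above \cref{prop:mikolaj2}), replacing each invocation of a closure property of polyregular functions with the corresponding closure property of \emph{star-free} polyregular functions. The two ingredients I will rely on are: (i) star-free polyregular functions are closed under pointwise concatenation, i.e.\ if $f,g \colon A^* \to B^*$ are star-free polyregular, then so is $(f \cdot g) \colon w \mapsto f(w) \cdot g(w)$; and (ii) star-free polyregular functions are closed under post-composition by a morphism $B^* \to C^*$ (in particular by the letter permutation $\pm 1 \mapsto \mp 1$ on $\set{\pm 1}^*$). Both properties follow from the characterization of star-free polyregular functions via $\FO$-interpretations used above (and are explicitly listed as closure properties in \cite{bojanczyk2018polyregular}).

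For closure under sums, I would observe that for every pair of star-free polyregular functions $f,g \colon A^* \to \set{\pm 1}^*$ one has
\begin{equation*}
    \polysum \circ f + \polysum \circ g = \polysum \circ (f \cdot g),
\end{equation*}
and invoke (i) to conclude that $f \cdot g$ is star-free polyregular.

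For closure under external $\Rel$-products, I would reduce to the two cases $\delta = -1$ and $\delta \in \Nat$. For $\delta = -1$, let $\sigma \colon \set{\pm 1}^* \to \set{\pm 1}^*$ be the morphism defined by $1 \mapsto -1$ and $-1 \mapsto 1$; then $-\polysum \circ f = \polysum \circ (\sigma \circ f)$, and $\sigma \circ f$ is star-free polyregular by (ii). For $\delta \in \Nat$, closure under sums (already established) yields $\delta \cdot (\polysum \circ f) \in \setof{\polysum \circ h}{h \text{ star-free polyregular}}$ by induction on $\delta$ (the base case $\delta = 0$ being witnessed by the constant function $w \mapsto \movi$, which is trivially star-free polyregular). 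Combining these two cases handles arbitrary $\delta \in \Rel$.

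No step here is genuinely delicate: the entire argument is a transcription of the earlier proof, contingent only on the fact that the two closure properties (i) and (ii) survive the passage from $\MSO$-interpretations to $\FO$-interpretations. The only care needed is in citing or proving these two closure properties for the star-free fragment, which is standard.
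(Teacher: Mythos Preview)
Your proposal is correct and follows essentially the same argument as the paper's proof: the identity $\polysum \circ f + \polysum \circ g = \polysum \circ (f\cdot g)$ together with closure of star-free polyregular functions under pointwise concatenation handles sums, and post-composition by the letter-swapping morphism handles negation. The paper is slightly terser on external $\Rel$-products (it just says ``it suffices to show closure under negation''), whereas you spell out the induction on $\delta\in\Nat$ and the base case $\delta=0$; this is harmless extra detail, not a different approach.
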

    \begin{proof}
        Notice that $\polysum \circ f + \polysum \circ g = \polysum \circ (f
        \cdot g)$ where $f \cdot g(w) \defined f(w) \cdot g(w)$.
        As \ul{star-free} polyregular functions are closed under concatenation
        \cite{bojanczyk2018polyregular},
        the set of interest is closed under sums.
        To prove that it is closed under external $\Rel$-products,
        it suffices to show that it is closed under negation.
        This follows because one can permute the
        $1$ and $-1$ in the output of a \ul{star-free} polyregular function
        (\ul{star-free} polyregular functions are closed under post-composition
        by a morphism \cite[Theorem 2.6]{bojanczyk2018polyregular}).
    \end{proof}

    Let us consider a \ul{star-free} polyregular function
	$g \colon A^* \to \set{ \pm 1 }^*$.
	The maps $g_+ \colon w \mapsto |g(w)|_1$
	and $g_- \colon w \mapsto |g(w)|_{-1}$
    are \ul{star-free} polyregular functions with unary output
	(since they correspond to a post-composition by
	the \ul{star-free} polyregular function which removes some letter, and polyregular functions are
	closed under post-composition by a regular function \cite{bojanczyk2018polyregular}).
	Hence $g_-$ and $g_+$ are star-free polyregular functions with unary output,
    a.k.a. \underline{star-free} $\Nat$-polyregular functions.
    As a consequence,
	${\polysum} \circ g = g_+ - g_-$ lies in $\We$.

\subsection{Proof of \cref{prop:counting-mso-variables}}

	\Cref{it:FOX} $\Rightarrow$ \cref{it:MSOX} is obvious.
	For \cref{it:MSOX} $\Rightarrow$ \cref{it:ratrat}, it is sufficient
	to show that if $\varphi(X_1, \dots, X_n)$ is an $\MSO^X$ formula,
	then $\card \varphi$ is a $\Rel$-polyregular function. We show the
	result for $n=1$, i.e. for a formula $\varphi(X)$. Let us
	define the language $L \subseteq (A \times \{0,1\})^*$ such that
	$(w,v) \in L$ if and only if $w \models \phi(S)$
	where $S \defined \setof{1 \le i \le |w|}{v[i] = 1}$.
	Using the classical correspondence between $\MSO$
	logic and automata (see e.g. \cite{thomas1997languages}),
	the language $L$ is regular, hence it is computed by a finite deterministic
	automaton $\Aper$. Given a fixed $w \in A^*$, there exists a bijection
	between the accepting runs of $\Aper$ whose first component is $w$
	and the sets $S$ such that $w \models \phi(S)$. Consider the (nondeterministic)
	$\Rel$-weighted automaton $\Aper'$ (this notion is equivalent to $\Rel$-linear
	representations, see e.g. \cite{berstel2011noncommutative})
	obtained from $\Aper$ by removing the second component of the input,
	adding an output $1$ to all the transitions of $\Aper$, and giving the initial values $1$
	(resp. final values $1$) to the initial state (resp. final states) of $\Aper$.
	All other transitions and states are given the value $0$.
	Given a fixed $w \in A^*$, it is easy to see that $\Aper'$
	has exactly $\card \varphi(w)$ runs labelled by $w$ whose product
	of the output values is $1$ (and the others have product $0$).
	Thus $\Aper$ computes $\card \varphi$.
    This proof scheme adapts naturally to the case where $n \geq 1$.
	
	For \cref{it:ratrat} $\Rightarrow$ \cref{it:FOX}, let us consider
    a linear representation $(I, \mu, F)$ of a $\Rel$-rational series.
    \begin{claim}
        Without loss of generality, one can assume that $\mu(A^*) \subseteq
        \Mat{n,n}(\set{0,1})$, at the cost of increasing the dimension of the matrices.
    \end{claim}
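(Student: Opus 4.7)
My plan is a two-step reduction. In the first step I would eliminate negative entries; in the second, I would collapse all remaining positive entries down to $\{0,1\}$ by expanding the state space.

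For Step 1, decompose $\mu(a) = \mu^+(a) - \mu^-(a)$ entry-wise with $\mu^{\pm}(a) \in \Mat{n,n}(\Nat)$, double the state space to $Q \times \{+,-\}$ (the second coordinate tracking a parity of sign-flipping transitions), and define the block morphism $\mu'(a) \defined \begin{pmatrix} \mu^+(a) & \mu^-(a) \\ \mu^-(a) & \mu^+(a) \end{pmatrix}$. Setting $I'_{(q,+)} \defined I_q$, $I'_{(q,-)} \defined 0$, $F'_{(q,+)} \defined F_q$, and $F'_{(q,-)} \defined -F_q$, a direct computation confirms $I'\mu'(w)F' = I\mu(w)F$: inductively, the $(+,+)$ and $(+,-)$ blocks of $\mu'(w)$ enumerate paths whose product of signs is $+1$ and $-1$ respectively, and the sign-weighted $F'$ recovers their difference.

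For Step 2, let $K \defined \max_{a,p,q} \mu'(a)_{p,q}$ and expand the state space to $Q' \times [K]$. Define $\mu''(a)_{(p,i),(q,j)} \defined 1$ iff $j \le \mu'(a)_{p,q}$, so that every $\mu''(a)$ has entries in $\{0,1\}$. Taking $I''_{(p,i)} \defined I'_p$ when $i = 1$ and $0$ otherwise, together with $F''_{(q,j)} \defined F'_q$, I would then prove by induction on $|w|$ the invariant $\sum_{j} \mu''(w)_{(p,i),(q,j)} = \mu'(w)_{p,q}$ independently of $i$, from which $I''\mu''(w)F'' = I\mu(w)F$ follows by summation against $I''$ and $F''$.

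The delicate point is Step 2: the construction is deliberately asymmetric, with the source coordinate $i$ acting as a dummy ignored by $\mu''(a)$ while the target coordinate $j$ encodes the multiplicity in $[1, \mu'(a)_{p,q}]$. This asymmetry is exactly what lets the invariant survive composition: since $\mu''(b)_{(q,j),(r,k)}$ does not depend on $j$, the inner sum over $\ell$ of $\mu''(u)_{(p,i),(r,\ell)}$ collapses by the inductive hypothesis to $\mu'(u)_{p,r}$ and then chains correctly into the next step of the product.
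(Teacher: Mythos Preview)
Your two-step reduction is correct and is essentially the paper's own construction, which performs the sign-doubling and the multiplicity-splitting in a single pass (indexing by $\{1,\dots,n\}\times\{1,\dots,N\}\times\{\pm\}$) rather than two; the key asymmetric trick---ignoring the source copy index and using the target copy index to encode $j\le \mu'(a)_{p,q}$---is identical in both. One remark that applies equally to your proof and the paper's sketch: what is actually obtained is $\mu(A)\subseteq\Mat{m,m}(\{0,1\})$, not $\mu(A^*)$ as the claim literally states (products of $0/1$ matrices need not stay $0/1$); fortunately only the letter case is used downstream.
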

    \begin{proof}[Proof Sketch]
        Let
        $N \defined \min(1, \max \setof{ |\mu(a)_{i,j}|}{ a \in A, 1 \leq i,j \leq n
        })$,
        we define the new dimension of our system to be
        $m \defined n \times N \times 2$. As a notation,
        we assume that matrices in $\Mat{m,m}$ have their 
        rows and columns indexed by $\set{1, \dots, n} \times \set{1, \dots, N}
        \times \set{\pm}$.
        For all $a \in A$, let us define  $\nu(a) \in \Mat{m,m}$
        as follows: for all $1 \leq i,j \leq n$,
        $1 \leq v, v' \leq N$
        \begin{align*}
            \nu(a)_{(i,v,+), (j,v',+)} &= 
            \begin{cases}
                1 & \text{ if } |\mu(a)_{i,j}| \geq v' \wedge 0 < \mu(a)_{i,j} \\
                0 & \text{ otherwise }
            \end{cases}
            \\
            \nu(a)_{(i,v,+), (j,v',-)} &= 
            \begin{cases}
                1 & \text{ if } |\mu(a)_{i,j}| \geq v' \wedge 0 > \mu(a)_{i,j} \\
                0 & \text{ otherwise }
            \end{cases}
            \\
            \nu(a)_{(i,v,-), (j,v',-)} &= 
            \begin{cases}
                1 & \text{ if } |\mu(a)_{i,j}| \geq v' \wedge 0 < \mu(a)_{i,j} \\
                0 & \text{ otherwise }
            \end{cases}
            \\
            \nu(a)_{(i,v,-), (j,v',+)} &= 
            \begin{cases}
                1 & \text{ if } |\mu(a)_{i,j}| \geq v' \wedge 0 > \mu(a)_{i,j} \\
                0 & \text{ otherwise }
            \end{cases}
        \end{align*}
        Let us now adapt the final vector by
        defining
        for every $1 \leq i \leq n$, $1 \leq v \leq N$,
        $F'_{(i,v,+)} \defined \max(0, F_i)$,
        and
        $F'_{(i,v,-)} \defined -\min(0, F_i)$.
        For the initial vector, let us
        define for every $1 \leq i \leq n$,
        $I'_{(i,1,+)} = I_i$ and $I'_{(i,1,-)} = -I_i$,
        and let $I'$ be zero otherwise.
        It is then an easy check
        that $(I', \nu, F')$ computes the same function as $(I, \mu, F)$.
    \end{proof}

    As a consequence,
    $I \mu(w) F = \sum_{i,j} I_i \mu(w)_{i,j} F_j$, let us now
    rewrite this sum as a counting $\MSO$ formula with set free variables.

    For all $1 \leq i,j \leq n$,
    one can write an $\MSO$ formula $\psi_{i,j} (x)$
    such that
    for all $1 \leq p \leq |w|$,
    $w \models \psi_{i,j} (p)$ if and only if $\mu(w[p])_{i,j} = 1$.
    Furthermore,
    for all $1 \leq i,j \leq n$,
    one can write an $\MSO$ formula $\theta_{i,j}$
    with variables $X_p^{\mathsf{in}}, X_p^{\mathsf{out}}$ for $1 \leq p \leq n$
    such that a word $w$ satisfies
    $\theta_{i,j}$ whenever for every position $x$ of $w$
    there exists a unique pair $1 \leq p,q \leq n$ such that
    $x \in X_p^{\mathsf{in}}$ and $x \in X_q^{\mathsf{out}}$,
    if $x \in X_p^{\mathsf{out}}$ then $(x+1) \in X_p^{\mathsf{in}}$,
    the first position of $w$ belongs
    to $X_i^{\mathsf{in}}$ and $X_i^{\mathsf{out}}$, and the last position of $w$
    belongs to $X_j^{\mathsf{in}}$ and $X_j^\mathsf{out}$.
    \begin{align*}
        \mu(w)_{i,j}
    &= \sum_{s \colon \set{1, \dots, k-1} \to \set{1, \dots, n} }
    \mu(w[1])_{i, s(1)}
    \mu(w[|w|])_{s(k-1),j}
    \prod_{k =
    2}^{|w|-1} \mu(w[k])_{s(k), s(k+1)} 
    \\
    &=
    \card{
        \underbrace{
        \left(
        \theta_{i,j}
        \wedge
        \forall x.
        \bigwedge_{1 \leq i,j \leq n}
        (x \in X_i^{\mathsf{in}} \wedge x \in X_j^{\mathsf{out}})
        \implies \psi_{i,j}(x)
        \right)
    }_{\defined \tau_{i,j}}
    }(w)
    \end{align*}
    We have proven that $I \mu(w) F$
    is a $\Rel$-linear combination of
    the counting formulas $\tau_{i,j}$
    via
    $I \mu(w) F = \sum_{i,j} I_i F_j \cdot \card{\tau_{i,j}} (w)$.
    Notice that all the formulas used never introduced set quantifiers,
    hence the formulas belong to $\FO$ and have $\MSO$ free variables.

\end{document}